\newcommand{\stkout}[1]{\ifmmode\text{\sout{\ensuremath{#1}}}\else\sout{#1}\fi}
\newtheorem{theorem}{Theorem}[section]
\newtheorem{Remark}[theorem]{Remark}
\newtheorem{assumption}[theorem]{Assumption}
\newtheorem{lemma}[theorem]{Lemma}
\newtheorem{proposition}[theorem]{Proposition}
\newtheorem{corollary}[theorem]{Corollary}
\newcommand{\RomanNumeralCaps}[1]
    {\MakeUppercase{\romannumeral #1}}
\newcommand{\one}{\text{$\mathbbm{1}$}}
\definecolor{red}{rgb}{1.0,0.0,0.0}
\definecolor{blu}{rgb}{0.0,0.0,1.0}
\definecolor{gre}{rgb}{0.03,0.50,0.03}
\renewcommand{\hat}{\widehat}
\newcommand{\ttilde}{\widetilde}
\title[Optimal Execution under Partial Observation]{Optimal Execution with Multiplicative Price Impact and Incomplete Information on the Return}
\author[Dammann]{Felix Dammann}
\author[Ferrari]{Giorgio Ferrari}
\address{F.~Dammann:  Center for Mathematical Economics (IMW), Bielefeld University, Universit\"atsstrasse 25, 33615, Bielefeld, Germany}
\email{\href{mailto:dammann@uni-bielefeld.de}{dammann@uni-bielefeld.de}}
\address{G.~Ferrari: Center for Mathematical Economics (IMW), Bielefeld University, Universit\"atsstrasse 25, 33615, Bielefeld, Germany}
\email{\href{mailto:giorgio.ferrari@uni-bielefeld.de}{giorgio.ferrari@uni-bielefeld.de}}
\date{\today}
\numberwithin{equation}{section}
\begin{document}

\begin{abstract}
We study an optimal liquidation problem with multiplicative price impact in which the trend of the asset's price is an unobservable Bernoulli random variable. The investor aims at selling over an infinite time-horizon a fixed amount of assets in order to maximize a net expected profit functional, and lump-sum as well as singularly continuous actions are allowed. Our mathematical modelling leads to a singular stochastic control problem featuring a finite-fuel constraint and partial observation. We provide the complete analysis of an equivalent three-dimensional degenerate problem under full information, whose state process is composed of the asset's price dynamics, the amount of available assets in the portfolio, and the investor's belief about the true value of the asset's trend. The optimal execution rule and the problem's value function are expressed in terms of the solution to a truly two-dimensional optimal stopping problem, whose associated belief-dependent free boundary $b$ triggers the investor's optimal selling rule. The curve $b$ is uniquely determined through a nonlinear integral equation, for which we derive a numerical solution through an application of the Monte-Carlo method. This allows us to understand the sensitivity of the problem's solution with respect to the relevant model's parameters as well as the value of information in our model. 
\end{abstract}
\maketitle

\smallskip

{\textbf{Keywords}}: optimal execution problem, multiplicative price impact, singular stochastic control, partial observation, optimal stopping.

\smallskip

{\textbf{MSC2020 subject classification}}: 93E20, 93C41, 49L20, 91G80

\smallskip

{\textbf{JEL classification}}: G11, C61

\section{Introduction}
\label{sec:intro}

In this paper, we consider an investor who possesses a fixed amount of assets and aims at selling them on the market. We assume that the investor faces the issue of causing an adverse price reaction, so that fast selling depresses the stock price, while splitting the order over time may take too long. This problem -- also known as the \textit{optimal execution problem} in algorithmic trading -- thus deals with the question of how to trade optimally in order to maximize a given profit, and therefore of how to determine the time as well as the size of the order.  

Dating back to the early works of Bertsimas and Lo \cite{BeLo}, Almgren and Chriss \cite{AlCh} and Almgren \cite{Al}, the study of optimal execution strategies has received much attention and resulted in a series of important contributions in various settings, which, amongst other modeling features, can be distinguished with respect to the considered type of price impact: Additive or multiplicative. A comprehensive discussion on the latter class of models can be found in Guo and Zervos \cite{GuZe}, who also point out that models with multiplicative price impact seem to be more natural since they ensure prices to remain positive. Amongst those works dealing with multiplicative price impact, let us mention Bertsimas et al.\ \cite{BeLoHu} for a discrete-time framework, Forsyth et al.~\cite{For} for a continuous-time model  \`a la Black-Scholes, Guo and Zervos \cite{GuZe} and Becherer et al.~\cite{BeBiFr} for settings involving singular stochastic controls.

A common feature in the literature is the assumption that the investor has full information on the trend of the asset. This, however, can be a strong requirement. As pointed out by Ekstr\"{o}m and Lu \cite{EkLu}, a statistical estimation of the drift is not an efficient procedure, and obtaining a reasonable precision would need data of decades or even centuries under the same market conditions -- which is simply not feasible in reality (see also the discussion in Rogers \cite{Ro}, Section 4.2). In some cases, such as initial public offerings, this price history does not even exist. 

To account for this fact, we propose a model of optimal execution with multiplicative price impact in which the drift of the stock price dynamics is a random variable, which is not directly observable by the investor. Through monitoring the evolution of the price on the market, the investor is able to update her belief regarding the drift value. However, such observation is noisy as the investor cannot perfectly distinguish whether price variations are caused by the drift or the stochastic driver of the underlying dynamics. From a mathematical point of view, our model leads to a finite-fuel singular stochastic control problem  under partial observation, and we investigate how the presence of incomplete information influences the selling strategy of the investor. In particular, we show that the flow of incoming information -- through the observation of the asset's market price -- has a direct effect on the optimal execution rule. Indeed, differently to the case of full information treated in Guo and Zervos \cite{GuZe}, the decision to sell is no longer triggered by a constant critical price, but the execution threshold changes dynamically depending on the investor's current belief on the future trend of the asset. Our results show that the optimal execution strategy is in fact determined by a boundary that is increasing in the belief towards the larger drift value, underlying the intuition that the decision maker chooses to delay selling assets if future prices are expected to increase. 

In this regard, our work relates to the bunch of economic and financial literature where questions of optimal decision-making under partial observation have been considered; amongst a large number of contributions, we refer to the seminal papers on portfolio selection by Detemple \cite{Detemple} and Gennotte \cite{Gennotte}; to Veronesi \cite{Veronesi} for an equilibrium model with uncertain dividend drift; to Sass and Haussmann \cite{SassHaussmann} for a terminal-wealth portfolio optimization problem, and to the more recent Colaneri et al.~\cite{Colaneri-etal} for an optimal liquidation problem with rate strategies and partial observation. Notably, the recent Drissi \cite{Drissi} and Bismuth et al.~\cite{Bismuth} incorporate Bayesian learning in a model of multi-asset optimal execution, although restricting the agent to absolutely continuous (regular) controls. 

Furthermore, we contribute to those models dealing with problems of optimal stopping and singular stochastic control. To name just a few recent works, Callegaro et al.~\cite{CaCe} for public debt control, De Angelis \cite{DeA} and D\'ecamps and Villeneuve \cite{DeVi} for dividend payments, D\'ecamps et al.~\cite{Deca} for investment timing, Ekstr\"{o}m and Lu \cite{EkLu} as well as Ekstr\"{o}m and Vaicenavicius \cite{EkVa} for asset liquidation, Federico et al.~\cite{FeFe} for inventory management, Johnson and Peskir \cite{JoPe} for quickest detection, Gapeev \cite{Gapeev} for the pricing problem of perpetual commodity equities, and Gapeev and Rodosthenous \cite{GaRo} for a zero-sum optimal stopping game associated with perpetual convertible bonds. \\[0.3cm]
\textbf{Our model, approach and overview of the mathematical analysis.} We now discuss the mathematical modeling and analysis. We consider a risk-neutral investor holding a fixed amount $y$ of assets in her portfolio. In absence of the investor's actions, the stock price evolves according to a geometric Brownian motion $dS_t = \beta S_t dt + \sigma S_t d W_t$, where $W$ is a standard Brownian motion and $\sigma > 0$ a constant volatility parameter. Furthermore, the price process exhibits a random future trend $\beta$, which is however unknown to the decision maker, and is assumed to be a random variable, independent of the Brownian noise, taking two values $\beta_0 < \beta_1$, for some $\beta_0,\beta_1 \in \mathbb{R}$ and $\beta_0 <0$. 

The decision maker is able to sell the assets on the market over an infinite time horizon, and we denote by $\xi_t$ the cumulative amount of assets liquidated up to time $t$. Consequently, the remaining assets in the portfolio follow the dynamics $Y_t^\xi = y - \xi_t$. Clearly, it has to be $\xi_t \leq  y$ at any time $t\geq0$ (finite-fuel constraint), since no more than the initial amount of assets can be sold. As anticipated, we assume that the investor causes an adverse price reaction upon selling, which, following Guo and Zervos \cite{GuZe}, we assume to be of multiplicative type. Hence, the controlled asset's price evolves as 
\begin{align*}
dS_t^\xi = \beta S_t^\xi dt + \sigma S_t^\xi d W_t - \alpha S_t^\xi \circ d \xi_t, \qquad S_{0-}^\xi = s > 0,
\end{align*}
where $\alpha > 0$ denotes the parameter of price impact, and the operator $\circ$ is defined as in \eqref{St circ xi} below so to take care of the continuous and jump components of any admissible selling strategy $\xi$. Notice that the multiplicative price impact structure allows to express the asset's price process as $S^\xi = \exp (X^\xi)$. Here, $X^\xi$ is then a linearly controlled drifted Brownian motion with volatility $\sigma>0$ and drift value $\mu = \beta - \frac12 \sigma^2$. 

The investor aims at maximizing the total expected discounted reward upon selling, net of transaction costs; that is,
$$\sup_{\xi} \mathbb{E} \Big[ \int_0^\infty e^{-rt} \big(e^{X^\xi_t} - \kappa \big) \circ d \xi_t \Big],$$
where the optimization is taken over a suitable admissible class of selling strategies and the investor discounts her future revenues with a strictly positive factor $r>0$, that can be interpreted as her subjective impatience. 
The latter is a finite-fuel singular stochastic control problem under partial observation.

By relying on classical filtering techniques (cf.\ Shiryaev \cite{Shi}, Section 4.2), we begin by determining an equivalent Markovian problem -- the so-called \textit{separated problem} -- under full information (see Fleming and Pardoux \cite{FlPa} as a classical reference on the separated problem). To this end, we introduce the process $\Pi$, according to which the investor can update her belief regarding the true value of the drift. This is done by observing the evolution of the process $X^0$ (denoting the uncontrolled version of the process $X^\xi$), whose natural filtration $\mathcal{F}_t^{X^0}$ models the overall information available up to time $t$. More precisely, after forming a prior $\pi := \mathbb{P}[\mu = \mu_1] \in (0,1)$, the investor dynamically updates her belief upon the arrival of new information through observing the process $X^0$, so that the belief process is given by $\Pi_t = \mathbb{P}[\mu = \mu_1 \mid \mathcal{F}_t^{X^0}]$. Notice that a value of $\Pi$ close to $1$ indicates a strong belief towards the larger value of the drift, while $\Pi$ close to $0$ displays a strong belief in the lower value. Hence, we expect the investor to change the liquidation strategy dynamically and not solely base it on the current price on the market, but also on the present belief at that time. 

The separated problem turns out to be a \emph{three-dimensional degenerate finite-fuel singular stochastic control problem}, so that obtaining explicit solutions through a traditional ``guess-and-verify approach'' is in general not feasible.\footnote{A ``guess-and-verify approach'' is applicable if we take $\beta_0 = - \beta_1$, which indeed allows for a dimension reduction; see, e.g., D\'ecamps and Villeneuve \cite{DeVi}. In this paper, however, we do not consider any relation amongst $\beta_0 $ and $\beta_1$ other than $\beta_0 < \beta_1$.}

In order to tame the multidimensional nature of the resulting optimal execution problem under full information, we then follow a direct approach which hinges on the study of a suitable optimal stopping problem with value $v$, that we expect to be associated to the singular stochastic control problem. This method was studied and refined by many authors such as Bene\v{s} et al.~\cite{Ben}, El Karoui and Karatzas \cite{ElKa}, and Karatzas and Shreve \cite{KaSh}, or De Angelis \cite{DeA}, De Angelis et al.~\cite{GF} and \cite{DeAFeMo}, and Guo and Tomecek \cite{GuoTo} for more recent contributions. 
The optimal stopping problem, which involves the underlying two-dimensional diffusion $(X^0,\Pi)$ taking values in $\mathbb{R} \times (0,1)$, can be interpreted as an optimal selling problem and exhibits a structure similar to that of the problem treated by D\'ecamps et al.\ \cite{Deca} (see also Ekstr\"{o}m and Lu \cite{EkLu} for a parabolic version).
 We then solve the optimal stopping problem by relying on techniques from free-boundary theory (as illustrated in the monography by Peskir and Shiryaev \cite{PeSh}) and first show that the optimal stopping rule is characterized through a belief-dependent free boundary $a(\pi)$ for $\pi \in (0,1)$. 

However, the coupled dynamics of the underlying processes $X^0$ and $\Pi$, as well as the fact that they are driven by the same Brownian motion, makes a further study of the free boundary and the value function $v$ not feasible. It is for that reason we proceed by deriving two equivalent representations of the optimal stopping problem, which allow for a thorough analysis. First, via a change of measure, the state process $(X^0,\Pi)$ is transformed into $(X^0,\Phi)$ taking values in $\mathbb{R}\times (0,\infty)$ and with decoupled dynamics. Here, the process $\Phi$ is the so-called ``likelihood ratio''. Again, we can express the optimal stopping strategy in terms of a free boundary $\varphi \mapsto b(\varphi)$, which results from a simple transformation of the boundary $\pi \mapsto a(\pi)$. Second, we pass yet to another formulation by deriving the intrinsic parabolic formulation of the stopping problem in coordinates $(X^0,Z)$, in which the process $Z$ now follows purely deterministic dynamics and takes values in $\mathbb{R}$. Even though the monotonicity result of the associated free boundary $z \mapsto c(z)$ is certainly not trivial to derive and calls for a rigorous technical analysis, it is in this formulation that we are able to provide further regularity results of $c$ and of the transformed optimal stopping value function $\widehat{v}$. In fact, borrowing arguments from De Angelis \cite{DeA}, suitably adapted to the present setting, we achieve a global regularity of $\widehat{v}$, namely $\widehat{v} \in C^1 (\mathbb{R}^2)$. The latter result also allows proving $\widehat{v}_{xx} \in L_{\text{loc}}^\infty (\mathbb{R}^2)$, and finally obtaining a nonlinear integral equation uniquely solved by the optimal stopping boundary $c$. It is worth mentioning that such a characterization can be traced back to both optimal stopping boundaries $b$ and $a$ and is thus tantamount to a complete specification of the optimal stopping rule in the original $(x,\pi)$-coordinates. 

The thorough analysis developed for the optimal stopping problem is then exploited in order to identify an optimal execution strategy. In fact, the derived regularity results for $\hat{v}$ permits us to prove a verification theorem, that identifies an optimal execution rule and shows that the optimal stopping value function $v$ indeed coincides with a directional derivative of the separated problem's value function $V$. Namely, we show that
$$V(x,y,\pi):=\frac{1}{\alpha}\int_{x-\alpha y}^x v(x',\pi) dx', \qquad (x,y,\pi) \in \mathbb{R} \times (0,\infty) \times (0,1).$$
Notice, that if $\alpha \downarrow 0$, one finds $V(x,y,\pi)=y v(x,\pi)$, which is the value of the problem in which the investor has no market impact.

The optimal execution rule can be thought of as a ``myopic one''. Indeed, it prescribes to sell assets as if the size of the investor's portfolio were infinite, and to stop selling once the asset's inventory is depleted (see also Karatzas \cite{Ka85} and El Karoui and Karatzas \cite{ElKa}). The optimal selling rule involves lump-sum executions (whenever the asset's price is sufficiently large), that could eventually result into an immediate depletion of the portfolio (if the initial portfolio size is sufficiently small). However, for relatively large portfolios, an initial lump-sum selling is followed by a policy of oblique reflection type. This is triggered by the belief-dependent boundary $\varphi \mapsto b(\varphi)$ (equivalently, $\pi \mapsto a(\pi)$). Notably, given that all the transformations developed for the resolution of the optimal stopping problem are one-to-one and onto, the integral equation for the boundary $z \mapsto c(z)$ yields an integral equation for $\varphi \mapsto b(\varphi)$, and therefore a complete characterization of the optimal execution rule. In order to provide insights about the sensitivity of the optimal decision mechanism of the investor with respect to the model's parameters, we develop a recursive numerical scheme, which relies on an application of the Monte-Carlo method.\\[0.3cm]
\textbf{Our contributions.} Overall, we believe that the contributions of this paper are the following. Even though the literature on optimal execution problems is extensive (see, to name just a few, Almgren and Chriss \cite{AlCh}, Almgren \cite{Al}, Becherer et al.~\cite{BeBiFr}, Bertsimas and Lo \cite{BeLo}, Bertsimas et al.\ \cite{BeLoHu}, Colaneri et al.\ \cite{Colaneri-etal}, Gatheral and Schied \cite{GaSc}, Guo and Zervos \cite{GuZe}, Moreau et al.~\cite{MoMuMe}, Schied and Sch\"oneborn \cite{Schied}), the combination of incomplete information on the future price trend while allowing for lump-sum as well as singularly continuous executions constitutes a novelty. Furthermore, the present study on the optimal execution strategy complements as well as extends the literature on problems with a similar structure under full information. As a matter of fact, the derived optimal execution rule exhibits a broader structure and prescribes to take actions depending on the current belief on the future trend of the asset. 

From a mathematical point of view, to the best of our knowledge, ours is the first work providing a complete characterization of the value function and of the optimal control rule in a finite-fuel singular stochastic control problem under partial observation (which, in the present setting, is equivalent to a three-dimensional degenerate singular stochastic control problem). Furthermore, we believe that the optimal stopping (selling) problem, studied as a device to characterize the optimal solution of the optimal execution problem, is of interest of its own. By performing a thorough analysis on the regularity of (a transformed version of) its value function and free boundary, we are able to provide a complete characterization of the optimal selling rule through a nonlinear integral equation, thus extending the results of the related model studied by D\'ecamps et al.~\cite{Deca}. Notice, that an integral equation for the free boundary has been obtained also in Ekstr\"{o}m and Lu \cite{EkLu} and Ekstr\"{o}m and Vaicenavicius \cite{EkVa}, though in settings where the parabolic nature of the problem is arising because of an explicit time-dependency. Finally, the probabilistic numerical approach developed for the resolution of the free boundary's integral equation allows to understand the dependency of the investor's optimal execution strategy on relevant model's parameters such as volatility and trend. Moreover, based on the numerical evaluation of the boundary, we can compare the value of the control problem with partial information with that of an associated \textit{average drift} problem under full information. This allows us to numerically  evaluate the question on whether the introduction of uncertainty over the drift actually harms or benefits the investor. \\[0.3cm]
\textbf{Organization of the paper.} The rest of the paper is organized as follows. In Section \ref{Section: Problem Formulation} we present our setting and first preliminary results. In Section \ref{Section: Benchmark Problem} we investigate the benchmark problem under full information, before we consider a corresponding optimal stopping problem and its optimal boundary in Section \ref{Section: First related OSP}. In Section \ref{Section: Decoupling Change of Measure} and \ref{Section: Parabolic Formulation} we derive two equivalent formulations of this problem, which allow for a more thorough study. Eventually, in Section \ref{Section: Solution to Execution Problem}, we return to the optimal control problem and characterize the optimal selling rule of the investor. A numerical study based on the derived integral equation of the execution boundary in then carried out in Section \ref{Section: Comparative Statics Analysis}.  


\section{Setting and Problem Formulation}\label{Section: Problem Formulation} 

Let $(\Omega, \mathcal{F}, \mathbb{P})$ be a complete probability space, rich enough to accommodate a standard one-dimensional Brownian motion $(W_t)_{t \geq 0}$ and  an independent random variable $\beta$ taking two values $\beta_0$ and $\beta_1$. We denote by $\mathbb{F}^W := (\mathcal{F}_t^W )_{t \geq 0}$ the filtration generated by $(W_t)_{t \geq 0}$ augmented by $\mathbb{P}$-null sets of $\mathcal{F}_0^W$. We assume that, in absence of any actions of the investor, the asset's price on the stock market evolves stochastically according to a geometric Brownian motion 
\begin{align}\label{Dyn S^0}
d S_t^0 = \beta S_t^0 dt + \sigma S_t^0 d W_t, \quad S_0^0 =s > 0,
\end{align}
where $\sigma >0$ is a constant volatility. The investor holds a finite amount $y \geq 0$ of assets, which she is able to sell. We identify the cumulative amount of assets sold up to time $t \geq 0$, which we denote by $\xi_t$, as the investor's control variable. We denote by $\mathbb{F}^Z := (\mathcal{F}_t^Z)_{t \geq 0}$ the natural filtration of any process $Z$, augmented by $\mathbb{P}$-null sets of $\mathcal{F}_0^Z$, and hence, the set of admissible execution strategies in this context is given by 
\begin{align*}
\mathcal{A}(y) := \left\{ \xi : \Omega \times [0,\infty) \to \mathbb{R}_+: ~~ (\xi_t)_{t \geq 0}~ \mathbb{F}^{S^0}\text{-adapted, increasing, c\`{a}dl\`{a}g, and }\xi_{0 -} =0,~\xi_t \leq y~\text{a.s.} \right\},
\end{align*}
where the last condition naturally arises from the fact that the investor cannot sell more than the initial amount of assets. Moreover, the remaining assets in the portfolio evolve according to the dynamics
\begin{align*}
 Y_t^{ \xi} = y - \xi_t , \quad Y_{0 -}^{\xi} = y \geq 0,
\end{align*}
where we stress the dependency on the selling strategy $\xi$. Following Guo and Zervos \cite{GuZe}, in our model we assume that the investor's transactions on the market have a proportional impact on the asset's price. More precisely, when selling a small amount $\epsilon > 0$ of assets at time $t$, the price exhibits a jump of size 
\begin{align*}
\Delta S_t = S_{t} - S_{t-} = - \alpha \epsilon S_t,
\end{align*}  
for $\alpha > 0$ denoting the parameter of permanent price impact (see Almgren and Chriss \cite{AlCh},  Almgren \cite{Al} for early works and Becherer et al.~\cite{BeBi}, Ferrari and Koch \cite{FeKo}, Guo and Zervos \cite{GuZe} for more recent contributions). Hence, a small transaction is such that $S_{t} = (1 - \alpha \epsilon ) S_{t-} \simeq e^{- \alpha \epsilon } S_{t-}$ and, by interpreting a lump-sum sale of $\Delta \xi_t$ shares as a sequence of $N$ individual sales of size $\epsilon = \Delta \xi_t / N$, we have 
\begin{align*}
S_{t} = e^{ - \alpha N \epsilon} S_{t-} = e^{- \alpha \Delta \xi_t} S_{t-},
\end{align*}
for $N$ large enough. It follows that, for any $\xi \in \mathcal{A}(y)$, we can model the controlled asset's price process by 
\begin{align}\label{dyn S}
d S_t^{\xi} = \beta S_t^\xi dt + \sigma S_t^\xi dW_t - \alpha S_t^\xi \circ d \xi_t , \quad S_{0 -}^{\xi} = s, 
\end{align}
where
\begin{align}\label{St circ xi}
\int_0^\cdot S_t^\xi \circ d \xi_t := \int_0^\cdot S_t^\xi d \xi_t^c  + \sum_{t \leq \cdot \, : \Delta \xi_t \neq 0} \frac{1}{\alpha} S_{t-}^\xi ( 1 - e^{- \alpha \Delta \xi_t}) = \int_0^\cdot S_t^\xi d \xi_t^c + \sum_{t \leq \cdot \, : \Delta \xi_t \neq 0} S_{t-}^\xi \int_0^{\Delta \xi_t} e^{- \alpha u } d u,
\end{align}
$\xi^c$ denotes the continuous part of the process $\xi$, and $\Delta \xi_t := \xi_t - \xi_{t-}$. The solution to \eqref{dyn S} can be explicitely determined via It\^{o}'s formula and it is given by 
\begin{align}\label{S_t xi explicit}
S_t^\xi = s \exp \Big( (\beta - \frac12 \sigma^2 ) t + \sigma W_t - \alpha \xi_t \Big) = S_t^0 \exp ( - \alpha \xi_t ),
\end{align} 
where $S^0$ is the solution to \eqref{Dyn S^0} and we observe that the price impact of selling is additive to the logarithm of the asset's price.\\
We assume that the investor aims at maximizing the total expected (discounted) profits, net of the total cost of selling, and thus seeks to solve
\begin{align}\label{Objective with S_t}
\sup_{\xi \in \mathcal{A}(y)} \mathbb{E} \Big[ \int_0^\infty e^{-rt} \big( S_t^\xi &- \kappa \big) \circ d \xi_t \Big] \nonumber \\
&= \sup_{\xi \in \mathcal{A}(y)} \mathbb{E}\Big[ \int_0^\infty e^{-rt} \big( S_t^\xi - \kappa \big) d \xi_t^c +  \sum_{t: \Delta \xi_t \neq 0} e^{-rt} \int_0^{\Delta \xi_t} (S_{t-}^\xi e^{ - \alpha u } - \kappa ) du \Big].
\end{align}
Here, $\kappa > 0 $ is a proportional transaction cost, which, thinking of $S_t^\xi$ as the mid-price of the stock at time $t$, can also be interpreted as a constant bid spread. Notice that the structure of the expected net-profit functional in \eqref{Objective with S_t} can also be justified through stability results in the Skorokhod $M_1$-topology in probability (see Becherer et al.\,\cite{BeBiFr}). Moreover, problem \eqref{Objective with S_t} has finite value due to $\xi_t \leq y$ a.s. Thanks to \eqref{S_t xi explicit} we have $S_t^\xi = \exp (X_t^\xi )$, where
\begin{align}\label{Dyn X^xi with mu}
dX_t^\xi = \mu dt + \sigma d W_t - \alpha d \xi_t, \quad X_{0-}^\xi = x,
\end{align}
with $x := \ln (s)$ and $\mu := \beta - \frac12 \sigma^2$. In particular, the drift can take two values $\mu_i = \beta_i - \frac12 \sigma^2,~i=0,1$. In the following, when needed, we let $X^0$ denote the solution to \eqref{Dyn X^xi with mu} with $\xi \equiv 0$, which is then an arithmetic Brownian motion. Furthermore, we state the following assumption.
\begin{assumption}
We have $\beta_1 > \beta_0$ and $\beta_0 < 0$, which implies $\mu_0 < 0$.
\end{assumption}
The maximization problem \eqref{Objective with S_t} thus can be rewritten in terms of \eqref{Dyn X^xi with mu} as
\begin{align}\label{non markov problem}
\sup_{\xi \in \mathcal{A}(y)} \mathbb{E} \Big[ \int_0^\infty e^{-rt} \left( e^{X_t^\xi} - \kappa \right) \circ d \xi_t \Big]. 
\end{align}
Notice that for a constant non-random drift coefficient, a close variant of this problem was considered and solved by Guo and Zervos \cite{GuZe}, who also incorporate the option of buying shares of assets and the constraint that the whole inventory has to be depleted at terminal time. However - due to the presence of incomplete information on the drift of the asset - Problem \eqref{non markov problem} is not of Markovian nature and thus requires a thoroughly different analysis. In order to obtain an equivalent Markovian formulation of \eqref{non markov problem}, we rely on classical results from filtering theory, dating back to the contribution of Shiryaev in the context of quickest detection models (see Shiryaev \cite{Sh} for a survey). To this end, we introduce the \textit{belief} process 
\begin{align*}
\Pi_t := \mathbb{P} \left[\mu = \mu_1 \mid \mathcal{F}_t^{X^0} \right], \quad t \geq 0,
\end{align*}
which reflects the probability at time $t$ that $\mu = \mu_1$, conditional on the observations of the price process up to that time (indeed, $\mathbb{F}^{S^0} = \mathbb{F}^{X^0} = \mathbb{F}^{X^\xi} $). According to this process, the investor is able to update the belief regarding the true value of the drift, based on the arrival of new information by observing the asset's price evolution on the market. Notice that a large value of $\Pi$ close to $1$ implies a strong belief towards the larger drift value $\mu_1$, while a low value of $\Pi$ implies the contrary. It follows (see, e.g., Shiryaev \cite{Shi}, Section 4.2) that the dynamics of $X^{\xi}, \Pi$ and $Y^{\xi}$ can be written as 
\begin{align}\label{dynamics control problem}
\begin{cases}
d X_t^{\xi} = (\mu_1 \Pi_t + \mu_0 (1 - \Pi_t)) dt + \sigma d \overline{W}_t - \alpha d \xi_t , & X_{0-}^{\xi} = x \in \mathbb{R}, \\
d \Pi_t = \gamma \Pi_t (1- \Pi_t ) d \overline{W}_t, & \Pi_0 = \pi \in (0,1), \\
Y_t^{\xi} = y - \xi_t , & Y_{0 -}^{\xi} = y \geq 0,
\end{cases}
\end{align}
where $\gamma = (\mu_1 - \mu_0)/\sigma$ is the \emph{signal-to-noise ratio} and 
\begin{align*}
d \overline{W}_t = \frac{d X_t^0}{\sigma} - \Big( \frac{\mu_0}{\sigma} + \gamma \Pi_t \Big) dt 
\end{align*}
denotes the \textit{innovation process}, which is an $\mathbb{F}^{X^0}$-Brownian motion on $(\Omega, \mathcal{F}, \mathbb{P})$. Moreover, $\pi := \mathbb{P}[\mu = \mu_1]$ reflects the initial subjective belief of the investor regarding the true value of the drift. We do not question the origin of this initial belief, this can either be an instinctive decision or even the result of a constructive approach, for instance by observing the trends of similar assets over the past years. In the new formulation, the process $(X^\xi , Y^\xi ,\Pi)$ is an $\mathbb{F}^{X^0}$-adapted and time-homogeneous Markov process, as it is the unique and strong solution to the system of stochastic differential equations in \eqref{dynamics control problem}. Furthermore, we observe that the drift $\mu$ is replaced by its conditional estimate and the process $\Pi$ is a bounded martingale on $[0,1]$ with $\Pi_\infty \in \{0,1 \}$, as all information will eventually get revealed. Denoting $\mathbb{E}_{(x,y,\pi)} [ \cdot ] = \mathbb{E} [ \cdot \vert X_{0-}^\xi = x, Y_{0-}^\xi = y , \Pi_0 = \pi]$, we can thus reformulate the problem of incomplete information as a so-called \emph{separated problem} (cf. Bensoussan \cite{Bens}, Chapter 7.1 and Fleming and Pardoux \cite{FlPa})
\begin{align}\label{Value function control problem }
V(x,y, \pi) := \sup_{\xi \in \mathcal{A}(y)} J(x,y,\pi, \xi),
\end{align}
with
\begin{align}\label{objective control problem}
J(x,y,\pi ,\xi) :=  \mathbb{E}_{(x,y,\pi)} \Big[ \int_0^\infty e^{-rt} \Big( e^{X_t^\xi} - \kappa \Big) d \xi_t^c +  \sum_{t: \Delta \xi_t \neq 0} e^{-rt} \int_0^{\Delta \xi_t} (e^{X^\xi_{t-} - \alpha u } - \kappa ) du \Big],
\end{align}
for any $(x,y,\pi) \in \mathbb{R} \times (0, \infty) \times (0,1)$. Notice indeed that $\Pi_t \in (0,1)$ for all $t \geq 0$ a.s.~if $\pi \in (0,1)$, while $\Pi_t \equiv \pi_0$ for all $t \geq 0$ a.s.~if $\pi_0 \in \{ 0,1\}$. Problem \eqref{Value function control problem } is equivalent to \eqref{Objective with S_t}: They share the same value and, because of the uniqueness of the strong solution to \eqref{dynamics control problem}, a control is optimal for \eqref{Objective with S_t} if and only if it is optimal for \eqref{Value function control problem }. \\[0.3cm]
\textbf{The Hamilton-Jacobi-Bellman equation.} Problem \eqref{Value function control problem } takes the form of a \emph{three-dimensional singular stochastic control problem with finite-fuel constraint} (cf.~Baldursson \cite{Ba}, Bene\v{s} et al.~\cite{Ben}, El Karoui and Karatzas \cite{ElKa}, Karatzas \cite{Kara} and Karatzas et al.\ \cite{Ka-etal} for early contributions). We start our analysis by providing a heuristic derivation of the dynamic programming equation, that we expect the value function $V$ to satisfy. To this end, we notice that the investor is faced with two possible actions at initial time. On the one hand, the investor could choose to wait for a short period of time $\Delta t$, not sell any fraction of the assets and then continue with an optimal execution strategy (supposing that one exists). Since this strategy is not necessarily optimal, we obtain 
\begin{align*}
V(x,y, \pi) \geq \mathbb{E}_{(x,y,\pi)} \left[ e^{-r \Delta t} V(X_{\Delta t} ,y , \Pi_{\Delta t} ) \right], \qquad (x,y,\pi) \in \mathbb{R} \times (0, \infty) \times (0,1). 
\end{align*}
If we assume that the value function $V$ has enough regularity, we can apply It\^{o}'s formula, divide by  $\Delta t$ and invoke the mean value theorem in order to let $t \to 0$, so to obtain 
\begin{align*}
(\mathcal{L}_{X,\Pi} - r ) V \leq 0.
\end{align*}
Here, $\mathcal{L}_{X,\Pi}$ denotes the second-order differential operator, acting on twice-continuously differentiable functions, 
\begin{align}\label{Operator LXPi}
\mathcal{L}_{X,\Pi} := \frac12 \gamma^2 \pi^2 (1- \pi)^2 \partial_{\pi \pi} + \frac12 \sigma^2 \partial_{x x} + (\pi \mu_1 + (1- \pi) \mu_0 ) \partial_x + \sigma \gamma \pi (1 -\pi) \partial_{x \pi}.
\end{align}
On the other hand, the investor can instantaneously sell an amount $\epsilon > 0$ of the assets and then proceed by following an optimal execution strategy. Again, this strategy is a priori suboptimal and, since this action is associated with the inequality 
\begin{align*}
V(x,y,\pi) \geq V(x - \alpha \epsilon, y - \epsilon, \pi) + \frac{1}{\alpha} e^x \left( 1 - e^{- \alpha \epsilon} \right) - \kappa \epsilon ,
\end{align*}
adding and subtracting $V(x - \alpha \epsilon , y , \pi)$, and dividing by $\epsilon$, yields
\begin{align*}
\frac{V(x,y, \pi) - V(x - \alpha \epsilon , y , \pi)}{\epsilon} \geq \frac{V(x- \alpha \epsilon, y - \epsilon ,\pi) - V(x - \alpha \epsilon , y , \pi)}{	\epsilon} + \frac{1}{\alpha} e^x \frac{\left(1 - e^{- \alpha\epsilon} \right)}{\epsilon} - \kappa.
\end{align*}
Hence, by letting $\epsilon \downarrow 0$, we obtain 
\begin{align*}
\alpha V_x (x ,y , \pi)\geq - V_y (x,y,\pi) + e^x - \kappa.
\end{align*}
Since only one of these actions should be optimal, and given the Markovian setting of problem \eqref{Value function control problem }, we thus expect that the value function $V$ should identify with an appropriate solution to the Hamilton-Jacobi-Bellman equation
\begin{align}\label{HJB}
\max \left\{ \left( \mathcal{L}_{X,\Pi} -r\right) u , ~ - \alpha u_x - u_y + e^x - \kappa \right\} = 0, \quad (x,y,\pi) \in \mathbb{R} \times (0,\infty)\times (0,1),
\end{align} 
with boundary condition $u(x,0,\pi)=0$, since $y=0$ implies $\mathcal{A}(0) = \{\xi \equiv 0\}$ and $J(x,0, \pi , 0) =0$. 
It is worth noticing that the variable $y$ plays the role of a parameter in \eqref{HJB}, which is then a two-dimensional elliptic partial differential equation with a state-dependent directional derivative constraint, parametrized by $y > 0$. With reference to \eqref{HJB} and the reasoning above, we can introduce the \textit{waiting region}
\begin{align}\label{Waiting region}
\mathbb{W}_1 := \{ (x,y,\pi) \in \mathbb{R} \times (0,\infty)\times (0,1): ~ (\mathcal{L}_{X,\Pi} - r) V = 0, \, - \alpha V_x - V_y + e^x - \kappa < 0 \} ,
\end{align} 
in which it is expected to be suboptimal to sell any assets, and the \textit{selling/execution region}, where it should be profitable for the investor to sell a fraction of the assets:
\begin{align}\label{Selling region}
\mathbb{S}_1 := \{ (x,y,\pi) \in \mathbb{R} \times (0,\infty)\times (0,1): ~ (\mathcal{L}_{X,\Pi} - r) V \leq 0, \, - \alpha V_x - V_y + e^x - \kappa = 0 \}.
\end{align}
Due to the multi-dimensional structure of the problem, a traditional \textit{guess-and-verify approach}, as seen for instance in Guo and Zervos \cite{GuZe} and Ferrari and Koch \cite{FeKo}, is not effective. In fact, this would require the construction of an explicit solution to the second-order PDE with state dependent gradient constraint seen in \eqref{HJB} above, which is not feasible in general. Instead, we use a different approach and construct an optimal stopping problem connected to the stochastic control problem \eqref{Value function control problem }, which is then of a simpler structure. Before we do so, and in order to get insights from a benchmark problem, we briefly discuss the problem under \textit{full information}, i.e.~where the drift coefficient is constant and equal to either $\mu_0$ or $\mu_1$. 
\section{Benchmark Problem under Full Information}\label{Section: Benchmark Problem}
Suppose that the initial subjective belief $\pi = \mathbb{P}[\mu = \mu_1]$ is such that $\pi \in \{ 0,1\}$. Observe that there exists no uncertainty in the model other than the Brownian one and the belief process $\Pi$ will remain constant, as the investor is already certain at initial time regarding the true value of the drift. Hence  - in this formulation - we are in the case of \textit{full information}. The problem we address in this section has a similar structure to the ones studied by Guo and Zervos \cite{GuZe} as well as Koch \cite{Ko}, Chapter 2, and we therefore do not provide full details. Let us assume $\pi = 0$, we thus obtain $\Pi_t = 0$ for all $t \geq 0$ and the dynamics of $X^\xi$ and $Y^\xi$ then write as
\begin{align}\label{Dyn underline X}
\underline{X}_t^\xi = x + \mu_0 t + \sigma W_t - \alpha  \xi_t, \qquad Y_t^\xi = y - \xi_t.
\end{align}
We denote the corresponding value function as 
\begin{align}\label{Value function V mu0}
V_0 (x,y) := \sup_{\xi \in \mathcal{A}(y) } \mathbb{E}_{(x,y)} \Big[ \int_0^\infty e^{-rt} (e^{\underline{X}_t^\xi} - \kappa ) \circ d \xi_t \Big], \qquad (x,y) \in \mathbb{R} \times (0,\infty),
\end{align}
where $\mathbb{E}_{(x,y)} [\cdot] = \mathbb{E}[\cdot \vert \underline{X}_{0-}^\xi = x , Y_{0-}^\xi =y ]$. By employing similar arguments as in the case of incomplete information, we can expect that $V_0$ should identify with an appropriate solution to the HJB equation
\begin{align}\label{HJB full info}
\max \{ (\mathcal{L}_{\underline{X}} - r) w , ~ - \alpha w_x - w_y + e^x - \kappa \} = 0, \qquad \text{with}~ \mathcal{L}_{\underline{X}} = \frac12 \sigma^2 \partial_{xx} + \mu_0 \partial_x,
\end{align}
and $w(x,0) = 0$. Defining the associated waiting and selling regions as
\begin{align}\label{Waiting region mu1}
\mathbb{W}^{\mu_0} &:= \{ (x,y) \in \mathbb{R} \times [0, \infty):~ (\mathcal{L}_{\underline{X}} - r) w(x,y) = 0 , - \alpha w_x - w_y + e^x - \kappa < 0 \}, \\ \label{Selling region mu1}
\mathbb{S}^{\mu_0} &:= \{ (x,y) \in \mathbb{R} \times [0, \infty):~ (\mathcal{L}_{\underline{X}} - r) w(x,y) \leq 0 , - \alpha w_x - w_y + e^x - \kappa = 0 \},
\end{align}
we suppose that the investor is only willing to sell a share of assets when its price is sufficiently large. Hence, we guess that for every $y \geq 0$ there exists a critical price $G(y)$ such that \eqref{Waiting region mu1}-\eqref{Selling region mu1} rewrite as
\begin{align*}
\mathbb{W}^{\mu_0} &= \{ (x,y) \in \mathbb{R} \times [0, \infty): ~y >0~\text{and}~ x < G(y) \} \cup (\mathbb{R} \times \{0\}), \\
 \text{and} \qquad \mathbb{S}^{\mu_0} &= \{ (x,y) \in \mathbb{R} \times [0,\infty) : ~ y> 0 ~\text{and}~x \geq G(y)\}. 
\end{align*}
Notice that the candidate value function should then satisfy $(\mathcal{L}_{\underline{X}} -r) w(x,y) = 0$ for all $(x,y)\in \mathbb{W}^{\mu_0}$. It is well-known that the latter equation admits two fundamental strictly positive solutions; the only solution that remains bounded as $x \downarrow - \infty$ is then given by
\begin{align*}
w(x,y) = A(y) e^{nx}, 
\end{align*}
for some functions $A: [0,\infty) \to \mathbb{R}$ and where $n$ is the positive solution to $(\sigma^2/2) n^2 + \mu_0 n -r = 0$. On the other hand, for $(x,y) \in \mathbb{S}^{\mu_0}$, we expect that the value function $V_0$ should instead satisfy 
\begin{align*}
- \alpha w_x - w_y + e^x - \kappa = 0
\qquad \text{and thus} \qquad
- \alpha w_{xx} - w_{yx} + e^x = 0.
\end{align*}
In order to derive the solutions for $A(y)$ and $G(y)$, we evaluate the two previous formulas at $x = G(y)$, require that $A(0)=0$ and obtain
\begin{align}\label{x0*}
G(y) = \ln \Big( \frac{\kappa n}{n-1} \Big) =: x_0^*
\quad \text{and} \quad
A(y) = \frac{\kappa}{\alpha n (n-1)} \Big( \frac{\kappa n}{n-1} \Big)^{-n} \Big( 1 - e^{- \alpha n y} \Big).
\end{align}
Notice that the optimal execution threshold - determining the price at which the investor should sell - is independent of the current amount of assets in the portfolio. Moreover, the selling region is partitioned into 
\begin{align*}
\mathbb{S}_1^{\mu_0} &:= \Big\{ (x,y) \in \mathbb{R} \times (0, \infty): ~ x \geq x^*_0,~y \leq \frac{x - x^*_0}{\alpha} \Big\} \\
\text{and} \qquad \mathbb{S}_2^{\mu_0} &:= \Big\{ (x,y) \in \mathbb{R} \times (0,\infty):~ x\geq x^*_0,~y > \frac{x -x^*_0}{\alpha} \Big\},
\end{align*}
and we suppose that for $(x,y) \in \mathbb{S}_1^{\mu_0}$ it should be optimal to sell the complete amount of assets instantaneously, while for $(x,y) \in \mathbb{S}_2^{\mu_0} $ the investor is expected to make a lump-sum execution and then follow the strategy that keeps the process $(X,Y)$ inside $\overline{\mathbb{W}}^{\mu_0}$ until all assets are sold. The candidate value function, according to our previous considerations, then takes the shape 
\begin{align}\label{Function w for full info}
w(x,y) = 
\begin{cases}
A(y) e^{nx} & \text{for}~ (x,y) \in \mathbb{W}^{\mu_0},\\
A(y - \frac{x-x_0^*}{\alpha})e^{n x_0^*}  + \frac{1}{\alpha} (e^x- e^{x_0^*}) - \frac{\kappa}{\alpha} (x - x^*_0) & \text{for}~(x,y) \in \mathbb{S}_2^{\mu_0}, \\
\frac{1}{\alpha} e^x \big( 1 - e^{- \alpha y} \big) -  \kappa y & \text{for}~ (x,y) \in \mathbb{S}_1^{\mu_0}, \\
\end{cases}
\end{align}
and via a verification theorem (cf.\,Guo and Zervos \cite{GuZe}, Prop.~5.1, Koch \cite{Ko}, Prop. 2.4.1), one can indeed show that $w$ is a $C^{2,1}$ solution to the HJB equation \eqref{HJB full info} and coincides with the value function $V_0$ of \eqref{Value function V mu0}. Moreover, the process
\begin{align}\label{Optimal control xi mu0}
\xi_t^{\mu_0} := y \wedge \sup_{ 0 \leq s \leq t} \frac{1}{\alpha} \Big[ x - x_0^* + \mu_0 s + \sigma W_s \Big]^+ , \qquad t \geq 0, \quad \xi_{0-}^{\mu_0} = 0 , 
\end{align}
belongs to $\mathcal{A}(y)$ and provides an optimal execution strategy for problem \eqref{Value function V mu0} (cf.~Guo and Zervos \cite{GuZe}, Prop.~5.1; recall that here we are not assuming $\lim_{T \uparrow \infty} Y_T^\xi = 0$ as admissibility condition, see also Remark \ref{Remark: limY}).
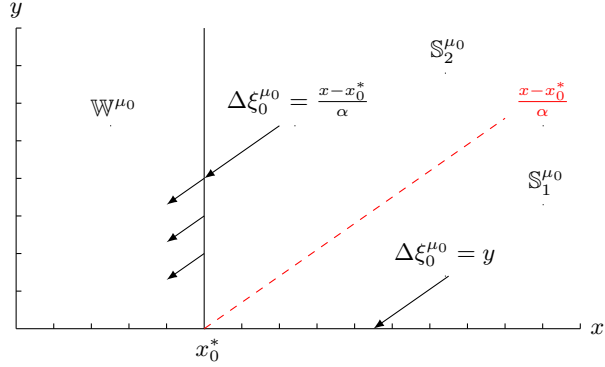
\begin{figure}
\begin{center}
\begin{tikzpicture}
\draw (0,0) -- (7.5,0) node[anchor=west] {\begin{footnotesize}$x$ \end{footnotesize}};
\draw (0,0) -- (0,4) node[anchor=south] {\begin{footnotesize}$y$ \end{footnotesize}};
\draw (2.5,0) -- (2.5,4);
\foreach \x in {0.5,1,1.5,2,3,3.5,4,4.5,5,5.5,6,6.5,7,7.5}
   \draw (\x cm,2pt) -- (\x cm,0pt);
\foreach \x in {0.5,1,1.5,2,2.5,3,3.5,4}
   \draw (2pt,\x cm) -- (0pt,\x cm);
\draw (2.5 cm,0pt) -- (2.5 cm,0pt) node[anchor=north] {\begin{footnotesize}
 $x_0^*$
\end{footnotesize}};
\draw[-latex] (2.5,1) -- (2,0.65);
\draw[-latex] (2.5,1.5) -- (2,1.15);
\draw[-latex] (2.5,2) -- (2,1.65);
\draw[-latex] (3.5,2.7) -- (2.5,2);
\draw[-latex] (5.75,0.7) -- (4.75,0);

\draw (3.7,2.7) -- (3.7,2.7) node[anchor=south] {\begin{footnotesize}
$\Delta \xi_0^{\mu_0} = \frac{x-x_0^*}{\alpha}$
\end{footnotesize}};
\draw (7,2.7) -- (7,2.7) node[anchor=south]
 {\begin{footnotesize}$\textcolor{red}{\frac{x-x_0^*}{\alpha}}$
\end{footnotesize}};
\draw (5.7,0.7) -- (5.7,0.7) node[anchor=south] {\begin{footnotesize}$\Delta \xi_0^{\mu_0} = y$
\end{footnotesize}};
\draw[red,dashed] (2.5,0) -- (6.5,2.8);
\draw (7,1.65) -- (7,1.65) node[anchor=south] {\begin{footnotesize}
$\mathbb{S}_1^{\mu_0}$
\end{footnotesize}};
\draw (5.7,3.4) -- (5.7,3.4) node[anchor=south] {\begin{footnotesize}
$\mathbb{S}_2^{\mu_0}$
\end{footnotesize}};
\draw (1.25,2.7) -- (1.25,2.7) node[anchor=south] {\begin{footnotesize}
$\mathbb{W}^{\mu_0}$
\end{footnotesize}};
\end{tikzpicture}
\caption{Illustrative drawing of the optimal execution strategy \eqref{Optimal control xi mu0} under full information.}
\label{Fig: ControlFull}
\end{center}
\end{figure}
Figure \ref{Fig: ControlFull} sketches the optimal execution strategy \eqref{Optimal control xi mu0} for problem \eqref{Value function V mu0} under full information. We observe that, for an initial price $x$ strictly larger than $x_0^*$, the investor immediately does a lump-sum execution. The latter can already deplete the whole portfolio whenever $y \leq \frac{1}{\alpha} (x- x_0^*)$, or bring it to the level $(x_0^*, y - \frac{1}{\alpha}(x - x_0^*)$ otherwise. Afterwards, the optimal strategy prescribes to keep the state process $(X,Y)$ inside the waiting region $\overline{\mathbb{W}}^{\mu_0}$ with minimal effort, by reflecting it in the direction $(-\alpha,-1)$ according to a \textit{Skorokhod reflection-type} policy (realized through the running supremum in \eqref{Optimal control xi mu0}).\\
In light of our subsequent analysis, it is interesting to notice that the directional derivative $v_0 := \alpha \partial_x V_0 + \partial_y V_0$ can be checked from \eqref{Function w for full info} to identify with the value function of an optimal stopping problem. More precisely, for any $x \in \mathbb{R}$ one has
\begin{align}\label{Value function v0}
\alpha \partial_x V_0 + \partial_y V_0 =: v_0 (x) = \sup_{\tau \geq 0} \mathbb{E}_x  [e^{-r \tau} (e^{\underline{X}_\tau^0 } - \kappa ) ],
\end{align}
where $\underline{X}^0$ denotes the solution to \eqref{Dyn underline X} with $\xi_t \equiv 0$, the optimization is performed over all stopping times of the Brownian filtration and $\mathbb{E}_x$ is the expectation under $\mathbb{P}_x [\cdot] = \mathbb{P}[\cdot \mid \underline{X}_0^0 = x]$. Moreover, the stopping time 
\begin{align}\label{OST tau 0 *}
\tau_0^*(x) := \inf \{ t \geq 0: ~ \underline{X}_t^0 \geq x_0^* \}, \qquad \mathbb{P}_x \text{-a.s.},~  x \in \mathbb{R},
\end{align}
is optimal for \eqref{Value function v0}. We can interpret \eqref{OST tau 0 *} as the optimal time at which the investor should sell another unit of shares, and notice that it in fact characterizes the time at which the marginal expected profit $\alpha \partial_x V_0 + \partial_y V_0$ coincides with the marginal instantaneous net profit $e^x - \kappa$ from selling. 

\begin{Remark} \label{Remark: Full Info mu1}
It is easily checked that the results we obtained for the case $\mu \equiv \mu_0$ can be replicated for the case $\mu \equiv \mu_1$. More precisely, considering the dynamics
\begin{align}\label{Dyn Xoverline}
\overline{X}_t^\xi = x + \mu_1 t + \sigma W_t - \alpha \xi_t, \qquad t \geq 0,
\end{align}
and the value function 
\begin{align}\label{Value function V1}
V_1 (x,y) := \sup_{\xi \in \mathcal{A}(y)} \Big[ \int_0^\infty e^{-rt } ( e^{\overline{X}_t^\xi} - \kappa) \circ d \xi_t \Big] , \qquad (x,y) \in \mathbb{R} \times (0, \infty),
\end{align}
we can verify the existence of an optimal execution threshold $x_1^*$, which triggers the selling strategy of the investor through the optimal control $\xi^{\mu_1}$, which is of similar structure as \eqref{Optimal control xi mu0}, with $\mu_0$ replaced by $\mu_1$. Furthermore, we have
\begin{align} \label{OSP with mu 1}
\alpha \partial_x V_1 + \partial_y V_1 =: v_1 (x) = \sup_{t \geq 0} \mathbb{E}_x [ e^{-r\tau} (e^{\overline{X}_\tau^0} - \kappa) ],
\end{align}
where $\overline{X}^0$ denotes the solution to \eqref{Dyn Xoverline} with $\xi_t \equiv 0$, and the stopping time $\tau_1^* (x) := \inf \{ t \geq 0 : ~ \overline{X}_t^0 \geq x_1^* \}$, $\mathbb{P}_x$-a.s., is optimal for problem \eqref{OSP with mu 1}.
\end{Remark}

 \section{A Related Optimal Stopping Problem}\label{Section: First related OSP} 
Motivated by the observed connection to an optimal stopping problem in the benchmark problem of Section \ref{Section: Benchmark Problem} (see \eqref{OSP with mu 1}), we pursue the following approach in the subsequent analysis: (i) we introduce and study an optimal stopping problem with value $v$, that we expect to be associated to the  singular stochastic control problem \eqref{Value function control problem }; (ii) we provide a complete analysis of the optimal stopping problem, which is achieved by studying two equivalent formulations of it (cf.\ Sections \ref{Section: Decoupling Change of Measure} and \ref{Section: Parabolic Formulation}). More precisely, we derive regularity results of the value function (cf.\,Proposition \ref{Proposition: Smooth fit}), as well as an integral equation for the free boundary (cf.\,Proposition \ref{Proposition: Integral Equation for c}); (iii) we verify the expected connection to the original problem of \eqref{Value function control problem } by showing that (cf.\ Theorem \ref{Theorem: Verification Theorem}) 
\begin{align*}
V(x,y,\pi) = \frac{1}{\alpha} \int_{x- \alpha y}^x v(x',\pi) dx', \hspace{0.5cm} (x,y,\pi) \in \mathbb{R} \times (0,\infty) \times (0,1),
\end{align*}
and that the optimal execution strategy is triggered by the optimal stopping boundary studied in the previous step. In fact, as in the benchmark case, we can interpret the optimal stopping problem as the \textit{marginal problem}, in the sense that its value coincides with the derivative of the value $V$ of \eqref{Value function control problem } in the direction of actions/execution and its optimal stopping strategy characterizes the time at which it is optimal to sell a unit of assets. 

We recall that $(X^0_t , \Pi_t)_{t \geq 0}$ is the two-dimensional strong Markov process solving 
 \begin{align}\label{Dynamics Opt Stop Problem 1}
\begin{cases}
d X_t^0 = (\mu_1 \Pi_t + \mu_0 (1 - \Pi_t)dt + \sigma d \overline{W}_t, &X_0^0 =x, \\
d \Pi_t = \gamma \Pi_t (1- \Pi_t) d \overline{W}_t, &\Pi_0 = \pi, 
\end{cases}
\end{align}
and in the following - in order to simplify notation - we write $X$ instead of $X^0$. For a stopping time $\tau$ of the filtration $\mathbb{F}^X$, we then define
 \begin{align}\label{Objective OSP}
\Psi (x,\pi,\tau) := \mathbb{E}_{(x,y)} [ e^{-r \tau } (e^{X_\tau} - \kappa )], \quad (x,\pi) \in \mathbb{R}\times (0,1), 
 \end{align}
and consider the optimal stopping problem 
\begin{align}\label{Value fct of opt stopp}
 v(x,\pi):= \sup_{\tau} \Psi (x,\pi,\tau).
\end{align} 
Above, and in the following, $\mathbb{E}_{(x,\pi)} [\cdot] = \mathbb{E}[ \cdot \vert X_0 = x, \Pi_0 = \pi]$. Also, denoting $(X_t^{x,\pi} , \Pi_t^\pi)_{t \geq 0}$ the unique strong solution to \eqref{Dynamics Opt Stop Problem 1} we will often employ the following equivalent notation $\mathbb{E}[f(X_t^{x,\pi} , \Pi_t^\pi )] = \mathbb{E}_{x, \pi} [f (X_t , \Pi_t)]$, for any integrable measurable function $f: \mathbb{R} \times [0,1] \to \mathbb{R}$. 

We make the next \textbf{standing assumption}.
\begin{assumption}\label{Assumption: Well-posedness}
We assume $ r > \big(\mu_1 + \frac{1}{2} \sigma^2 \big) \vee \big( \mu_1 + \frac12 \sigma^2 + \frac{(2 \mu_1 + \sigma^2)(\mu_1 - \mu_0)}{\sigma^2} \big) \vee \big( \frac{\gamma}{2 \sigma} \vert \mu_0 + \mu_1 \vert \big) $.
\end{assumption}
\begin{Remark}
(i) The different conditions we impose on the (subjective) discount factor $r$ serve distinct purposes. Notice that the the first condition is equivalent to imposing $r > \beta_1$ and guarantees well-posedness of problem \eqref{Value fct of opt stopp}. \\
(ii) Moreover, the forthcoming analysis (in particular Section \ref{Section: Parabolic Formulation}) reveals that the other two terms are sufficient to ensure monotonicity of (a transformation of) the optimal stopping boundary of the latter problem (cf. Propositions \ref{Proposition Mono 1} and \ref{Proposition Mono 2}). This result is crucial when deriving the smooth-fit property and thus, by relying on arguments developed in De Angelis and Peskir \cite{DeAPe}, the global $C^1$-regularity of (a transformation of) the value function $v$ of \eqref{Value fct of opt stopp}. When $r$ does not satisfy Assumption \ref{Assumption: Well-posedness}, the monotonicity of the (transformed version of the) boundary is not clear, and thus one needs an alternative route to achieve the needed regularity of $\hat{v}$. A possible approach could be to prove directly the (locally) Lipschitz-regularity of the free-boundary (cf. De Angelis and Stabile \cite{DeASt}), and then infer the $C^1$-property of $\hat{v}$ from the continuity of the optimal stopping time. Since this is not straightforward to obtain in our formulation, we leave it for future research.
\end{Remark}
In the following, we derive some preliminary results of the optimal stopping problem \eqref{Value fct of opt stopp} and its associated free boundary. Noticing that $(x,\pi) \mapsto X_t^{x,\pi}$ as well as $\pi \mapsto \Pi_t^\pi$ are continuous and nondecreasing, due to classical comparison theorems for strong solutions to stochastic differential equations, the proof of the following lemma follows from standard arguments and it is therefore skipped. 
\begin{lemma}\label{Lemma: Properties of v}
The value function v of (\ref{Value fct of opt stopp}) is such that 
\begin{itemize}
\item[i)] $x \mapsto v(x, \pi)$ is nondecreasing;
\item[ii)] $\pi \mapsto v(x, \pi)$ is nondecreasing;
\end{itemize}
\end{lemma}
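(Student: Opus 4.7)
The plan is to derive both monotonicity properties from pathwise comparison of the state process $(X_t^{x,\pi},\Pi_t^\pi)$, combined with the strict monotonicity of the gain function $z \mapsto e^z-\kappa$ and the fact that $e^{-r\tau}\geq 0$ for every stopping time $\tau$.

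First I would note that the stochastic differential equation for $\Pi$ in \eqref{Dynamics Opt Stop Problem 1} is autonomous in $\Pi$ and driven by a single Brownian motion $\overline{W}$, with diffusion coefficient $\pi\mapsto \gamma\pi(1-\pi)$ which is Lipschitz on compact subsets of $[0,1]$. By the classical one-dimensional Yamada--Watanabe/Ikeda--Watanabe comparison theorem for strong solutions, for every $\pi_1\leq \pi_2$ in $(0,1)$ we obtain $\Pi_t^{\pi_1}\leq \Pi_t^{\pi_2}$ for all $t\geq 0$, $\mathbb{P}$-a.s. Integrating the SDE for $X^{0}$ then gives the explicit representation
\begin{equation*}
X_t^{x,\pi} \,=\, x + \mu_0 t + (\mu_1-\mu_0)\int_0^t \Pi_s^\pi\, ds + \sigma\overline{W}_t,
\end{equation*}
from which monotonicity in $x$ follows trivially ($X_t^{x_2,\pi}-X_t^{x_1,\pi}=x_2-x_1\geq 0$ whenever $x_1\leq x_2$), while for $\pi_1\leq \pi_2$ the inequality $\mu_1>\mu_0$ together with the previous comparison yields
\begin{equation*}
X_t^{x,\pi_2} - X_t^{x,\pi_1} \,=\, (\mu_1-\mu_0)\int_0^t \bigl(\Pi_s^{\pi_2}-\Pi_s^{\pi_1}\bigr)\,ds \,\geq\, 0,\qquad \mathbb{P}\text{-a.s.}
\end{equation*}

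Second, since the function $z\mapsto e^z-\kappa$ is strictly increasing and $e^{-r\tau}\geq 0$, for any admissible stopping time $\tau$ and any $x_1\leq x_2$ (respectively $\pi_1\leq \pi_2$) the pathwise comparison above implies
\begin{equation*}
e^{-r\tau}\bigl(e^{X_\tau^{x_1,\pi}}-\kappa\bigr) \,\leq\, e^{-r\tau}\bigl(e^{X_\tau^{x_2,\pi}}-\kappa\bigr), \qquad \mathbb{P}\text{-a.s.},
\end{equation*}
and analogously in the $\pi$-variable. Taking expectations yields $\Psi(x_1,\pi,\tau)\leq \Psi(x_2,\pi,\tau)$ and $\Psi(x,\pi_1,\tau)\leq \Psi(x,\pi_2,\tau)$, so that passing to the supremum over $\tau$ preserves both inequalities and delivers parts (i) and (ii) of the lemma.

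There is no real obstacle in this argument: the only care needed is the pathwise comparison for $\Pi$, which would fail for a non-autonomous coefficient, and the observation that well-posedness of $\mathbb{E}_{(x,\pi)}[e^{-r\tau}(e^{X_\tau}-\kappa)]$ (ensuring integrability uniformly in $\tau$) is guaranteed by the first of the three standing conditions in Assumption~\ref{Assumption: Well-posedness}, namely $r>\mu_1+\tfrac12\sigma^2=\beta_1$, so that $\mathbb{E}_{(x,\pi)}[\sup_{t\geq 0} e^{-rt}e^{X_t}]<\infty$ and the interchange of expectation and supremum is legitimate.
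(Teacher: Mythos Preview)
Your proposal is correct and follows essentially the same route the paper indicates: the paper explicitly skips the proof, noting only that $(x,\pi)\mapsto X_t^{x,\pi}$ and $\pi\mapsto \Pi_t^\pi$ are nondecreasing by classical comparison theorems, after which the result follows from ``standard arguments''. Your write-up simply fills in those standard arguments; the only quibble is the phrase ``interchange of expectation and supremum'' in your last sentence, which is a misnomer---no interchange is needed, you are just taking a supremum over $\tau$ on both sides of an inequality that holds for each fixed $\tau$.
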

Furthermore, using that $(x, \pi) \mapsto (X_t^{x,\pi}, \Pi_t^\pi)$ is continuous $\mathbb{P}$-a.s., by Assumption \ref{Assumption: Well-posedness} and standard estimates using the fact that $\Pi$ is bounded on $[0,1]$ we can invoke dominated convergence and obtain that
\begin{align*}
(x, \pi) \mapsto \mathbb{E} [e^{-r \tau} (e^{X_\tau^{x, \pi}} - \kappa) ],
\end{align*}
is continuous and hence, $(x,\pi) \mapsto v(x, \pi)$ is lower-semicontinuous. 
As it is customary in optimal stopping theory, we introduce the continuation and stopping regions associated to $v$ as  
\begin{align}\label{Continuation region C1}
\mathcal{C}_1 &:= \{ (x, \pi) \in \mathbb{R} \times (0,1): \quad v(x, \pi) > (e^x- \kappa) \}, \\
\label{Stopping region S1}
\mathcal{S}_1 &:= \{ (x, \pi) \in \mathbb{R} \times (0,1): \quad v(x, \pi) = (e^x- \kappa) \}.
\end{align}
Then, the continuation region $\mathcal{C}_1$ is an open set, while the stopping region $\mathcal{S}_1$ in \eqref{Stopping region S1} is closed, and by Peskir and Shiryaev \cite{PeSh}, Chapter 1, Section 2, Corollary 2.9, the stopping time 
\begin{align}\label{Optimal stopping time tau*}
\tau^* = \tau^* (x, \pi) := \inf \{ t \geq 0:~ (X_t^{x,\pi} , \Pi_t^\pi ) \in \mathcal{S}_1 \},
\end{align}
is optimal whenever it is $\mathbb{P}$-a.s.\,finite, otherwise it is an optimal Markov time. We set 
\begin{align}\label{boundary a}
a (\pi) := \inf \{ x \in \mathbb{R}: \quad v(x, \pi) \leq  (e^x- \kappa) \} ,
\end{align}
with the convention $\inf \emptyset = + \infty$, and state the following lemma. 
\begin{lemma}\label{Lemma: C1,S1 in terms of a}
It holds
\begin{align*}
\mathcal{C}_1 = \{ (x, \pi) \in \mathbb{R} \times (0,1): \quad x < a (\pi) \} \quad \text{and} \quad \mathcal{S}_1 = \{ (x, \pi) \in \mathbb{R} \times (0,1): \quad x \geq a (\pi) \}.
\end{align*}
\end{lemma}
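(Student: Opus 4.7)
The plan is to show that, for each fixed $\pi \in (0,1)$, the $\pi$-section of the stopping region, $\mathcal{S}_1(\pi) := \{x \in \mathbb{R} : (x,\pi) \in \mathcal{S}_1\}$, is either empty or a closed right-halfline of $\mathbb{R}$. Granted this, the lemma follows upon recalling the definition \eqref{boundary a} of $a(\pi)$ and observing that, since $v(x,\pi) \geq e^x - \kappa$ (take $\tau \equiv 0$ in \eqref{Value fct of opt stopp}), the inequality $v(x,\pi) \leq e^x - \kappa$ is equivalent to equality, so that $\mathcal{S}_1(\pi) = \{x : v(x,\pi) \leq e^x - \kappa\}$.

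The key step is to exploit the \emph{additive-in-$x$} structure of the state process. Since the SDE for $\Pi$ in \eqref{Dynamics Opt Stop Problem 1} does not involve $x$, the process
\begin{align*}
M_t^\pi := \int_0^t \bigl(\mu_1 \Pi_s^\pi + \mu_0(1-\Pi_s^\pi)\bigr)\,ds + \sigma \overline{W}_t, \qquad t \geq 0,
\end{align*}
depends only on $\pi$, and one has pathwise $X_t^{x,\pi} = x + M_t^\pi$. For any stopping time $\tau$ of $\mathbb{F}^X$ we may therefore compute
\begin{align*}
\mathbb{E}\bigl[e^{-r\tau}\bigl(e^{X_\tau^{x,\pi}} - \kappa\bigr)\bigr] - (e^x - \kappa) = e^x\, A(\tau,\pi) - \kappa\, B(\tau,\pi),
\end{align*}
where $A(\tau,\pi) := \mathbb{E}[e^{-r\tau} e^{M_\tau^\pi}] - 1$ and $B(\tau,\pi) := \mathbb{E}[e^{-r\tau}] - 1 \leq 0$.

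Now suppose $(x,\pi) \in \mathcal{S}_1$. Taking the supremum over $\tau$ in the previous display yields $0$, so that $e^x A(\tau,\pi) - \kappa B(\tau,\pi) \leq 0$ for every $\tau$; rearranging gives $e^x A(\tau,\pi) \leq \kappa B(\tau,\pi) \leq 0$, and in particular $A(\tau,\pi) \leq 0$. For any $x' \geq x$, multiplying the nonpositive quantity $A(\tau,\pi)$ by $e^{x'-x} \geq 1$ preserves the sign and makes it (weakly) more negative, so that
\begin{align*}
e^{x'} A(\tau,\pi) \leq e^x A(\tau,\pi) \leq \kappa B(\tau,\pi) \quad \text{for every stopping time } \tau.
\end{align*}
Taking the supremum in $\tau$ gives $v(x',\pi) - (e^{x'}-\kappa) \leq 0$, which combined with the reverse inequality forces equality. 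Hence $(x',\pi) \in \mathcal{S}_1$, i.e., $\mathcal{S}_1(\pi)$ is upward closed in $\mathbb{R}$.

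To conclude, the lower semicontinuity of $v$ noted before \eqref{Continuation region C1} implies that $\mathcal{S}_1$, and hence $\mathcal{S}_1(\pi)$, is closed in $\mathbb{R}$. A closed, upward-closed subset of $\mathbb{R}$ is either empty or of the form $[a,\infty)$ with $a = \inf \mathcal{S}_1(\pi)$, which (with the convention $\inf\emptyset = +\infty$) matches precisely the definition \eqref{boundary a} of $a(\pi)$. The stated characterization of $\mathcal{C}_1$ then follows by complementation. The main obstacle of the argument is the upward-closure step, which crucially relies on the additive dependence of $X^{x,\pi}$ on $x$; the rest is formal.
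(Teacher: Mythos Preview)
Your proof is correct. Both arguments establish that the $\pi$-section of $\mathcal{S}_1$ is upward closed in $x$, but the routes differ. The paper applies Dynkin's formula to obtain the running-cost representation
\[
u(x,\pi) = \sup_\tau \mathbb{E}_{(x,\pi)}\Big[\int_0^\tau e^{-rt}\Big(e^{X_t}\big(\mu_1\Pi_t + (1-\Pi_t)\mu_0 + \tfrac12\sigma^2 - r\big) + r\kappa\Big)\,dt\Big],
\]
and then compares $u(x_1,\pi)$ with $u(x_2,\pi)$ for $x_1<x_2$ using $\tau^*$ optimal for $x_2$; the sign of the difference is controlled by the standing assumption $r>\mu_1+\tfrac12\sigma^2$. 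You instead exploit the additive structure $X_t^{x,\pi}=x+M_t^\pi$ to factor $e^x$ out of the payoff and reduce the monotonicity to the algebraic observation that once $A(\tau,\pi)\le 0$ for all $\tau$ (forced by $(x,\pi)\in\mathcal{S}_1$), the inequality $e^{x'}A\le e^xA\le \kappa B$ propagates to every $x'\ge x$. Your argument is more elementary in that it avoids Dynkin's formula and does not invoke the condition $r>\beta_1$ explicitly in the comparison step (only implicitly through well-posedness of $v$); the paper's running-cost approach, on the other hand, is the standard template that would still apply in models where the dependence on $x$ is not purely additive.
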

\begin{proof}
Recalling \eqref{Operator LXPi}, an application of Dynkin's formula reveals 
\begin{align*}
u(x,\pi) := v(x,\pi) - (e^x -\kappa) = \sup_\tau \mathbb{E}_{(x,\pi)} \Big[ \int_0^\tau e^{-rt} \Big( e^{X_t} \big( \mu_1 \Pi_t + (1 - \Pi_t) \mu_0  + \frac12 \sigma^2 - r \big) +r \kappa \Big) dt \Big].
\end{align*}
For $x_1 < x_2$ and $\tau^*$ optimal for $v(x_2 , \pi)$ we have 
\begin{align*}
u(x_1,\pi) - u(x_2,\pi) &\geq \mathbb{E} \Big[ \int_0^{\tau^*} e^{-rt} \Big( e^{X^{x_1, \pi}_t} - e^{X^{x_2, \pi}_t} \Big) (\mu_1 \Pi_t^\pi + (1 - \Pi_t^\pi) \mu_0  + \frac12 \sigma^2 -r )   dt \Big] \\
& \geq \mathbb{E} \Big[ \int_0^{\tau^*} e^{-rt} \Big( e^{X^{x_2, \pi}_t} - e^{X^{x_1, \pi}_t} \Big) (r - \mu_1   - \frac12 \sigma^2)   dt \Big] \geq 0,
\end{align*}
due to Assumption \ref{Assumption: Well-posedness}. For $(x_1, \pi) \in \mathcal{S}_1$ and $x_2 > x_1$, we thus obtain $0 \leq u(x_2,\pi) \leq u(x_1, \pi) = 0$, so that $(x_2, \pi) \in \mathcal{S}_1$.  \qed
\end{proof}
The free boundary $a(\pi)$ thus splits $\mathbb{R} \times (0,1)$ into the continuation and stopping region. In the following lemma we derive some preliminary properties.
\begin{lemma}\label{Lemma: Properties of boundary a}
One has:
\begin{itemize}
\item[i)] $\pi \mapsto a(\pi)$ is nondecreasing on $(0,1)$;
\item[ii)] $\pi \mapsto a(\pi)$ is left-continuous on $(0,1)$;
\item[iii)] There exist constants such that $x_0^* \leq a(\pi) \leq x_1^*$ for all $\pi \in (0,1)$.
\end{itemize}
\end{lemma}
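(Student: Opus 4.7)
The plan is to deduce all three claims from the monotonicity of $v$ in $\pi$ (Lemma \ref{Lemma: Properties of v}), the closedness of $\mathcal{S}_1$, and a comparison with the two full-information benchmark problems of Section \ref{Section: Benchmark Problem} and Remark \ref{Remark: Full Info mu1}.

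\textbf{Parts (i) and (ii).} For monotonicity, I fix $\pi_1<\pi_2$ and any $x\geq a(\pi_2)$, so that $v(x,\pi_2)=e^x-\kappa$; since $\pi\mapsto v(x,\pi)$ is nondecreasing and $v\geq e^x-\kappa$ always, this sandwiches $v(x,\pi_1)=e^x-\kappa$, i.e.\ $x\geq a(\pi_1)$, and taking the infimum over such $x$ gives (i). For left-continuity, I argue by contradiction: if $a(\pi^-)<a(\pi)$, pick $x\in(a(\pi^-),a(\pi))$ and $\pi_n\uparrow\pi$. By (i) one has $a(\pi_n)\leq a(\pi^-)<x$, hence $(x,\pi_n)\in\mathcal{S}_1$ for every $n$. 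The closedness of $\mathcal{S}_1$ (from lower-semicontinuity of $v$, as noted below \eqref{Value fct of opt stopp}) then forces $(x,\pi)\in\mathcal{S}_1$, contradicting $x<a(\pi)$.

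\textbf{Part (iii).} The key tool is the observation that, by the law of total expectation applied to the independent random variable $\mu$, for every $\mathbb{F}^X$-stopping time $\tau$
\begin{align*}
\mathbb{E}_{(x,\pi)}\!\bigl[e^{-r\tau}(e^{X_\tau}-\kappa)\bigr]
=\pi\,\mathbb{E}_x\!\bigl[e^{-r\tau}(e^{\overline{X}_\tau^0}-\kappa)\bigr]
+(1-\pi)\,\mathbb{E}_x\!\bigl[e^{-r\tau}(e^{\underline{X}_\tau^0}-\kappa)\bigr],
\end{align*}
with $\overline{X}^0$ and $\underline{X}^0$ the full-information uncontrolled dynamics under drifts $\mu_1$ and $\mu_0$ respectively (cf.\ Section \ref{Section: Benchmark Problem} and Remark \ref{Remark: Full Info mu1}). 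Taking suprema on the right yields $v(x,\pi)\leq \pi v_1(x)+(1-\pi)v_0(x)$; since $\mu_0<\mu_1$ implies $n_0>n_1$ and hence $x_0^*<x_1^*$, and $v_i(x)=e^x-\kappa$ for $x\geq x_i^*$, I conclude $v(x,\pi)=e^x-\kappa$ for all $x\geq x_1^*$, giving $a(\pi)\leq x_1^*$. For the reverse bound, I plug the benchmark rule $\tau_0^*:=\inf\{t\geq 0:\,X_t\geq x_0^*\}$ (which is $\mathbb{F}^X$-admissible as a hitting time of the observation) into the same decomposition: conditional on $\mu=\mu_0$ it returns exactly $v_0(x)$, while conditional on $\mu=\mu_1$ the stochastic domination of $x+\mu_1 t+\sigma W_t$ over $x+\mu_0 t+\sigma W_t$ makes $\tau_0^*$ stochastically smaller, so that $\mathbb{E}[e^{-r\tau_0^*}(e^{x_0^*}-\kappa)\mid\mu=\mu_1]\geq v_0(x)$. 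Combining the two conditional estimates gives $v(x,\pi)\geq v_0(x)>e^x-\kappa$ for $x<x_0^*$, forcing $a(\pi)\geq x_0^*$.

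\textbf{Main obstacle.} The substantive step is (iii): the conditioning identity reduces the partial-information functional to a convex combination of the two full-information ones, and verifying that the benchmark threshold rule is admissible under all three laws is exactly what lets the comparison propagate in both directions. Parts (i) and (ii) are soft consequences of the monotonicity of $v$ and the closedness of $\mathcal{S}_1$, and require no additional probabilistic input.
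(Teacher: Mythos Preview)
Your proof is correct. Parts (i) and (ii) match the paper's argument almost verbatim: monotonicity of $v$ in $\pi$ for (i), and closedness of $\mathcal{S}_1$ (equivalently, lower-semicontinuity of $v$) for (ii), with only a cosmetic difference in that the paper passes to the limit of the boundary points $(a(\pi_n),\pi_n)$ directly rather than arguing by contradiction.

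For (iii) you take a genuinely different route. The paper works in the separated formulation \eqref{Dynamics Opt Stop Problem 1}: there the filtered drift $\mu_1\Pi_t+\mu_0(1-\Pi_t)$ lies in $(\mu_0,\mu_1)$, so a pathwise comparison against the processes $\underline{X}^0$ and $\overline{X}^0$ (all driven by the same innovation Brownian motion $\overline{W}$) gives the sandwich $\underline{X}^0_t\leq X_t\leq \overline{X}^0_t$, and hence $v_0(x)\leq v(x,\pi)\leq v_1(x)$ directly. You instead stay in the original formulation and condition on the Bernoulli variable $\mu$, obtaining $\Psi(x,\pi,\tau)=\pi\,\mathbb{E}_x[e^{-r\tau}(e^{\overline{X}^0_\tau}-\kappa)]+(1-\pi)\,\mathbb{E}_x[e^{-r\tau}(e^{\underline{X}^0_\tau}-\kappa)]$ for each $\mathbb{F}^X$-stopping time; the upper bound then follows from subadditivity of the supremum (and in fact yields the sharper $v\leq \pi v_1+(1-\pi)v_0$), while the lower bound follows by plugging in the admissible rule $\tau_0^*$ and using that the hitting time of $x_0^*$ is pathwise smaller under the larger drift. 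The paper's comparison is shorter and symmetric in the two bounds; your decomposition is slightly more involved for the lower bound but yields the extra convexity-type estimate $v(x,\pi)\leq \pi v_1(x)+(1-\pi)v_0(x)$, which is of independent interest (cf.\ the discussion of the value of information in Section~\ref{Section: Comparative Statics Analysis}).
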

\begin{proof} We prove the claims separately. \\[0.1cm]
i) Let $\pi_2 > \pi_1$ and $(x,\pi_2)\in \mathcal{S}_1$. We thus have $x \geq a(\pi_2)$ and $v(x, \pi_2)=e^x-\kappa$. Since $\pi \mapsto v(x,\pi) $ is nondecreasing,  $v(x,\pi_1) \leq v(x,\pi_2) = e^x-\kappa$, which, together with $v(x, \pi_1) \geq (e^x-\kappa)$, gives $(x, \pi_1) \in \mathcal{S}_1$. Therefore, $a(\pi_2) \geq a(\pi_1)$.  \\[0.1cm]
ii) Let $(\pi_n)_n$ be a sequence such that $\pi_n \uparrow \pi$. Due to i), the sequence $a(\pi_n)$ is increasing as $n \to \infty$ and $a(\pi_n) \leq a (\pi)$. Consequently, there exists $\lim_n a(\pi_n) =: a(\pi -)$ and $a(\pi -) \leq a(\pi)$. Because $v(a(\pi_n), \pi_n) = e^{a(\pi_n)} - \kappa$ for all $n \in \mathbb{N}$, by lower-semicontinuity of $(x, \pi) \mapsto v(x,\pi)$ we find $v(a(\pi -), \pi) = e^{a(\pi -)} -\kappa$. Hence, $a(\pi) \leq a(\pi^-)$ and thus $\lim_n a(\pi_n) = a(\pi)$. \\[0.1cm]
iii) Recall $v_0$ and $v_1$ of \eqref{Value function v0} and \eqref{OSP with mu 1}, the value functions in the optimal stopping problems with full information when either $\mu \equiv \mu_0$ or $\mu \equiv \mu_1$. The associated continuation regions are given by 
\begin{align*}
\{ x \in \mathbb{R}:~x \geq x_1^* \} = \{x \in \mathbb{R}:~v_1(x) \leq e^x -\kappa \} \quad \text{and} \quad  \{ x \in \mathbb{R}:~x\geq x_0^* \} = \{ x \in \mathbb{R}:~v_0 (x) \leq e^x -\kappa \},
\end{align*}
where $x_0^*$ and $x_1^* $ are the optimal execution thresholds (cf. \eqref{x0*} and Remark \ref{Remark: Full Info mu1}). Recalling $\mu_0 < \mu_1$ and $\Pi_t \in (0,1)$ for $\pi \in (0,1)$, we have $\underline{X}^0_t \leq X_t \leq \overline{X}_t^0$ $\mathbb{P}$-a.s.~for any $t \geq 0$, due to classical comparison arguments and where $\underline{X}^0$ and $\overline{X}^0$ denote the solutions to \eqref{Dyn underline X} and \eqref{Dyn Xoverline} with $\xi_t \equiv 0$. Thus, $v_0 (x) \leq v(\pi ,x) \leq v_1 (x)$, which implies
\begin{align*}
 \{ x \in \mathbb{R}:~v_1 (x) \leq e^x -\kappa \} \subset \{ (x, \pi)\in \mathbb{R}\times (0,1):~v(x,\pi) \leq e^x - \kappa \} \subset \{x \in \mathbb{R}:~ v_0(x) \leq e^x -\kappa \}, 
\end{align*}
and the latter, combined with \eqref{boundary a}, allows to conclude that $x_0^* \leq a(\pi) \leq x_1^*$. \qed
\end{proof}
\section{Decoupling Change of Measure and a new optimal selling problem}\label{Section: Decoupling Change of Measure} 
We notice that the underlying dynamics in \eqref{Dynamics Opt Stop Problem 1} are coupled. In order to derive further results about the properties of the optimal stopping problem \eqref{Value fct of opt stopp} and its associated free boundary, it is useful to adress the problem under a different probability measure. With reference to related contributions (cf. De Angelis \cite{DeA}, Ekstr\"{o}m and Lu \cite{EkLu}, Johnson and Peskir \cite{JoPe} and Shiryaev \cite{Sh} and references therein), we introduce the so-called \textit{likelihood ratio process} via
\begin{align*}
\Phi_t := \frac{\Pi_t}{1 - \Pi_t}, \qquad t \geq 0.
\end{align*} 
Through an application of It\^{o}'s formula we can derive its associated dynamics, given by 
\begin{align}\label{Dynamics Phi}
d \Phi_t = \gamma \Phi_t (\gamma \Pi_t dt + d \overline{W}_t ), \qquad \Phi_0 = \varphi := \frac{\pi}{1 - \pi},
\end{align}
and we aim to remove its dependency on the process $\Pi$ through a change of measure. For a fixed $T>0$, we define the measure $\mathbb{Q}_T \sim \mathbb{P}$ on $(\Omega, \mathcal{F}_T)$ via the Radon-Nikodym derivative 
\begin{align}\label{Process eta change of measure}
\eta_T := \frac{d Q_T}{d \mathbb{P}} := \exp \Big( - \int_0^T \gamma \Pi_s d \overline{W}_s - \frac12 \int_0^T \gamma^2 \Pi_s^2 ds \Big),
\end{align}
and notice that the process 
\begin{align}\label{Brownian Motion Girsanov}
dB_t = d \overline{W}_t + \gamma \Pi_t dt,
\end{align}
is a Brownian motion under $\mathbb{Q}_T$ on $[0,T]$. 
Rewriting the state process $(X,\Phi)$ under $\mathbb{Q}_T$ then yields
\begin{align}\label{Dynamics after Girsanov}
\begin{cases}
d X_t = \mu_0 dt + \sigma d B_t, & t \in (0,T], \quad X_0 = x, \\
d \Phi_t = \gamma \Phi_t d B_t, &t \in (0,T], \quad \Phi_0 = \varphi,
\end{cases}
\end{align}
and we notice that the processes decouple under this formulation. In the following, when needed, we will write $\mathbb{E}_{(x, \varphi)}^{\mathbb{Q}_T} $ to denote the expectation under $\mathbb{Q}_T$,  conditioned on $X_0=x, \Phi_0 = \varphi$. In order to rewrite problem (\ref{Value fct of opt stopp}) in terms of the new variables $(X, \Phi)$, we introduce 
\begin{align*}
\Theta_t := \frac{1+\Phi_t}{1+\varphi} , \qquad t \in [0,T],
\end{align*} 
and by an application of It\^{o}'s formula, it can be verified that $\Theta$ admits the representation
\begin{align}\label{Process Theta Change of measure}
\Theta_t = \exp \Big( \int_0^t \gamma \Pi_s d \overline{W}_s + \frac12 \int_0^t \gamma^2 \Pi_s^2 ds \Big) = \frac{1}{\eta_t}, \quad \qquad t \in [0,T].
\end{align}
Upon using \eqref{Process eta change of measure} and \eqref{Process Theta Change of measure}, we find 
\begin{align}\label{Change of measure Equality}
\mathbb{E}_{(x, \pi)} [e^{-r(\tau \wedge T)} (e^{X_{\tau \wedge T}} - \kappa) ] 
&= \mathbb{E}_{(x,\pi)} [ e^{-r(\tau \wedge T)} (e^{X_{\tau \wedge T}} -\kappa) \eta_{\tau \wedge T} \Theta_{\tau \wedge T} ] \nonumber \\
&= \mathbb{E}^{\mathbb{Q}_T}_{(x, \varphi)} \Big[ e^{-r(\tau \wedge T)} (e^{X_{\tau \wedge T}} -\kappa) \frac{1 + \Phi_{\tau \wedge T}}{1+\varphi} \Big] \nonumber \\
&=(1 + \varphi)^{-1} \mathbb{E}^{\mathbb{Q}_T}_{(x, \varphi)} [ e^{-r(\tau \wedge T)} (e^{X_{\tau \wedge T}} -\kappa) (1 + \Phi_{\tau \wedge T}) ],
\end{align}
for any stopping time $\tau$ and $(x, \varphi) \in \mathbb{R} \times (0,\infty)$. With regard to \eqref{Change of measure Equality} we introduce the stopping problems
\begin{align*}
v(x,\pi; T) &:= \sup_\tau \mathbb{E}_{(x,\pi)} [e^{-r(\tau \wedge T)}(e^{X_{\tau \wedge T}} -\kappa)], \\
\text{and} \qquad \quad 
v^{\mathbb{Q}_T} (x,\varphi; T) &:= \sup_\tau \mathbb{E}^{\mathbb{Q}_T}_{(x,\varphi)} [e^{-r(\tau \wedge T)}(e^{X_{\tau \wedge T}} -\kappa)(1 + \Phi_{\tau \wedge T})],
\end{align*}
and notice that \eqref{Change of measure Equality} implies $v^{\mathbb{Q}_T} (x, \varphi ; T)= (1 + \varphi ) v(x , \varphi/(1 + \varphi) ; T)$ for fixed $T>0$.
 However, since the measure $\mathbb{Q}_T$ changes with $T$, passing to the limit $T \to \infty$ in the latter expression \eqref{Change of measure Equality} requires a bit of care. To this end, we define a probability space $(\ttilde{\Omega}, \ttilde{\mathbb{F}}, \ttilde{\mathbb{Q}})$ with a Brownian motion $\ttilde{B}$ and a filtration $\ttilde{\mathbb{F}} = (\ttilde{\mathcal{F}}_t)_{t \geq 0}$. Moreover, we let $(\ttilde{X}, \ttilde{\Phi})$ be the strong solution to the stochastic differential equation \eqref{Dynamics after Girsanov} driven by the Brownian motion $\ttilde{B}$ instead of $B$. Let $\ttilde{\mathbb{E}}_{(x, \varphi)} [\cdot]$ denote the expectation under $\ttilde{\mathbb{Q}}$ and define the stopping problems
 \begin{align*}
\overline{v} (x,\varphi; T) = \sup_\tau \ttilde{\mathbb{E}}_{(x,\varphi)} [e^{-r(\tau \wedge T)}(e^{\ttilde{X}_{\tau \wedge T}} -\kappa)(1 + \ttilde{\Phi}_{\tau \wedge T})], \quad 
\overline{v} (x,\varphi) = \sup_\tau \ttilde{\mathbb{E}}_{(x, \varphi)} [e^{-r \tau }(e^{\ttilde{X}_\tau} -\kappa)(1 + \ttilde{\Phi}_{\tau})].
 \end{align*} 
 Due to the equivalence in laws of the process $(\ttilde{X}_t, \ttilde{\Phi}_t, \ttilde{B}_t)_{t \geq 0}$ under $\ttilde{\mathbb{Q}}$ and the process $(X_t, \Phi_t , B_t)_{t \geq 0}$ under $\mathbb{Q}_T$ on $[0,T]$, we have $v^{\mathbb{Q}_T} (x,\varphi; T) = \overline{v}(x, \varphi ;T)$. Moreover, upon using Fatou's lemma and simple comparison arguments, one can show that 
 \begin{align*}
 \lim_{T \to \infty} v(x, \pi; T) = v(x,\pi) \qquad \text{as well as} \qquad \lim_{T \to \infty} \bar{v} (x, \varphi; T)  = \overline{v} (x, \varphi).
 \end{align*}
Hence, we finally obtain 
\begin{align}\label{v and vbar equivalence}
\overline{v} (x ,\varphi) &= \lim_{T \to \infty} \overline{v} (x, \varphi; T) = \lim_{T \to \infty} v^{\mathbb{Q}_T} (x, \varphi; T) \nonumber \\
&= (1 + \varphi) \lim_{T \to \infty} v(x , \varphi/(1 + \varphi) ;T) = (1 + \varphi) v(x, \varphi/(1 + \varphi)).
\end{align}
For the sake of clarity - and with a slight abuse of notation - from now on we simply write $(\Omega, \mathbb{F},(\mathcal{F}_t)_{t \geq 0},\mathbb{Q}, \allowbreak \mathbb{E}^\mathbb{Q},X ,\Phi, B)$ instead of $(\ttilde{\Omega}, \ttilde{\mathbb{F}}, (\ttilde{\mathcal{F}}_t)_{t \geq 0}, \ttilde{\mathbb{Q}}, \ttilde{\mathbb{E}}, \ttilde{X}, \ttilde{\Phi}, \ttilde{B})$. Henceforth, we thus study the optimal stopping problem 
\begin{align}\label{value fct opt after girsanov}
\overline{v} (x, \varphi) = \sup_\tau \mathbb{E}^\mathbb{Q}_{(x,\varphi)} [e^{-r \tau} (e^{X_\tau} -\kappa) (1 + \Phi_\tau) ].
\end{align}
In the sequel, we will often write $\mathbb{E}_{(x,\varphi)}^\mathbb{Q} [ f (X_t , \Phi_t )] = \mathbb{E}^\mathbb{Q} [f(X_t^x, \Phi_t^\varphi)]$, where $(X_t^x, \Phi_t^\varphi )_{t \geq 0}$ is the unique strong solution to \eqref{Dynamics after Girsanov}. The continuation and stopping region associated to this problem are then given by 
\begin{align} \label{continuation after girsanov}
\mathcal{C}_2 &:= \{ (x, \varphi) \in \mathbb{R} \times (0,\infty) : \quad \overline{v}(x,\varphi) > (e^x- \kappa)(1+ \varphi) \}, \\ \label{stopping after girsanov} \mathcal{S}_2 &:= \{ (x, \varphi) \in \mathbb{R} \times (0,\infty) : \quad \overline{v}(x,\varphi) = (e^x- \kappa)(1+ \varphi) \}.
\end{align}
With regard to the lower-semicontinuity of $v$ and \eqref{v and vbar equivalence}, we find that $(x,\varphi) \mapsto \overline{v}(x, \varphi)$ is lower-semicontinuous as well. Hence, the stopping region $\mathcal{S}_2$ of \eqref{stopping after girsanov} is a closed set, while the continuation region $\mathcal{C}_2$ of \eqref{continuation after girsanov} is open. Also, $\tau^* := \tau^* (x , \varphi) := \inf \{ t \geq 0:~(X_t^x , \Phi_t^\varphi) \in \mathcal{S}_2 \}$ is optimal by Peskir and Shiryaev \cite{PeSh}, whenever $\mathbb{Q}$-a.s.~finite. Furthermore, we define
\begin{align}\label{boundary b}
b(\varphi) := \inf \{ x \in \mathbb{R}: \quad \overline{v}(x,\varphi) \leq (e^x-\kappa)(1+ \varphi) \},
\end{align}
with $\inf \emptyset = \infty$. In the following lemma, we derive some preliminary properties of the value function \eqref{value fct opt after girsanov}. In light of the relation \eqref{v and vbar equivalence} we notice that some of the following results are a direct consequence of Lemma \ref{Lemma: Properties of v}.
\begin{lemma}\label{Lemma: Properties of vbar}
The value function $\overline{v}$ of (\ref{value fct opt after girsanov}) is such that 
\begin{itemize}
\item[i)] $0 \leq \overline{v}(x,\varphi) \leq K_1 e^x (1 + \varphi)$ for all $(x, \varphi) \in \mathbb{R} \times (0, \infty)$ and some $K_1 > 0$;
\item[ii)] $x \mapsto \overline{v}(x, \varphi)$ is nondecreasing;
\item[iii)] $\varphi \mapsto \overline{v} (x, \varphi)$ is nondecreasing; 
\item[iv)] $(x, \varphi)\mapsto \overline{v}(x, \varphi)$ is locally Lipschitz over $\mathbb{R} \times (0,\infty)$; 
\item[v)] $\varphi \mapsto \overline{v}(x, \varphi)$ and $x \mapsto \overline{v} (x, \varphi)$ are convex. 
\end{itemize}
\end{lemma}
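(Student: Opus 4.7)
My plan is to exploit the decoupling in \eqref{Dynamics after Girsanov}: $X^x_t = x + \mu_0 t + \sigma B_t$ is affine in $x$ and $\Phi^\varphi_t = \varphi M_t$ is linear in $\varphi$, where $M_t := \exp(\gamma B_t - \tfrac{1}{2}\gamma^2 t)$ is a $\mathbb{Q}$-martingale. For the upper bound in (i), apply It\^o's formula to $N_t := e^{-rt}e^{X_t}(1+\Phi_t)$. Using $\sigma\gamma = \mu_1 - \mu_0$, the drift of $N$ computes to
\[
e^{-rt}e^{X_t}\bigl[(\mu_0+\tfrac{1}{2}\sigma^2 - r) + \Phi_t(\mu_1+\tfrac{1}{2}\sigma^2 - r)\bigr]\,dt,
\]
which is non-positive because Assumption \ref{Assumption: Well-posedness} yields $r > \mu_1+\tfrac12\sigma^2 > \mu_0+\tfrac12\sigma^2$. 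Hence $N$ is a non-negative local supermartingale, and therefore a true supermartingale by Fatou. Optional stopping gives $\mathbb{E}^{\mathbb{Q}}_{(x,\varphi)}[e^{-r\tau}e^{X_\tau}(1+\Phi_\tau)] \leq e^x(1+\varphi)$, and majorizing $(e^{X_\tau}-\kappa)(1+\Phi_\tau) \leq e^{X_\tau}(1+\Phi_\tau)$ yields $\overline{v}(x,\varphi)\leq e^x(1+\varphi)$, i.e.\ $K_1 = 1$. The lower bound $\overline{v}\geq 0$ comes from admitting $\tau = +\infty$ with the convention $e^{-r\tau}(\cdot)\equiv 0$ (equivalently, evaluating at $\tau_T = T$ and letting $T\to\infty$, since the supermartingale estimate forces the value to vanish in the limit).

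\textbf{Monotonicity and convexity (ii), (iii), (v).} For (ii), $x\mapsto (e^{X_\tau^x}-\kappa)(1+\Phi_\tau^\varphi)$ is pathwise nondecreasing since $1+\Phi_\tau^\varphi>0$, and monotonicity passes to expectation and supremum. For (v), pathwise convexity of $y\mapsto e^y - \kappa$ combined with the affine dependence $X_t^x = x + \mu_0 t + \sigma B_t$ and the positive factor $1+\Phi_t^\varphi$ gives convexity of the integrand in $x$; convexity in $\varphi$ is automatic because $\varphi \mapsto (e^{X_\tau}-\kappa)(1+\varphi M_\tau)$ is affine. Both properties survive expectation and suprema. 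For (iii), I invoke the identification \eqref{v and vbar equivalence}, $\overline{v}(x,\varphi) = (1+\varphi)\,v(x,\varphi/(1+\varphi))$: the map $\varphi\mapsto \varphi/(1+\varphi)$ is nondecreasing, $v(x,\cdot)$ is nondecreasing by Lemma \ref{Lemma: Properties of v}, and $v\geq 0$ (take $\tau = +\infty$), so $\overline{v}$ is a product of two nonnegative nondecreasing functions of $\varphi$.

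\textbf{Local Lipschitz (iv) and main obstacle.} Fix a compact $K\subset\mathbb{R}\times(0,\infty)$, $(x_i,\varphi_i)\in K$ for $i=1,2$, and write $\bar X_t := e^{X_t^0}$. The algebraic identity
\[
(e^{X_\tau^{x_1}}-\kappa)(1+\Phi_\tau^{\varphi_1}) - (e^{X_\tau^{x_2}}-\kappa)(1+\Phi_\tau^{\varphi_2}) = (e^{x_1}-e^{x_2})\bar X_\tau(1+\Phi_\tau^{\varphi_1}) + (\varphi_1-\varphi_2)(e^{X_\tau^{x_2}}-\kappa)M_\tau,
\]
together with $|e^{X_\tau^{x_2}}-\kappa|\leq e^{X_\tau^{x_2}}+\kappa$, reduces the Lipschitz estimate to bounding $\mathbb{E}^{\mathbb{Q}}[e^{-r\tau}\bar X_\tau(1+\Phi_\tau^{\varphi_1})]$, $\mathbb{E}^{\mathbb{Q}}[e^{-r\tau}\bar X_\tau M_\tau]$ and $\mathbb{E}^{\mathbb{Q}}[e^{-r\tau}M_\tau]$. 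Each of the last two is at most $1$ by an It\^o computation showing that $e^{-rt}\bar X_t M_t$ and $e^{-rt}M_t$ are non-negative supermartingales (again thanks to $r>\mu_1+\tfrac12\sigma^2$ and the martingale property of $M$), while the first is at most $1+\varphi_1$ by splitting with $\Phi_\tau^{\varphi_1} = \varphi_1 M_\tau$. A symmetric estimate swapping the roles of the indices, combined with $|e^{x_1}-e^{x_2}|\leq C_K|x_1-x_2|$ on $K$, gives the desired local Lipschitz bound. The main technical point is isolating the correct supermartingales for $(X,\Phi)$ in the step above and in (i): once that is done, the mixed It\^o term $\sigma\gamma\Phi_t$ coming from $\langle X,\Phi\rangle$ is exactly absorbed by the condition $r > \mu_1 + \tfrac12\sigma^2$, and the remaining claims follow either pathwise or from the identification \eqref{v and vbar equivalence}.
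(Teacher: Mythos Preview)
Your argument is correct. The overall structure matches the paper's proof, but your execution differs in a few places worth noting.

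For (i), the paper obtains the upper bound by undoing the Girsanov change of measure, writing $\mathbb{E}^{\mathbb{Q}}_{(x,\varphi)}[e^{-r\tau}(e^{X_\tau}-\kappa)(1+\Phi_\tau)] = (1+\varphi)\,\mathbb{E}_{(x,\pi)}[e^{-r\tau}(e^{X_\tau}-\kappa)]$ and then bounding $e^{X_\tau}\leq e^{x+\mu_1\tau+\sigma W_\tau}$ under $\mathbb{P}$. Your direct It\^o computation showing that $e^{-rt}e^{X_t}(1+\Phi_t)$ is a $\mathbb{Q}$-supermartingale (with the cross-variation term $\sigma\gamma\Phi_t$ absorbed into the $\mu_1$ coefficient) is a clean alternative that stays entirely under $\mathbb{Q}$ and in fact yields the sharp constant $K_1=1$. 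For (iii), the roles are reversed: the paper argues directly under $\mathbb{Q}$ by taking $\tau^*$ optimal for $\overline v(x,\varphi)$ and using that $e^{X_{\tau^*}}-\kappa\geq 0$ at the optimal time (because $\{e^x<\kappa\}\subset\mathcal{C}_2$), whereas you invoke the identification \eqref{v and vbar equivalence} together with $v\geq 0$ and Lemma~\ref{Lemma: Properties of v}. For (iv), the paper splits via the triangle inequality into an $x$-difference (handled through $v$) and a $\varphi$-difference (handled directly); your single algebraic decomposition under $\mathbb{Q}$, combined with the supermartingale bounds on $e^{-rt}\bar X_t M_t$ and $e^{-rt}M_t$, gives the same Lipschitz constants in one stroke. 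Parts (ii) and (v) are essentially identical to the paper's.

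In short, both proofs rely on the same two ingredients---the decoupled, affine/linear structure of $(X^x,\Phi^\varphi)$ under $\mathbb{Q}$ and the identification \eqref{v and vbar equivalence}---but you lean more systematically on supermartingale estimates under $\mathbb{Q}$, while the paper alternates between working under $\mathbb{Q}$ and pulling back to $v$ under $\mathbb{P}$.
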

\begin{proof}
Property ii) follows from Lemma \ref{Lemma: Properties of v} i), upon using equality \eqref{v and vbar equivalence}. We prove the remaining claims separately. \\[0.1cm] 
i) For the lower bound, we notice that $\{ (x,\varphi) \in \mathbb{R} \times (0,\infty) : ~(e^x- \kappa) < 0 \} \subset \mathcal{C}_2$. Hence, since $\Phi^\varphi \geq 0$ a.s.,  we have $\overline{v}(x, \varphi) \geq 0$ for all $(x,\varphi) \in \mathbb{R} \times (0, \infty)$. For the upper bound, we observe that for any stopping time $\tau$ 
\begin{align*}
\mathbb{E}_{(x, \varphi)}^\mathbb{Q} \big[ e^{-r \tau} (e^{X_\tau} - \kappa) (1 + \Phi_\tau ) \big] 
&= (1 + \varphi)\mathbb{E}_{(x, \pi)} \big[ e^{-r \tau} ( e^{X_\tau} - \kappa) \big] \\
&\leq (1 + \varphi)\mathbb{E} \big[ e^{-r \tau} e^{x + \mu_1 \tau + \sigma W_\tau}  \big] \leq K_1 e^x (1 + \varphi),
\end{align*}
for $\pi = \varphi/(1 + \varphi)$ and the last inequality follows from standard estimates upon using Assumption \ref{Assumption: Well-posedness}. \\[0.1cm]
iii) Let $\varphi, \varphi' \in (0, \infty)$ with $\varphi' > \varphi$ and notice that $\Phi_t^\varphi = \varphi e^{- \frac12 \gamma^2 t + \gamma B_t}$. For $x\in \mathbb{R}$ and $\tau^* := \tau^* (x, \varphi)$ optimal for $\overline{v}(x,\varphi)$ we have
\begin{align*}
\overline{v}(x, \varphi') - \overline{v}(x, \varphi) &\geq \mathbb{E}_{(x, \varphi')}^\mathbb{Q} \big[ e^{-r \tau^*} \big( e^{X_{\tau^*}} - \kappa \big) \big( 1 + \Phi_{\tau^*} \big) \big] - \mathbb{E}_{(x, \varphi)}^\mathbb{Q} \big[  e^{-r \tau^*} \big( e^{X_{\tau^*}} - \kappa \big) \big( 1 + \Phi_{\tau^*} \big) \big] \\
&= \mathbb{E}^\mathbb{Q} \big[ e^{-r \tau^*} \big( e^{X_{\tau^*}^x} - \kappa\big) (\varphi' - \varphi)e^{- \frac12 \gamma^2 \tau^* + \gamma B_{\tau^*}} \big] \geq 0,
\end{align*}
where the last inequality exploits that $\{ (x,\varphi) \in \mathbb{R} \times (0,\infty): ~(e^x - \kappa)< 0\} \subset \mathcal{C}_2$, and the claim follows.  \\[0.1cm]
iv) Let $x,x' \in \mathbb{R}$, $\pi \in (0,1)$ and $\varphi,  \varphi' \in (0, \infty)$. Recall $v$ of \eqref{Value fct of opt stopp}. Again, standard estimates  yield 
\begin{align*}
\vert v(x,\pi) - v(x' , \pi) \vert \leq K_1 \vert e^x - e^{x'} \vert, \qquad \text{as well as} \qquad \vert \overline{v}(x, \varphi) - \overline{v}(x, \varphi') \vert \leq K_2 e^x \vert \varphi - \varphi' \vert ,
\end{align*}
for some $K_1, K_2 >0$. Hence, using \eqref{v and vbar equivalence}, we obtain
\begin{align}\label{vbar lipschitz}
\vert \overline{v} (x, \varphi) - \overline{v}(x', \varphi') \vert &\leq \vert \overline{v}(x, \varphi) - \overline{v}(x' ,\varphi) \vert + \vert \overline{v}(x', \varphi) - \overline{v}(x', \varphi') \vert \nonumber \\
&\leq  K_1 (1 + \varphi ) \vert e^x - e^{x'} \vert + K_2  e^{x'} \vert \varphi - \varphi' \vert,
\end{align}
and thus the locally-Lipschitz property follows. \\
v) We first prove convexity regarding $\varphi \in (0,\infty)$. For $\varphi_1, \varphi_2 \in (0,\infty)$, $x \in \mathbb{R}$ and $\lambda \in (0,1)$ we set $\overline{\varphi} := \lambda \varphi_1 + (1- \lambda) \varphi_2$ and obtain
\begin{align*}
\overline{v} (x, \overline{\varphi}) &= \sup_\tau \mathbb{E}^\mathbb{Q}_{(x, \overline{\varphi})} \Big[ e^{-r \tau } (e^{X_\tau} -\kappa) (1 + \overline{\varphi}
e^{ -\frac12 \gamma^2 \tau + \gamma B_\tau } ) \Big] \\
&\leq \sup_\tau \mathbb{E}^\mathbb{Q} \Big[ e^{-r \tau} (e^{X_\tau^x} -\kappa)  \lambda (1 + \varphi_1 e^{ -\frac12 \gamma^2 \tau + \gamma B_\tau } ) \Big]  \\
&\hspace*{4cm}+ \sup_\tau \mathbb{E}^\mathbb{Q} \Big[ e^{-r \tau} (e^{X_\tau^x} - \kappa ) (1- \lambda) (1 + \varphi_2 e^{ -\frac12 \gamma^2 \tau + \gamma B_\tau }) \Big] \\
& = \lambda \overline{v}(x, \varphi_1) + (1-\lambda)\overline{v} (x, \varphi_2) ,
\end{align*}
and the claim follows. Analogously, upon exploiting the convexity of $x \mapsto e^x$, one can prove the convexity of $x \mapsto \overline{v}(x, \varphi)$. \qed
\end{proof}
\begin{lemma} \label{Lemma C2,S2 in terms of b}
The continuation and stopping region regions as in \eqref{continuation after girsanov}-\eqref{stopping after girsanov} are such that
\begin{align*}
\mathcal{C}_2 = \{ (x,\varphi) \in \mathbb{R} \times (0,\infty) : ~ x < b(\varphi) \}, \qquad \mathcal{S}_2 = \{ (x,\varphi) \in \mathbb{R} \times (0,\infty): ~ x \geq  b(\varphi) \}.
\end{align*}
\end{lemma}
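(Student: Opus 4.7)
The plan is to mimic the proof of Lemma \ref{Lemma: C1,S1 in terms of a}: I will show that, for each fixed $\varphi \in (0,\infty)$, the map $x \mapsto \overline{u}(x,\varphi) := \overline{v}(x,\varphi) - (e^x - \kappa)(1+\varphi)$ is nonincreasing. Combined with the trivial inequality $\overline{u} \geq 0$ obtained by taking $\tau = 0$ in \eqref{value fct opt after girsanov}, this forces that whenever $\overline{u}(x_1,\varphi) = 0$ and $x_2 > x_1$ one also has $\overline{u}(x_2,\varphi) = 0$; in view of the definition \eqref{boundary b}, this yields both set equalities simultaneously.

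To produce a workable representation for $\overline{u}$, I will apply Dynkin's formula under $\mathbb{Q}$ to $G(x,\varphi) := (e^x - \kappa)(1+\varphi)$ along the decoupled dynamics \eqref{Dynamics after Girsanov}, whose infinitesimal generator reads
\[
\mathcal{L}^{\mathbb{Q}} = \mu_0\partial_x + \tfrac{1}{2}\sigma^2\partial_{xx} + \tfrac{1}{2}\gamma^2\varphi^2\partial_{\varphi\varphi} + \sigma\gamma\varphi\partial_{x\varphi},
\]
where the cross term arises from the common driver $B$ and one has $\sigma\gamma = \mu_1 - \mu_0$. A short computation gives
\[
(\mathcal{L}^{\mathbb{Q}} - r)G(x,\varphi) = e^x\bigl[(\mu_0 + \tfrac{1}{2}\sigma^2 - r) + \varphi(\mu_1 + \tfrac{1}{2}\sigma^2 - r)\bigr] + r\kappa(1+\varphi).
\]
The integrability required to send $\tau \uparrow \infty$ in Dynkin is supplied by the explicit identities $\mathbb{E}^{\mathbb{Q}}[e^{-rt}e^{X_t^x}] = e^x e^{(\mu_0+\frac{1}{2}\sigma^2-r)t}$ and $\mathbb{E}^{\mathbb{Q}}[e^{-rt}e^{X_t^x}\Phi_t^\varphi] = \varphi e^x e^{(\mu_1+\frac{1}{2}\sigma^2-r)t}$, both of which decay to zero by Assumption \ref{Assumption: Well-posedness}. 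This will deliver
\[
\overline{u}(x,\varphi) = \sup_{\tau}\mathbb{E}^{\mathbb{Q}}\!\left[\int_0^\tau e^{-rt}\bigl\{r\kappa(1+\Phi_t^\varphi) - e^{X_t^x}\bigl[(r-\mu_0-\tfrac{1}{2}\sigma^2) + \Phi_t^\varphi(r-\mu_1-\tfrac{1}{2}\sigma^2)\bigr]\bigr\}dt\right].
\]

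With this representation in hand, for $x_1 < x_2$ and $\tau^*$ optimal for $\overline{u}(x_2,\varphi)$, inserting $\tau^*$ as an admissible (generally suboptimal) competitor in the supremum defining $\overline{u}(x_1,\varphi)$ and subtracting yields
\[
\overline{u}(x_1,\varphi) - \overline{u}(x_2,\varphi) \geq \mathbb{E}^{\mathbb{Q}}\!\left[\int_0^{\tau^*}\!e^{-rt}\bigl(e^{X_t^{x_2}} - e^{X_t^{x_1}}\bigr)\bigl[(r-\mu_0-\tfrac{1}{2}\sigma^2) + \Phi_t^\varphi(r-\mu_1-\tfrac{1}{2}\sigma^2)\bigr]dt\right] \geq 0,
\]
because $x_2 > x_1$ gives $e^{X_t^{x_2}} > e^{X_t^{x_1}}$ pathwise, $\Phi_t^\varphi > 0$ almost surely, and both $r - \mu_0 - \tfrac{1}{2}\sigma^2$ and $r - \mu_1 - \tfrac{1}{2}\sigma^2$ are strictly positive by Assumption \ref{Assumption: Well-posedness}. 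The $r\kappa(1+\Phi_t^\varphi)$ contribution does not depend on $x$ and cancels out in the difference, so no additional sign analysis is needed. This produces the sought monotonicity.

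The only genuinely delicate step is the justification of $\tau \uparrow \infty$ in Dynkin's identity, which is handled by the decay estimates above. As a much shorter alternative route that bypasses Dynkin entirely, one can invoke \eqref{v and vbar equivalence}: setting $\pi := \varphi/(1+\varphi) \in (0,1)$ one obtains $\overline{u}(x,\varphi) = (1+\varphi)[v(x,\pi) - (e^x-\kappa)]$, so $(x,\varphi) \in \mathcal{S}_2$ if and only if $(x,\pi) \in \mathcal{S}_1$, and Lemma \ref{Lemma: C1,S1 in terms of a} applies verbatim, with the identification $b(\varphi) = a\bigl(\varphi/(1+\varphi)\bigr)$.
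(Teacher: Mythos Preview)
Your proof is correct and follows essentially the same route as the paper: apply Dynkin's formula under $\mathbb{Q}$ to $(e^x-\kappa)(1+\varphi)$ to obtain the integral representation of $\overline{u}$, then use the optimal stopping time for $\overline{u}(x_2,\varphi)$ as a competitor at $x_1<x_2$ and conclude via the sign conditions from Assumption \ref{Assumption: Well-posedness}. Your added integrability check and the alternative shortcut through \eqref{v and vbar equivalence} and Lemma \ref{Lemma: C1,S1 in terms of a} are both valid; the paper in fact derives the identity $b(\varphi)=a(\varphi/(1+\varphi))$ immediately \emph{after} this lemma rather than using it as a substitute proof, but either order works.
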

\begin{proof}
We proceed similarly to Lemma \ref{Lemma: C1,S1 in terms of a}. We first notice that the the second-order differential operator associated with the two-dimensional process $(X, \Phi)$ is such that
\begin{align}\label{Operator X Phi}
\mathcal{L}_{X,\Phi} f = \mu_0 \partial_x f + \frac12 \sigma^2 \partial_{xx}f  + \frac12 \gamma^2 \varphi^2 \partial_{\varphi \varphi}f  + \gamma \varphi \sigma \partial_{x \varphi}f, \qquad \forall f \in C^2 (\mathbb{R} \times (0,\infty) ),
\end{align} 
and apply Dynkin's formula to obtain
\begin{align}\label{Def ubar}
\overline{u} (x, \varphi) &:= \overline{v}(x,\varphi) - (e^x-\kappa)(1 + \varphi) \nonumber \\ 
&= \sup_\tau \mathbb{E}^\mathbb{Q}_{(x,\varphi)} \Big[ \int_0^\tau e^{-rt} \Big(  e^{X_t} (\mu_0 + \frac12 \sigma^2 -r) + r\kappa + \Phi_t \Big( e^{X_t} ( \mu_1 + \frac12 \sigma^2 -r) + r \kappa \Big) \Big) dt \Big].
\end{align}
For $x_2 > x_1$ and $\tau^* := \tau^* (x_2,\varphi)$ optimal for $ \overline{v} (x_2 , \varphi)$ we have
\begin{align*}
\overline{u}(x_1,\varphi) &- \overline{u}(x_2,\varphi) \\
&\geq
  \mathbb{E}^\mathbb{Q} \Big[ \int_0^{\tau^*} e^{-rt} \Big(  \big( e^{X_t^{x_2}} - e^{X_t^{x_1}} \big)  (r - \mu_0 - \frac12 \sigma^2 ) + \Phi_t \big(e^{X_t^{x_2}} - e^{X_t^{x_1}} \big)(r- \mu_1 + \frac12 \sigma^2 ) \Big) dt \Big]\\
& \geq 0,
\end{align*}
where the last inequality follows from $X^{x_2} \geq X^{x_1}$ $\mathbb{Q}$-a.s. and Assumption \ref{Assumption: Well-posedness}. 
Hence, for $(x_1, \varphi) \in \mathcal{S}_2$ and $x_2 > x_1$, we obtain $0 \leq \overline{u}(x_2 , \varphi) \leq \overline{u}(x_1, \varphi) = 0$ and the claim follows. \qed
\end{proof}
It is interesting to notice that there exists a one-to-one correspondence between the continuation regions $\mathcal{C}_1$ and $\mathcal{C}_2$ of \eqref{Continuation region C1} and \eqref{continuation after girsanov} as well as the stopping regions $\mathcal{S}_1$ and $\mathcal{S}_2$ of \eqref{Stopping region S1} and \eqref{stopping after girsanov}. Indeed, introducing the diffeomorphism  
\begin{align}\label{Transformation T}
T:= (T_1,T_2): \mathbb{R} \times (0,1) \to \mathbb{R} \times (0, \infty), \qquad (T_1(x , \pi) , T_2(x, \pi)) := \Big( x , \frac{\pi}{1-\pi} \Big),
\end{align}
with inverse 
\begin{align*}
T^{-1} (x, \varphi) := \Big( x , \frac{\varphi}{1 + \varphi} \Big), \qquad (x,\varphi) \in \mathbb{R} \times (0,\infty),
\end{align*}
one has
\begin{align*}
\mathcal{C}_2 = T ( \mathcal{C}_1 ) \qquad \text{as well as } \qquad \mathcal{S}_2 = T(\mathcal{S}_1). 
\end{align*}
Furthermore, upon using Lemma \ref{Lemma: C1,S1 in terms of a} and Lemma \ref{Lemma C2,S2 in terms of b}, we find that 
\begin{align}\label{Boundaries a b Transformation}
b(\varphi) = a \Big( \frac{\varphi}{1+\varphi} \Big) .
\end{align}
Due to this explicit relationship between the optimal stopping boundaries, we obtain some first results  on $b$ thanks to Lemma \ref{Lemma: Properties of boundary a}.
\begin{lemma}\label{Lemma: Properties boundary b}
The boundary $b(\varphi)$ of (\ref{boundary b}) is such that 
\begin{itemize}
\item[i)] $\varphi \mapsto b(\varphi)$ is nondecreasing on $(0,\infty)$;
\item[ii)] $\varphi \mapsto b(\varphi)$ is left-continuous;
\item[iii)] $b$ is bounded by  $x_0^* \leq b(\varphi)\leq x_1^*$ for all $\varphi \in (0,\infty)$, with $x_0^*$ and $x_1^*$ as in Lemma \ref{Lemma: Properties of boundary a}. 
\end{itemize}
\end{lemma}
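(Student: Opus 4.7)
The strategy is to leverage the explicit identity $b(\varphi)=a\bigl(\varphi/(1+\varphi)\bigr)$ established in \eqref{Boundaries a b Transformation}, together with the properties of $a$ already derived in Lemma \ref{Lemma: Properties of boundary a}. The key observation is that the map
\[
g:(0,\infty)\to(0,1),\qquad g(\varphi)=\frac{\varphi}{1+\varphi},
\]
is a strictly increasing homeomorphism (indeed, it is $C^\infty$ with $g'(\varphi)=(1+\varphi)^{-2}>0$, and it takes $(0,\infty)$ bijectively onto $(0,1)$). Therefore each claim on $b$ reduces to the corresponding claim on $a$ via composition.

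For (i), since $g$ is strictly increasing and $\pi\mapsto a(\pi)$ is nondecreasing on $(0,1)$ by Lemma \ref{Lemma: Properties of boundary a}(i), the composition $b=a\circ g$ is nondecreasing on $(0,\infty)$. For (ii), let $\varphi_n\uparrow\varphi$ in $(0,\infty)$. Then $g(\varphi_n)\uparrow g(\varphi)$ since $g$ is continuous and strictly increasing, and so Lemma \ref{Lemma: Properties of boundary a}(ii) yields
\[
\lim_{n\to\infty} b(\varphi_n)=\lim_{n\to\infty} a\bigl(g(\varphi_n)\bigr)=a\bigl(g(\varphi)\bigr)=b(\varphi),
\]
proving left-continuity. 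For (iii), since $g(\varphi)\in(0,1)$ whenever $\varphi\in(0,\infty)$, Lemma \ref{Lemma: Properties of boundary a}(iii) gives immediately $x_0^*\le a(g(\varphi))=b(\varphi)\le x_1^*$.

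There is no genuine obstacle here: the entire lemma is a transcription of Lemma \ref{Lemma: Properties of boundary a} through the increasing homeomorphism $g$. The only point that requires a mild check is that $g$ is continuous and strictly increasing (so as to preserve monotone convergence from the left), which is elementary. In particular, no new analysis of the value function $\overline{v}$ or of the process $(X,\Phi)$ under $\mathbb{Q}$ is needed at this stage.
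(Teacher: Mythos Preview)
Your proof is correct and follows exactly the approach indicated in the paper, which does not spell out a detailed argument but states that the lemma follows from the identity \eqref{Boundaries a b Transformation} together with Lemma \ref{Lemma: Properties of boundary a}. Your explicit verification that $g(\varphi)=\varphi/(1+\varphi)$ is a strictly increasing homeomorphism, and hence that the composition $b=a\circ g$ inherits monotonicity, left-continuity, and the bounds $x_0^*\le b\le x_1^*$, is precisely what the paper has in mind.
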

 The relationship \eqref{Boundaries a b Transformation} and the transformation \eqref{Transformation T} allow us to trace back our results from this section - as well as from the following section - to the initial optimal stopping problem \eqref{Value fct of opt stopp}. Moreover, \eqref{Boundaries a b Transformation} turns out to be valuable in the proof of Lemma \ref{Lemma: Properties boundary b}, since proving the monotonicity result i) as well as the boundedness iii) is not straightforward without exploiting the relation between $b$ and $a$ and the results of Lemma \ref{Lemma: Properties of boundary a}.
\section{A Parabolic Formulation} \label{Section: Parabolic Formulation}
Observe that the dynamics of the processes $X$ and $\Phi$ in \eqref{Dynamics after Girsanov} are driven by the same Brownian motion. In order to account for this degeneracy, we pass yet to another formulation of the optimal stopping problem. To this end, we rely on a transformation that reveals the true parabolic nature of the generator $\mathcal{L}_{X,\Phi}$ as in \eqref{Operator X Phi}; i.e.~that poses it in its canonical form (cf.~Folland \cite{Fol}). Define 
\begin{align}\label{Transformation Tbar}
\overline{T} := (\overline{T}_1 , \overline{T}_2) : \mathbb{R} \times (0,\infty) \to \mathbb{R}^2, \quad (\overline{T}_1 (x, \varphi) , \overline{T}_2 (x, \varphi)) := \Big( x, \frac{\sigma}{\gamma} \ln (\varphi) -x \Big), 
\end{align}
for any $(x,\varphi) \in \mathbb{R} \times (0,\infty)$, which is a diffeomorphism with inverse given by 
\begin{align}\label{Transformation Tbar Inverse}
\overline{T}^{-1} (x,z) := \Big(x, e^{\frac{\gamma}{\sigma} (x+z) } \Big), \qquad (x,z) \in \mathbb{R}^2 .
\end{align}
With regard to the transformation \eqref{Transformation Tbar} we can introduce the process
\begin{align}\label{Process Z}
Z_t = \frac{\sigma}{\gamma} \ln (\Phi_t) - X_t, \qquad t \geq 0,
\end{align}
and an application of It\^{o}'s formula reveals that its dynamics are given by 
\begin{align}\label{Dynamics Z}
dZ_t &= -\frac12 (\mu_1 + \mu_0) dt, \quad \qquad Z_0 = z := \frac{\sigma}{\gamma} \ln (\varphi) - x.
\end{align}
Furthermore, we can define the transformed version of the value function $\overline{v}$ of \eqref{value fct opt after girsanov} via
\begin{align}\label{Value function parabolic formulation}
\hat{v}(x,z) := \overline{v} \big(x, e^{\frac{\gamma}{\sigma}(x+z)} \big) = \sup_\tau \mathbb{E}^\mathbb{Q}_{(x,z)} \big[ e^{-r \tau} (e^{X_\tau} - \kappa) (1 + e^{ \frac{\gamma}{\sigma} (X_\tau + Z_\tau)} ) \big],   
\end{align}
for $(x,z) \in \mathbb{R}^2$ and where now $\mathbb{E}^\mathbb{Q}_{(x,z)} [\cdot ] = \mathbb{E}^\mathbb{Q} [ \cdot \vert X_0 = x, Z_0 =z ]$. In light of this explicit relationship between the value functions $\overline{v}$ and $\hat{v}$, we can conclude the following result from Lemma \ref{Lemma: Properties of vbar}.
\begin{lemma}\label{Lemma: vhat continuous}
The value function $\hat{v}(x,z)$ of \eqref{Value function parabolic formulation} is locally Lipschitz continuous over $\mathbb{R}^2$.
\end{lemma}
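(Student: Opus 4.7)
The plan is to exploit the explicit identity $\hat{v}(x,z) = \overline{v}(\overline{T}^{-1}(x,z)) = \overline{v}\bigl(x, e^{\gamma(x+z)/\sigma}\bigr)$, together with the locally Lipschitz continuity of $\overline{v}$ on $\mathbb{R}\times(0,\infty)$ established in part iv) of the preceding lemma, and the $C^\infty$-regularity of the transformation $\overline{T}^{-1}$. Indeed, local Lipschitz regularity is preserved under composition with a locally Lipschitz map, provided the image of compacts stays in a region where the outer function is well-controlled.

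Concretely, I would proceed as follows. Fix an arbitrary compact set $K \subset \mathbb{R}^2$. Since $(x,z) \mapsto e^{\gamma(x+z)/\sigma}$ is continuous and strictly positive, the image $\overline{T}^{-1}(K)$ is contained in some compact rectangle $K' \subset \mathbb{R}\times(0,\infty)$, i.e.\ it is bounded and bounded away from $\varphi=0$. By the Lipschitz estimate \eqref{vbar lipschitz} derived in the proof of Lemma~\ref{Lemma: Properties of vbar}, there exists a constant $L_{K'} > 0$ such that
\begin{equation*}
|\overline{v}(x_1,\varphi_1) - \overline{v}(x_2,\varphi_2)| \leq L_{K'}\bigl(|x_1-x_2| + |\varphi_1 - \varphi_2|\bigr)
\end{equation*}
for every $(x_i,\varphi_i) \in K'$, $i=1,2$.

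Next, since $\overline{T}^{-1}$ is $C^\infty$, its Jacobian is bounded on the compact set $K$, so there exists $L_K' > 0$ with
\begin{equation*}
\bigl|e^{\gamma(x_1+z_1)/\sigma} - e^{\gamma(x_2+z_2)/\sigma}\bigr| \leq L_K'\bigl(|x_1-x_2| + |z_1-z_2|\bigr), \qquad (x_i,z_i) \in K.
\end{equation*}
Combining these two estimates yields
\begin{equation*}
|\hat{v}(x_1,z_1) - \hat{v}(x_2,z_2)| \leq L_{K'}(1+L_K')\bigl(|x_1-x_2| + |z_1-z_2|\bigr),
\end{equation*}
which gives the desired local Lipschitz continuity on $K$; since $K$ was arbitrary, the claim follows on the whole of $\mathbb{R}^2$. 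I do not anticipate any serious obstacle here: the only subtle point is to observe that, although $\overline{v}$ is not globally Lipschitz (the constants in \eqref{vbar lipschitz} depend on $\varphi$ and on the exponential of $x$), compactness of $K$ is enough to control all such factors uniformly.
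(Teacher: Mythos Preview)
Your proposal is correct and is exactly the argument the paper has in mind: the paper does not spell out a proof but simply states that the lemma follows from the relation $\hat v(x,z)=\overline v\bigl(x,e^{\frac{\gamma}{\sigma}(x+z)}\bigr)$ together with Lemma~\ref{Lemma: Properties of vbar}~iv). Your write-up fills in precisely those details, namely that the composition of the locally Lipschitz $\overline v$ with the smooth diffeomorphism $\overline T^{-1}$ is again locally Lipschitz.
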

The associated continuation and stopping region are given by 
\begin{align}\label{Continuation region C3}
\mathcal{C}_3 &:= \{ (x,z) \in \mathbb{R}^2: \quad \hat{v}(x,z) > (e^x-\kappa)(1 + e^{\frac{\gamma}{\sigma}(x+z)}) \}, \\
\label{Stopping region S3}
\mathcal{S}_3 &:= \{ (x,z) \in \mathbb{R}^2: \quad \hat{v}(x,z) = (e^x-\kappa)(1 + e^{\frac{\gamma}{\sigma}(x+z)}) \}, 
\end{align}
where $\mathcal{C}_3$ is open and $\mathcal{S}_3$ is closed. Furthermore, the global diffeomorphism  \eqref{Transformation Tbar} implies that $\mathcal{C}_3 = \overline{T}(\mathcal{C}_2)$ as well as $\mathcal{S}_3 = \overline{T}(\mathcal{S}_2)$, with $\mathcal{C}_2$ and $\mathcal{S}_2$ as in \eqref{continuation after girsanov}-\eqref{stopping after girsanov}. Notice that the second-order infinitesimal generator associated to the process $(X, Z)$ is now such that 
\begin{align}\label{Operator L X Z}
\mathcal{L}_{X,Z}f = \mu_0 \partial_xf + \frac12 \sigma^2 \partial_{xx}f - \frac12 (\mu_1 + \mu_0) \partial_z f, \qquad \forall f \in C^{2,1} (\mathbb{R}^2). 
\end{align}
We can rely on standard arguments from classical PDE theory as well as optimal stopping theory (see, e.g., Karatzas and Shreve \cite{KaSh2}, Section 2.7, Th. 7.7) and obtain the following lemma.
\begin{lemma}\label{Lemma: vhat solves boundary value problem}
The value function $\widehat{v}$ of \eqref{v and vbar equivalence} is the unique classical $C^{2,1}$-solution to the boundary value problem
\begin{align}\label{Boundary Value Problem vbar}
(\mathcal{L}_{X,Z} -r )w = 0 \quad \text{in }\mathcal{R} \qquad \text{and} \quad w \vert_{\partial \mathcal{R}} = \hat{v} \vert_{\partial \mathcal{R}},
\end{align}
for $\mathcal{L}_{X,Z}$ as in \eqref{Operator L X Z} and any open set $\mathcal{R}$ such that its closure is contained in the continuation region $\mathcal{C}_3$ of \eqref{Continuation region C3}. In particular, $\hat{v} \in C^{2,1} (\mathcal{C}_3)$.  
\end{lemma}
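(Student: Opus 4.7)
The strategy is the classical two-step one: produce a classical $C^{2,1}$ solution to the Dirichlet problem on $\mathcal R$ by PDE methods, then identify it with $\hat v$ via a martingale argument. Since by Lemma \ref{Lemma: vhat continuous} the function $\hat v$ is locally Lipschitz, its restriction to $\partial \mathcal R$ furnishes continuous Dirichlet data. Setting $c_0 := -\tfrac12 (\mu_1+\mu_0)$ (the drift of $Z$), the equation $(\mathcal L_{X,Z} - r) w = 0$ rewrites as
\begin{align*}
c_0 \partial_z w = -\mu_0 \partial_x w - \tfrac{1}{2} \sigma^2 \partial_{xx} w + r w,
\end{align*}
so that, for $c_0 \neq 0$, $z$ plays the role of a time-like variable and the equation fits into the standard framework of linear parabolic equations with smooth (constant) coefficients. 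Classical results on the Dirichlet problem (see, e.g., Friedman, \emph{Partial Differential Equations of Parabolic Type}, Ch.~3) then yield existence of a solution $w \in C^{2,1}(\mathcal R) \cap C(\overline{\mathcal R})$ matching $\hat v$ on $\partial \mathcal R$.

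Next I would identify $w$ with $\hat v$. Fix $(x,z) \in \mathcal R$ and let $\tau_{\mathcal R} := \inf\{t\geq 0 : (X_t,Z_t) \notin \mathcal R\}$, which is $\mathbb Q$-a.s.\ finite (either the deterministic curve $t \mapsto z + c_0 t$ exits the compact projection of $\overline{\mathcal R}$ onto the $z$-axis in finite time, or the Brownian component $X$ does so). Applying It\^o's formula to $e^{-rt} w(X_t,Z_t)$ on $[0,\tau_{\mathcal R}]$ kills the drift term because $(\mathcal L_{X,Z}-r)w=0$ inside $\mathcal R$; the stochastic integral is a true martingale since $\partial_x w$ is bounded on the compact set $\overline{\mathcal R}$. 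Taking expectations and using $w = \hat v$ on $\partial \mathcal R$ gives
\begin{align*}
w(x,z) = \mathbb E^{\mathbb Q}_{(x,z)}\bigl[e^{-r\tau_{\mathcal R}} w(X_{\tau_{\mathcal R}}, Z_{\tau_{\mathcal R}})\bigr] = \mathbb E^{\mathbb Q}_{(x,z)}\bigl[e^{-r\tau_{\mathcal R}} \hat v(X_{\tau_{\mathcal R}}, Z_{\tau_{\mathcal R}})\bigr].
\end{align*}
On the other hand, because $\overline{\mathcal R} \subset \mathcal C_3$, the optimal stopping time $\tau^\ast$ for $\hat v$ satisfies $\tau^\ast \geq \tau_{\mathcal R}$ $\mathbb Q$-a.s., and the strong Markov property together with the standard optimal stopping characterisation (cf.\ Peskir and Shiryaev \cite{PeSh}, Ch.~1) yields the same identity
\begin{align*}
\hat v(x,z) = \mathbb E^{\mathbb Q}_{(x,z)}\bigl[e^{-r\tau_{\mathcal R}} \hat v(X_{\tau_{\mathcal R}}, Z_{\tau_{\mathcal R}})\bigr].
\end{align*}
Hence $w \equiv \hat v$ in $\mathcal R$; the very same It\^o argument applied to any other $C^{2,1}$ solution of \eqref{Boundary Value Problem vbar} shows that it must coincide with the right-hand side above, which gives uniqueness. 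Exhausting $\mathcal C_3$ by an increasing family of such $\mathcal R$'s yields $\hat v \in C^{2,1}(\mathcal C_3)$.

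The main obstacle I foresee is the degeneracy of $\mathcal L_{X,Z}$ in the $z$-direction: in the borderline case $\mu_0+\mu_1 = 0$ (where $c_0=0$) the operator collapses to an $x$-ODE parameterized by $z$, and the parabolic theory invoked above does not apply directly; this degenerate case would have to be treated by solving the resulting elliptic equation in $x$ explicitly and verifying smoothness in $z$ separately (or, alternatively, by invoking H\"ormander-type hypoellipticity via the deterministic propagation in $z$ combined with the $\partial_{xx}$-term). Beyond this technical point the argument is entirely standard.
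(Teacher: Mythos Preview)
Your proposal is correct and is precisely the standard route the paper invokes without detail: the authors give no proof and merely cite ``standard arguments from classical PDE theory as well as optimal stopping theory (see, e.g., Karatzas and Shreve \cite{KaSh2}, Section 2.7, Th.~7.7)'', which amounts exactly to your two-step scheme of parabolic existence on compactly contained rectangles followed by probabilistic identification via the martingale property of $e^{-rt}\hat v(X_t,Z_t)$ on $\{t<\tau^*\}$. Your caveat about the borderline case $\mu_0+\mu_1=0$ is a legitimate technical wrinkle that the paper glosses over.
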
 
In the following, we aim at investigating the geometry of the state space in the coordinates $(X,Z)$. To this end, we define the generalised inverse of the nondecreasing boundary $b$ by
\begin{align}\label{Inverse of b}
b^{-1} (x) := \inf \{ \varphi \in (0,\infty): ~ b(\varphi) > x \} ,
\end{align}
such that the continuation region $\mathcal{C}_2$ of \eqref{continuation after girsanov} rewrites as
\begin{align}\label{Continuation C2 in terms of b-1}
\mathcal{C}_2 = \{ (x,\varphi) \in \mathbb{R} \times (0,\infty):~ b^{-1} (x) < \varphi \}.
\end{align}
Since $\varphi \mapsto b(\varphi)$ is nondecreasing by Lemma \ref{Lemma: Properties boundary b}, we observe that 
\begin{align*}
(x , z) \in \mathcal{C}_3 ~ \Longleftrightarrow ~ (x, e^{\frac{\gamma}{\sigma}(x+z)}) \in \mathcal{C}_2 ~ \Longleftrightarrow ~ e^{\frac{\gamma}{\sigma}(x+z)} > b^{-1} (x) ~ \Longleftrightarrow ~ z > \frac{\sigma}{\gamma} \log (b^{-1} (x)) -x ,
\end{align*}
and by setting 
\begin{align}\label{Boundary c-1}
c^{-1} (x) := \frac{\sigma}{\gamma} \log (b^{-1} (x)) -x ,
\end{align}
we can rewrite \eqref{Continuation region C3} and \eqref{Stopping region S3} as
\begin{align}\label{C3 and S3 in terms of c-1}
\mathcal{C}_3 = \{ (x,z) \in \mathbb{R}^2:~z> c^{-1}(x) \} , \qquad \quad \mathcal{S}_3 = \{ (x,z) \in \mathbb{R}^2:~z \leq c^{-1}(x) \}.
\end{align}
In contrast to the optimal stopping problems in the formulations \eqref{Value fct of opt stopp} and \eqref{value fct opt after girsanov}, deriving the monotonicity of the boundary $x \mapsto c^{-1}(x)$ is not straightforward. Moreover - and  differently to related contributions such as Federico et al.\,\cite{FeFe} - we cannot trace it back to the monotonicity of the boundary $b$ of \eqref{boundary b}, since its generalised inverse $b^{-1}$ is nondecreasing as well, and this does not imply monotonicity of $x \mapsto c^{-1} (x)$. To this end, we follow and adapt arguments presented in Section 4.4 of De Angelis \cite{DeA}, which studies separately the two cases in which the deterministic process $Z$ as in \eqref{Dynamics Z} is either increasing ($\mu_0 + \mu_1 \geq 0$) or decreasing ($\mu_0 + \mu_1 < 0$).
\\
For the following analysis, it is useful to define
\begin{align}\label{u hat}
\hat{u}(x,z) := \hat{v}(x,z) - (e^x-\kappa)(1+e^{\frac{\gamma}{\sigma}(x+z)}) ,
\end{align} 
as well as 
\begin{align}\label{Definition g(x,z)}
g(x,z) &:= (\mathcal{L}_{X,Z} -r) \big(( e^x - \kappa)(1 + e^{\frac{\gamma}{\sigma}(x+z)} ) \big) \nonumber \\
&= e^x \big( \frac12 \sigma^2 + \mu_0 -r \big) + r \kappa + e^{\frac{\gamma}{\sigma}(x+z) } \big( e^x \big( \frac{1}{2}\sigma^2 + \mu_1 - r \big) + r \kappa \big) ,
\end{align}
and we observe that an application of Dynkin's formula implies 
\begin{align}\label{Dynkin for vhat}
\hat{u}(x,z) = \sup_\tau \mathbb{E}^\mathbb{Q}_{(x,z)} \Big[ \int_0^\tau  e^{-rt} g(X_t,Z_t) dt \Big], \qquad (x,z) \in \mathbb{R}^2 .
 \end{align}
\begin{proposition} \label{Proposition Mono 1}
Let $\mu_0 + \mu_1 \geq 0$. Then there exists a nondecreasing function $c: \mathbb{R} \to \mathbb{R}$ such that the continuation region $\mathcal{C}_3$ of \eqref{Continuation region C3} rewrites as 
\begin{align}\label{Continuation region C3 in terms of c}
\mathcal{C}_3 = \{ (x,z) \in \mathbb{R}^2:~ x < c(z) \} .
\end{align}
\end{proposition}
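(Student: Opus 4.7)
The plan is to establish the existence of a nondecreasing $c : \mathbb{R} \to \mathbb{R}$ yielding \eqref{Continuation region C3 in terms of c}. To this end, I set
\begin{align*}
c(z) := \sup\{ x \in \mathbb{R} : (x,z) \in \mathcal{C}_3\}, \qquad z \in \mathbb{R},
\end{align*}
with the convention $\sup \emptyset = -\infty$, and reduce the proof to two properties: (a) for every $z \in \mathbb{R}$, the $x$-section $\{x : (x,z) \in \mathcal{C}_3\}$ coincides with $(-\infty, c(z))$; (b) $z \mapsto c(z)$ is nondecreasing. The bounds $x_0^* \leq c(z) \leq x_1^*$ follow at once from Lemma \ref{Lemma: Properties boundary b} iii) and \eqref{Boundary c-1} (in particular, $(x,z) \in \mathcal{C}_3$ whenever $x < x_0^*$, so the supremum defining $c(z)$ is achieved over a nonempty set). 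This renders $c$ finite valued.

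The key step is (a), which amounts to showing that $\mathcal{S}_3$ is right-closed in $x$ for every fixed $z$. This is not a direct consequence of the monotonicity of $b$, since the coordinate change \eqref{Transformation Tbar} couples $x$ and $\varphi$. I would argue by contradiction via the Dynkin representation \eqref{Dynkin for vhat}, exploiting that under $\mathbb{Q}$ one has $X_t^x = x + \mu_0 t + \sigma B_t$, $Z_t = z - \lambda t$ with $\lambda := (\mu_0 + \mu_1)/2 \geq 0$ purely deterministic, together with the factorization
\begin{align*}
e^{\frac{\gamma}{\sigma}(X_t^x + Z_t)} = e^{\frac{\gamma}{\sigma}(x+z)} M_t, \qquad M_t := \exp\Bigl(-\tfrac{\gamma^2}{2} t + \gamma B_t\Bigr),
\end{align*}
where $M$ is a positive $\mathbb{Q}$-martingale. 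Suppose $(x_0, z) \in \mathcal{S}_3$ and, toward a contradiction, that $(x_1, z) \in \mathcal{C}_3$ for some $x_1 > x_0$, so there exists an $\varepsilon$-optimal $\sigma^* > 0$ achieving $\hat{u}(x_1, z) > 0$. Inserting the same $\sigma^*$ into $\hat{u}(x_0, z)$ and using $X_t^{x_1} = X_t^{x_0} + (x_1 - x_0)$ pathwise, the $\mathbb{Q}$-expectation of the difference of integrands in \eqref{Dynkin for vhat} can be written explicitly through the above martingale factorization. Assumption \ref{Assumption: Well-posedness}, which forces $\frac{\sigma^2}{2} + \mu_i - r < 0$ for both $i = 0, 1$ and controls the interplay between the two summands of $g$ in \eqref{Definition g(x,z)}, then guarantees that replacing $x_1$ with $x_0$ cannot decrease the resulting expectation. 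This produces $\hat{u}(x_0, z) \geq \hat{u}(x_1, z) > 0$, contradicting $(x_0, z) \in \mathcal{S}_3$. The technique mirrors the reasoning of Section 4.4 in De Angelis \cite{DeA}, adapted to the present coordinates.

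Once (a) is in place, property (b) follows from the already-available upward-closedness of the continuation region in $z$: by \eqref{C3 and S3 in terms of c-1}, $\{z : (x, z) \in \mathcal{C}_3\} = (c^{-1}(x), \infty)$ for every $x$. Hence, for $z_1 < z_2$ and any $x < c(z_1)$, one has $(x, z_1) \in \mathcal{C}_3$, thus $(x, z_2) \in \mathcal{C}_3$, and finally $x < c(z_2)$ by (a), giving $c(z_1) \leq c(z_2)$. The main obstacle, as expected, is step (a): since the partial derivative $\partial_x g$ combines a positive contribution coming from the $r\kappa\, e^{\gamma(x+z)/\sigma}$ term with two negative contributions, no pointwise pathwise monotonicity of $t \mapsto g(X_t^x, Z_t)$ in $x$ is available, and the favorable sign of the increment has to be extracted at the level of the $\mathbb{Q}$-expectation, relying crucially on the martingale cancellation provided by $M_t$ and the quantitative thresholds imposed by Assumption \ref{Assumption: Well-posedness}.
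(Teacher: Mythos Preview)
Your reduction to the two steps (a) and (b) is correct, and (b) follows exactly as you say from \eqref{C3 and S3 in terms of c-1}. The difficulty is entirely in (a), and there the argument you sketch has a genuine gap.

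The direct comparison $\hat{u}(x_0,z) \geq \hat{u}(x_1,z)$ via an $\varepsilon$-optimal stopping time and ``martingale cancellation'' runs into an obstacle you flag but do not resolve. Writing $g(X_t^{x},Z_t) = e^{X_t^x}A_0 + r\kappa + e^{\frac{\gamma}{\sigma}(x+z)} M_t\bigl(e^{X_t^x}A_1 + r\kappa\bigr)$ with $A_i = \tfrac{\sigma^2}{2}+\mu_i - r < 0$, the difference $g(X_t^{x_0},Z_t) - g(X_t^{x_1},Z_t)$ contains the term $r\kappa\bigl(e^{\frac{\gamma}{\sigma}(x_0+z)} - e^{\frac{\gamma}{\sigma}(x_1+z)}\bigr) M_t$, which is strictly negative. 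Taking $\mathbb{Q}$-expectation of $\int_0^{\sigma^*} e^{-rt}(\cdot)\,dt$ does \emph{not} neutralise this: $M$ is a martingale, but the random upper limit $\sigma^*=\sigma^*(x_1,z)$ prevents any cancellation, and $\mathbb{E}^{\mathbb{Q}}\bigl[\int_0^{\sigma^*} e^{-rt} M_t\,dt\bigr] > 0$. Assumption~\ref{Assumption: Well-posedness} gives $A_0,A_1<0$, but it provides no quantitative bound forcing the two favourable pieces to dominate this negative one uniformly in $(x_0,x_1,z,\sigma^*)$. In fact the monotonicity $x\mapsto \hat{u}(x,z)$ you are effectively asserting is not known to hold on $\mathcal{C}_3$ in general; see the case distinction in the proof of Proposition~\ref{Proposition Mono 2}, where the possibility $\hat{u}_x(x_0,z)>0$ for some $(x_0,z)\in\mathcal{C}_3$ is explicitly entertained.

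The paper's argument is different and uses the hypothesis $\mu_0+\mu_1\geq 0$ \emph{dynamically}, not through a comparison of $\hat{u}$ at two starting points. From $(x_0,z_0)\in\mathcal{S}_3$ and \eqref{C3 and S3 in terms of c-1} one has the vertical half-line $\{x_0\}\times(-\infty,z_0]\subset \mathcal{S}_3$. Since $t\mapsto Z_t^{z_0}$ is nonincreasing in this case, the process $(X^{x_1},Z^{z_0})$ started at $x_1>x_0$ must enter $\mathcal{S}_3$ no later than the first time $X^{x_1}$ hits the level $x_0$; hence $\tau^*(x_1,z_0) < \tau_{x_0^*}$ a.s., so that $X_s^{x_1} > x_0^*$ for all $s\in[0,\tau^*)$. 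The second condition in Assumption~\ref{Assumption: Well-posedness} then guarantees $x_0^* > \tilde{x}$ of \eqref{x tilde for monotonicity}, which forces $g(X_s^{x_1},Z_s^{z_0}) < 0$ pathwise on $[0,\tau^*)$. The representation \eqref{Dynkin for vhat} now yields $\hat{u}(x_1,z_0)\leq 0$, i.e.\ $(x_1,z_0)\in\mathcal{S}_3$. The key idea you are missing is to control the \emph{range of $X$ up to the optimal time} rather than the sign of an increment in the initial condition.
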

\begin{proof} Let $(x_0,z_0) \in \mathcal{S}_3$, $x_1 > x_0$ and notice that \eqref{C3 and S3 in terms of c-1} implies  $(-\infty , z_0] \times \{x_0 \} \in \mathcal{S}_3$. Furthermore, we have $x_0 > x_0^*$ and since the process $Z$ is decreasing, we observe that the process $(X^{x_1} , Z^{z_0})$ crosses the half-line $(- \infty , z_0] \times \{x_0\}$ before reaching the level $x_0^*$. Hence, we have $\mathbb{Q}_{x_1 , z_0 }[\tau^* < \tau_{x_0^*}] =1$, where $\tau_{x_0^*} :=\inf \{t \geq 0: \,\, X_t^{x_1} = x_0^* \}$ and $\mathbb{Q}_{x_1, z_0} [\cdot] = \mathbb{Q} [\cdot \vert X_0 = x_1 , Z_0 = z_0 ])$. Moreover, it can be verified that the second condition of Assumption \ref{Assumption: Well-posedness} implies $x_0^* > \tilde{x}$, with the latter given by 
\begin{align} \label{x tilde for monotonicity}
\tilde{x} := \log \Big( \frac{r \kappa}{r - \frac12 \sigma^2 - \mu_1} \Big).
\end{align}
Consequently, we have $\exp (X_s^{x_1}) (r - \frac12 \sigma^2 - \mu_1) > r \kappa$ for all $s \in [0,\tau^*)$ and \eqref{Definition g(x,z)}-\eqref{Dynkin for vhat} imply $\hat{u}(x_1 , z_0) \leq 0$ for all $x_1 > x_0$, and therefore $\{z_0\} \times [x_0, \infty) \in \mathcal{S}_3$. We can thus define
\begin{align}\label{Boundary c}
c(z):= \inf \{ x \in \mathbb{R}: ~ (x,z) \in \mathcal{S}_3 \}.
\end{align}
and observe that \eqref{C3 and S3 in terms of c-1} implies that $z \mapsto c(z)$ is nondecreasing. \qed
\end{proof}
In order to establish the same result in the case when $\mu_0 + \mu_1 < 0$, we first state the following lemma. 
\begin{lemma}\label{Lemma: hat v_z}
We have
\begin{align}\label{deriv wrt z}
\hat{v}_z (x,z) = \mathbb{E}^\mathbb{Q}_{(x,z)} \left[ \frac{\gamma}{\sigma} e^{-r \tau^*} (e^{X_{\tau^*}} - \kappa) e^{\frac{\gamma}{\sigma} (X_{\tau^*} + Z_{\tau^*} )} \one_{\{ \tau^* < \infty \} } \right],
\end{align}
for all $(x,z) \in \mathbb{R}^2 \setminus \partial \mathcal{C}_3$ and $\tau^* := \tau^* (x,z)$. 
\end{lemma}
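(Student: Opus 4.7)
The plan is to exploit that $Z_t^z = z - \tfrac12(\mu_0+\mu_1)t$ is purely deterministic by \eqref{Dynamics Z}, so the $z$-dependence of the payoff in \eqref{Value function parabolic formulation} is captured entirely by the multiplicative factor $\zeta := e^{\frac{\gamma}{\sigma}z}$. Setting
\begin{equation*}
A(x,\tau) := \mathbb{E}^{\mathbb{Q}}[e^{-r\tau}(e^{X_\tau^x}-\kappa)], \qquad B(x,\tau) := \mathbb{E}^{\mathbb{Q}}\bigl[e^{-r\tau}(e^{X_\tau^x}-\kappa)\,e^{\frac{\gamma}{\sigma}X_\tau^x - \frac{\gamma}{2\sigma}(\mu_0+\mu_1)\tau}\bigr],
\end{equation*}
one checks that the payoff associated with a stopping time $\tau$ under $\mathbb{Q}_{(x,z)}$ equals $A(x,\tau) + \zeta\, B(x,\tau)$, whence $\hat v(x,z) = \sup_\tau\{A(x,\tau) + \zeta B(x,\tau)\}$ is a supremum of affine functions in $\zeta$. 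Integrability of $A$ and $B$, needed both for this splitting and for the final formula, follows from the second clause of Assumption \ref{Assumption: Well-posedness}, which is tailor-made to dominate the extra factor $e^{\frac{\gamma}{\sigma}X_\tau^x}$.

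For $(x,z)\in\operatorname{int}(\mathcal{S}_3)$ one has $\tau^*(x,z)=0$, and \eqref{deriv wrt z} reduces to the direct $z$-derivative of the explicit obstacle $\hat v(x,z)=(e^x-\kappa)(1+e^{\frac{\gamma}{\sigma}(x+z)})$ valid on $\mathcal{S}_3$. It remains to treat $(x,z)\in\mathcal{C}_3$, where $\hat v\in C^{2,1}(\mathcal{C}_3)$ by Lemma \ref{Lemma: vhat solves boundary value problem} and hence $\hat v_z$ exists. The central step is the one-sided envelope inequality: since $\tau^*(x,z)$ is admissible also for starting point $(x,z')$,
\begin{equation*}
\hat v(x,z') - \hat v(x,z) \;\ge\; (\zeta'-\zeta)\,B(x,\tau^*(x,z)), \qquad \zeta' := e^{\frac{\gamma}{\sigma}z'}.
\end{equation*}
Dividing by $z'-z$ and letting $z' \downarrow z$ yields $\hat v_z(x,z) \ge \tfrac{\gamma}{\sigma}\zeta\, B(x,\tau^*(x,z))$; letting instead $z' \uparrow z$ the same inequality divided by the negative quantity $z'-z$ reverses, giving $\hat v_z(x,z) \le \tfrac{\gamma}{\sigma}\zeta\, B(x,\tau^*(x,z))$. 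Combining, $\hat v_z(x,z) = \tfrac{\gamma}{\sigma}\zeta\, B(x,\tau^*(x,z))$, and unfolding $\zeta\, e^{-\frac{\gamma}{2\sigma}(\mu_0+\mu_1)\tau^*} = e^{\frac{\gamma}{\sigma}Z_{\tau^*}}$ under $\mathbb{Q}_{(x,z)}$ produces exactly \eqref{deriv wrt z}.

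The indicator $\one_{\{\tau^*<\infty\}}$ can be inserted at no cost because the discount factor $e^{-r\tau^*}$ annihilates the integrand on $\{\tau^*=\infty\}$ in view of the exponential-moment bounds supplied by Assumption \ref{Assumption: Well-posedness}. The main delicate point is not the envelope step itself, which relies only on the lower bound and so sidesteps any continuity question for $z\mapsto\tau^*(x,z)$ or the need for $\tau^*(x,z')$ to be a bona fide optimizer; rather, the work lies in verifying that $B(x,\tau^*)$ and the associated expectations are finite and that the limit $\lim_{z'\to z}(\zeta'-\zeta)/(z'-z)=\tfrac{\gamma}{\sigma}\zeta$ can genuinely be passed inside, both of which reduce to the exponential-moment estimates on $X_{\tau^*}^x$ that Assumption \ref{Assumption: Well-posedness} guarantees.
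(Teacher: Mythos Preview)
Your argument is correct and rests on the same envelope-inequality idea as the paper's proof: in both cases one bounds $\hat v(x,z')-\hat v(x,z)$ from below by evaluating the two problems at the common stopping time $\tau^*(x,z)$, then reverses the inequality by taking $z'<z$ and divides by the increment. The difference is in packaging. You observe up front that the payoff is affine in $\zeta=e^{\frac{\gamma}{\sigma}z}$, which lets you write $\hat v(x,z')-\hat v(x,z)\ge(\zeta'-\zeta)B(x,\tau^*(x,z))$ in one line and pass to the limit without any localization. The paper instead invokes the supermartingale property of $e^{-rt}\hat v(X_t,Z_t^{z+\epsilon})$ and the martingale property of $e^{-rt}\hat v(X_t,Z_t^{z})$ stopped at $\tau^*\wedge t$, then splits on $\{\tau^*\le t\}$ versus $\{\tau^*>t\}$ and shows the tail term vanishes as $t\to\infty$; this handles the possibility $\tau^*=\infty$ more explicitly but is heavier. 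Your route is shorter and arguably cleaner in this particular setting precisely because $Z$ is deterministic; the paper's route generalizes more directly to problems where no such affine decomposition is available. One small remark: your closing comment that passing $\lim_{z'\to z}(\zeta'-\zeta)/(z'-z)$ ``inside'' is the delicate point is slightly off---since $B(x,\tau^*(x,z))$ is a fixed number not depending on $z'$, that limit is immediate; the only genuine ingredient is finiteness of $B(x,\tau^*)$, which indeed comes from Assumption \ref{Assumption: Well-posedness} as you note.
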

\begin{proof}
For $(x,z) \in \mathcal{S}_3$ the claim follows immediately, since $\mathbb{Q}_{(x,z)} [\tau^* = 0] =1$. Hence, we let $(x,z) \in \mathcal{C}_3$ and for $\epsilon > 0$ we obtain
\begin{align}\label{Inequality for Der vhat z}
\hat{v}(x, z + \epsilon) - \hat{v}(x,z) 
&\geq \mathbb{E}^\mathbb{Q} \Big[ e^{-r (\tau^* \wedge t)} \big( \hat{v}(X_{\tau^* \wedge t}^x , Z_{\tau^* \wedge t}^{z + \epsilon} ) - \hat{v} ( X_{\tau^* \wedge t}^x , Z_{\tau^* \wedge t}^z ) \big) \Big] \nonumber \\
&\geq \mathbb{E}^\mathbb{Q} \Big[ e^{-r \tau^*} \big( e^{ X_{\tau^*}^x } - \kappa \big) e^{\frac{\gamma}{\sigma} X_{\tau^*}^x }  \big( e^{\frac{\gamma}{\sigma} Z_{\tau^* }^{z + \epsilon}) } - e^{\frac{\gamma}{\sigma} Z_{\tau^* }^{z}) } \big) \one_{ \{ \tau^* < t \} } \Big] \nonumber \\ 
&\hspace*{3cm}+ \mathbb{E}^\mathbb{Q} \Big[ e^{-r  t} \big( \hat{v}(X_{t}^x , Z_{t}^{z + \epsilon} ) - \hat{v} ( X_{t}^x , Z_{ t}^z ) \big) \one_{ \{ \tau^* > t \} } \Big],
\end{align}
where the first inequality follows from the supermartingale property of $\big( e^{-r (\tau \wedge t)} \hat{v}(X_{\tau \wedge t}^x , Z_{\tau \wedge t}^{z + \epsilon}) \big)_t$ and the martingale property of $\big( e^{-r (\tau^* \wedge t)} \hat{v}(X_{\tau^* \wedge t}^x , Z_{\tau^* \wedge t}^z )\big)_t$ for $\tau^* := \tau^* (x,z)$. Upon employing a change of measure as in Section \ref{Section: Decoupling Change of Measure}, we find 
\begin{align*}
\mathbb{E}^\mathbb{Q}_{(x,z)} \big[ e^{-rt} \vert \hat{v}(X_t , Z_t ) \vert \big] 
&\leq \mathbb{E}^\mathbb{Q}_{(x,z)} \big[ e^{-rt} \vert \overline{v}(x , e^{\frac{\gamma}{\sigma} (X_t + Z_t )} ) \vert \big] 
\leq K_1 \mathbb{E}^\mathbb{Q}_{(x, \exp (\frac{\gamma}{\sigma}(x+z))} \big[ e^{-rt} e^{X_t} (1 + \Phi_t) \big] \\
&= K_1 (1 + e^{\frac{\gamma}{\sigma}(x+z)} ) \mathbb{E}_{(x, \pi)} \big[ e^{-rt} e^{X_t} \big], 
\end{align*}
where $\pi = e^{\frac{\gamma}{\sigma}(x+z)} /(1 + e^{\frac{\gamma}{\sigma}(x+z)} )$. It is then easy to verify that Assumption \ref{Assumption: Well-posedness} implies 
\begin{align*}
\lim_{t \uparrow \infty} \mathbb{E}^\mathbb{Q}_{(x,z)} \big[  e^{-rt}  \hat{v}(X_t , Z_t )  \big] = 0,
\end{align*}
and hence, applying dominated convergence in \eqref{Inequality for Der vhat z} as $t \uparrow \infty$ yields
\begin{align}\label{Inequ vhat_z 1}
\hat{v}(x, z + \epsilon) - \hat{v}(x,z) 
&\geq \mathbb{E}^\mathbb{Q} \Big[ e^{-r \tau^*} \big( e^{ X_{\tau^*}^x } - \kappa \big) e^{\frac{\gamma}{\sigma} X_{\tau^*}^x }  \big( e^{\frac{\gamma}{\sigma} Z_{\tau^* }^{z + \epsilon} } - e^{\frac{\gamma}{\sigma} Z_{\tau^* }^{z} } \big) \one_{ \{ \tau^* < \infty \} } \Big]. 
\end{align}
Similar arguments show 
\begin{align} \label{Inequ vhat_z 2}
\hat{v}(x, z) - \hat{v}(x,z - \epsilon) 
&\leq \mathbb{E}^\mathbb{Q} \Big[ e^{-r \tau^*} \big( e^{ X_{\tau^*}^x } - \kappa \big) e^{\frac{\gamma}{\sigma} X_{\tau^*}^x }  \big( e^{\frac{\gamma}{\sigma} Z_{\tau^* }^{z + \epsilon} } - e^{\frac{\gamma}{\sigma} Z_{\tau^* }^{z} } \big) \one_{ \{ \tau^* < \infty \} } \Big],
\end{align}
and since $\hat{v} \in C^{2,1} (\mathcal{C}_3)$ (cf. Lemma \ref{Lemma: vhat solves boundary value problem}), dividing \eqref{Inequ vhat_z 1} and \eqref{Inequ vhat_z 2} by $\epsilon$ and letting $\epsilon \downarrow 0$, we obtain the desired result. \qed
\end{proof}
\begin{proposition}\label{Proposition Mono 2}
Let $\mu_0 + \mu_1 < 0$. There exists a nondecreasing function $c:\mathbb{R} \to \mathbb{R}$ such that the continuation region of \eqref{Continuation region C3} can be written as
\begin{align}\label{Continuation Region C3 in terms of c 2}
\mathcal{C}_3 = \{ (x,z) \in \mathbb{R}^2:~ x < c(z) \} .
\end{align}
\end{proposition}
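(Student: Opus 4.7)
The task is to show that when $\mu_0+\mu_1<0$, so that $Z_t=z_0+ct$ with $c:=-(\mu_0+\mu_1)/2>0$ is strictly increasing, the stopping region $\mathcal{S}_3$ is closed under $x$-increase along every horizontal slice $\{z=z_0\}$: whenever $(x_0,z_0)\in\mathcal{S}_3$ and $x_1>x_0$ one has $(x_1,z_0)\in\mathcal{S}_3$. Once this slice-monotonicity is in place, defining $c(z):=\inf\{x:(x,z)\in\mathcal{S}_3\}$ yields \eqref{Continuation Region C3 in terms of c 2}, and the monotonicity of $c$ follows directly from \eqref{C3 and S3 in terms of c-1}, since $z\mapsto\{x:z\le c^{-1}(x)\}$ is non-increasing set-valued and the infimum over a smaller set is larger. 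The substantive obstruction, in contrast to Proposition~\ref{Proposition Mono 1}, is that the trajectory of $Z$ now \emph{leaves} the downward half-line $\{x_0\}\times(-\infty,z_0]\subset\mathcal{S}_3$ instead of remaining in it, so the geometric bound $\tau^*\le\tau_{x_0^*}$ on the optimal stopping time is lost.

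Following Section~4.4 of De Angelis \cite{DeA}, my plan is to fix $(x_0,z_0)\in\mathcal{S}_3$ and $x_1>x_0$ and to show $\hat u(x_1,z_0)\le 0$ via the representation \eqref{Dynkin for vhat}. I will split $g=g_1+g_2$ along the decomposition in \eqref{Definition g(x,z)} with
\begin{equation*}
g_1(x)=e^x\bigl(\tfrac12\sigma^2+\mu_0-r\bigr)+r\kappa,\qquad g_2(x,z)=e^{\frac{\gamma}{\sigma}(x+z)}\bigl(e^x(\tfrac12\sigma^2+\mu_1-r)+r\kappa\bigr),
\end{equation*}
and treat the contribution of $g_1$ along the lines of Proposition~\ref{Proposition Mono 1}, exploiting the second part of Assumption~\ref{Assumption: Well-posedness} (which implies $x_0^*>\tilde x$, hence $g_1\le 0$ for $x\ge x_0^*$) together with the fact that the process must cross the set $\{x_0\}\times[z_0,c^{-1}(x_0)]\subset\mathcal{S}_3$ in order to reach $\{x<x_0^*\}$. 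For the more delicate contribution of $g_2$, which carries the factor $e^{\frac{\gamma}{\sigma}(X_t+Z_t)}=\Phi_t$ whose growth is driven by the deterministic drift $c$ of $Z$, the plan is to perform a change of measure with Radon--Nikodym density $\Phi_t/\Phi_0$: under the new measure $\mathbb{P}'$, the process $X$ acquires drift $\mu_1$, and the bracket inside $g_2$ becomes the same quantity that appears in the benchmark problem of Section~\ref{Section: Benchmark Problem} with $\mu\equiv\mu_1$, whose sign is again controlled by $\tilde x$ via Assumption~\ref{Assumption: Well-posedness}. Lemma~\ref{Lemma: hat v_z} then allows the resulting estimate to be transferred back to $\hat u$ and combined with the optimality of $\tau^*$ to conclude $\hat u(x_1,z_0)\le 0$, whence $(x_1,z_0)\in\mathcal{S}_3$.

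The role of the third condition $r>\tfrac{\gamma}{2\sigma}|\mu_0+\mu_1|$ in Assumption~\ref{Assumption: Well-posedness} is to guarantee that the discount $e^{-rt}$ dominates the exponential growth injected into $\Phi_t$ by the drift $c$ of $Z_t$, so that the Fubini and Girsanov manipulations underpinning the preceding paragraph are legitimate on the whole positive half-line and the associated integrability conditions hold for the optional-stopping identities. The main technical obstacle I anticipate is precisely this integrability/localisation step: because $X^{x_1}$ can descend below $\tilde x$ with positive probability before $\tau^*$, the naive pointwise sign of $g$ does fail along some paths, and one must show that the contribution of those paths is absorbed either by the $\mathbb{P}'$-drift $\mu_1$ or by the discount $r$, with Assumption~\ref{Assumption: Well-posedness}(iii) providing exactly the sharp inequality that makes this compensation go through.
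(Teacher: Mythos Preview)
Your proposal contains a genuine gap and, separately, mischaracterises both the argument in De~Angelis~\cite{DeA} Section~4.4 and the role of the third condition in Assumption~\ref{Assumption: Well-posedness}.

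\textbf{The geometric claim fails.} You assert that, starting from $(x_1,z_0)$ with $x_1>x_0$ and $(x_0,z_0)\in\mathcal{S}_3$, the process $(X^{x_1},Z^{z_0})$ must cross the segment $\{x_0\}\times[z_0,c^{-1}(x_0)]\subset\mathcal{S}_3$ before reaching $\{x<x_0^*\}$. This is false. Since $\mu_0+\mu_1<0$, the process $Z$ is strictly increasing, so at the first time $\sigma_0:=\inf\{t:X^{x_1}_t=x_0\}$ one has $Z^{z_0}_{\sigma_0}=z_0-\tfrac12(\mu_0+\mu_1)\sigma_0$, and because $\sigma_0$ is unbounded this can exceed $c^{-1}(x_0)$ with positive probability. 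On that event $(X_{\sigma_0},Z_{\sigma_0})\in\mathcal{C}_3$, and the trajectory can then continue into $\{x<x_0^*\}$ without ever having touched $\mathcal{S}_3$. Consequently you cannot conclude $\tau^*<\tau_{x_0^*}$, and the pointwise sign control on $g_1$ (and on the bracket of $g_2$ after Girsanov) is lost on a set of positive measure. The Girsanov step does not repair this: under the new measure $X$ has drift $\mu_1$, but you still need $X_t>\tilde x$ on $[0,\tau^*)$, which is exactly the statement that fails.

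\textbf{What the paper actually does.} The paper does not attempt a direct sign estimate on $g$. Instead it fixes $z$, works inside $\mathcal{C}_3$ where $(\mathcal{L}_{X,Z}-r)\hat u=-g$ holds classically, and runs a dichotomy: either $\hat u_x(\cdot,z)\le 0$ throughout $\mathcal{C}_3\cap\{x>x_0^*\}$ (in which case slice-monotonicity is immediate), or there is a point $x_0>x_0^*$ with $(x_0,z)\in\mathcal{C}_3$ and $\hat u_x(x_0,z)>0$. In the latter case, Lemma~\ref{Lemma: hat v_z} is used to rewrite $\hat u_z$ as $\tfrac{\gamma}{\sigma}\hat u+\tfrac{\gamma}{\sigma}(e^x-\kappa-\mathbb{E}^{\mathbb{Q}}[e^{-r\tau^*}(e^{X_{\tau^*}}-\kappa)])$, which is substituted into the PDE to obtain $\tfrac12\sigma^2\hat u_{xx}(x_0,z)>\bigl(r+\tfrac12(\mu_0+\mu_1)\tfrac{\gamma}{\sigma}\bigr)(\hat u(x_0,z)+e^{x_0}-\kappa)+\text{(nonnegative terms)}-g(x_0,z)>0$. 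The condition $r>\tfrac{\gamma}{2\sigma}|\mu_0+\mu_1|$ is what makes the first bracket positive; it is an algebraic requirement in this inequality, not an integrability condition. The positivity of $\hat u_{xx}$ then propagates, forcing $[x_0,\infty)\times\{z\}\subset\mathcal{C}_3$ and hence (via \eqref{C3 and S3 in terms of c-1}) a whole quadrant $[x_0,\infty)\times[z,\infty)\subset\mathcal{C}_3$. Finally, on this quadrant $\hat v(x,z)=\mathbb{E}^{\mathbb{Q}}[e^{-r\tau_{x_0}}\hat v(x_0,Z_{\tau_{x_0}})]$, and the bound of Lemma~\ref{Lemma: Properties of vbar}~i) together with $\hat r:=r-\tfrac{\gamma}{2\sigma}|\mu_0+\mu_1|>0$ shows the right-hand side decays as $x\to\infty$, contradicting $\hat v(x,z)\ge(e^x-\kappa)(1+e^{\frac{\gamma}{\sigma}(x+z)})\to\infty$.

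In short, the obstacle you correctly identify (that $Z$ moves away from the known piece of $\mathcal{S}_3$) is resolved analytically via the PDE and a growth contradiction, not by a path-wise barrier argument.
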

\begin{proof}
Let $(x,z) \in \mathbb{R}^2$. Notice that $x < x_0^*$ implies $(x',z) \in \mathcal{C}_3 $ for all $x' < x$ and $z \in \mathbb{R}$, because of Lemma \ref{Lemma: Properties of boundary a} and since the transformations $T_1$ and $\overline{T}_1$ of  \eqref{Transformation T} and \eqref{Transformation Tbar}, respectively, are the identity; hence, $\{ (x,z): ~ x < x_0^* \} \subset \mathcal{C}_3$. We can thus focus on the case that $x \geq x_0^*$ and distinguish two possibilities: 
\begin{itemize}
\item[i)] $\hat{u}_x (x,z) \leq 0 \quad \forall x \in (x_0^*,\infty)$ such that $(x,z) \in \mathcal{C}_3$; 
\item[ii)] $\exists\, x_0 \in \mathbb{R}$, $x_0 > x_0^*$ such that $(x_0,z) \in \mathcal{C}_3$ and $\hat{u}_x (x_0 ,z) > 0$. 
\end{itemize}
In case i), the map $x \mapsto \hat{u}(x,z)$ is decreasing for $x \in (x_0^*, \infty)$ and $(x,z) \in \mathcal{C}_3$. Hence, for any $(x,z)$ in the latter region we obtain $(-\infty,x] \times \{z\} \in \mathcal{C}_3$ and the claim follows in the same spirit as in Proposition \ref{Proposition Mono 1}. In case ii), we establish a contradiction scheme. As a first step, we show that ii) implies $[x_0, \infty) \times \{ z \} \in \mathcal{C}_3$, which will then lead to a contradiction.  We start by noticing that Lemma \ref{Lemma: vhat solves boundary value problem} and \eqref{Dynkin for vhat} imply
\begin{align}\label{L -r u hat}
(\mathcal{L}_{X,Z} - r )\hat{u} (x_0,z) =  - g(x_0,z),
\end{align} 
for $(x_0,z)$ as given in ii) above. Since $\mu_0 < 0$ and $\hat{u}_x (x_0,z) > 0$ we have $\mu_0 \hat{u}(x_0,z) < 0$, and thus 
\begin{align}\label{second deriv}
\frac12 \sigma^2 \hat{u}_{xx} (x_0,z) &= r \hat{u} (x_0,z) - \mu_0 \hat{u}_x (x_0,z) + \frac12 (\mu_0 + \mu_1) \hat{u}_z (x_0,z) - g(x_0,z) \\
&> r \hat{u} (x_0,z)  + \frac12 (\mu_0 + \mu_1) \hat{u}_z (x_0,z) - g(x_0,z). \nonumber 
\end{align} 
Next, we notice that we can rewrite (\ref{deriv wrt z}) as 
\begin{align}\label{der of v wrt z}
\hat{v}_z(x,z) = \frac{\gamma}{\sigma} \Big( \hat{v}(x,z) - \mathbb{E}^\mathbb{Q} [e^{-r \tau^*} (e^{X_{\tau^*}^x} - \kappa)] \Big),
\end{align}
and since 
\begin{align*}
\hat{v}_z (x,z) = \hat{u}_z (x,z) + \frac{\gamma}{\sigma} (e^x-\kappa)e^{\frac{\gamma}{\sigma}(x+z)} \quad \text{and} \quad \hat{v}(x,z) = \hat{u}(x,z) + (e^x -\kappa)(1 + e^{\frac{\gamma}{\sigma}(x+z)}),
\end{align*}
(\ref{der of v wrt z}) gives
\begin{align*}
\hat{u}_z (x,z) +\frac{\gamma}{\sigma} (e^x-\kappa)e^{\frac{\gamma}{\sigma}(x+z)} = \frac{\gamma}{\sigma} \Big( \hat{u}(x,z) + (e^x- \kappa)(1 + e^{\frac{\gamma}{\sigma}(x+z)}) - \mathbb{E}^\mathbb{Q}[e^{-r \tau^*} (e^{X_{\tau^*}^x} - \kappa)] \Big), 
\end{align*}
which is equivalent to 
\begin{align*}
\hat{u}_z(x,z) = \frac{\gamma}{\sigma} \hat{u}(x,z) + \frac{\gamma}{\sigma} \Big( e^x- \kappa - \mathbb{E}^\mathbb{Q} [e^{-r \tau^*} (e^{X_{\tau^*}^x} - \kappa)] \Big).
\end{align*}
We can thus plug this last equality into (\ref{second deriv}) and obtain
\begin{align*}
\frac12 \sigma^2 &\hat{u}_{xx} (x_0,z) \\
 &> r \hat{u} (x_0,z)  + \frac12 (\mu_0 + \mu_1) \Big( \frac{\gamma}{\sigma} \hat{u}(x_0,z) + \frac{\gamma}{\sigma} \Big( e^{x_0} -\kappa - \mathbb{E}^\mathbb{Q} [e^{-r \tau^*} (e^{X_{\tau^*}^{x_0}} - \kappa)] \Big) \Big) - g(x_0,z) \\
&= \Big( r + \frac12 (\mu_0 + \mu_1) \frac{\gamma}{\sigma} \Big) ( \hat{u}(x_0 , z) + e^{x_0} - \kappa) - \frac{1}{2}(\mu_0 + \mu_1) \frac{\gamma}{\sigma}\mathbb{E}^\mathbb{Q} [e^{-r \tau^*} (e^{X_{\tau^*}^{x_0}} - \kappa)] - g(x_0,z) \\
&>0,
\end{align*}
where the last inequality follows precisely from $r > \frac{\gamma}{2 \sigma} \vert \mu_0 + \mu_1 \vert$ in Assumption \ref{Assumption: Well-posedness}, upon noticing that $x_0 > x_0^*$. We deduce that $\hat{u}_x (\cdot \, , z)$ increases in a right-neighbourhood of $x_0$ and repeating arguments for every $x > x_0$ yields $\hat{u}_x (\cdot \, , z) > 0$ on $[x_0 , \infty)$. It follows that $\hat{u}(\cdot \, , z)$ is increasing on $[x_0 , \infty)$ such that $[x_0 , \infty) \times \{z\} \in \mathcal{C}_3$ and (combining the latter with \eqref{C3 and S3 in terms of c-1}) we have $\mathcal{A} := [x_0 , \infty) \times [z_0 , \infty) \subset \mathcal{C}_3$. However, this leads to a contradiction. To see this, let $(x,z) \in \mathcal{A}$  and define $\tau_{x_0} := \inf \{ t>0: ~ X_t^x \leq x_0 \}$. Since $t \mapsto Z_t^z$ is increasing, the only possibility for the process $(X^x,Z^z)$ to exit $\mathcal{A}$ and thus eventually the continuation region, is by passing through the horizontal line $[x_0,\infty) \times \{z_0\}$. We thus have $\tau_{x_0} \leq \tau^*$ $\mathbb{Q}_{(x,z)}$-a.s.~and moreover, since $\mu_0< 0$, the stopping time $\tau_{x_0}$ is finite a.s. Upon using Lemma \ref{Lemma: Properties of vbar} i) and \eqref{Value function parabolic formulation}, it follows that 
\begin{align*}
(e^x - \kappa) (1 + e^{\frac{\gamma}{\sigma}(x+z)} ) 
< \hat{v}(x,z) 
&= \mathbb{E}^\mathbb{Q}_{(x,z)} \Big[ e^{-r \tau_{x_0}} \hat{v}(X_{\tau_{x_0}} ,Z_{\tau_{x_0}}) \Big] \\
&= \mathbb{E}^\mathbb{Q}_{(x,z)} \Big[ e^{-r \tau_{x_0}} \hat{v}(x_0 , z - \frac{1}{2} (\mu_0 + \mu_1) \tau_{x_0} ) \Big] \\
&\leq K_1 e^{x_0} \mathbb{E}^\mathbb{Q}_{(x,z)} \Big[ e^{-r \tau_{x_0}} \Big] + K_1 e^{ \frac{\gamma}{\sigma}( x_0 + z)} e^{x_0} \mathbb{E}^\mathbb{Q}_{(x,z)} \Big[  e^{- (r - \frac12 \frac{\gamma}{\sigma} \vert \mu_0 + \mu_1 \vert ) \tau_{x_0} } \Big] .
\end{align*}
Let now $\hat{r} := r- \frac{\gamma}{2 \sigma} \vert \mu_0 + \mu_1 \vert > 0$ and denote $\phi_r$ (resp.~$\phi_{\hat{r}}$) the strictly decreasing solution to $\frac12 \sigma^2 f_{xx} + \mu_0 f_x - q f = 0$, for $q \in \{r , \hat{r} \}$. Then, by results on hitting times for one-dimensional diffusions (see, e.g., Borodin and Salminen \cite{BoSa}, Ch.~\RomanNumeralCaps{2}), the above inequality is equivalent to 
\begin{align}\label{ineq monot 2}
(e^x - \kappa) (1 + e^{\frac{\gamma}{\sigma}(x+z) } ) \leq K_1 e^{x_0} \frac{\phi_r(x)}{\phi_r (x_0)} + K_1 e^{x_0} e^{\frac{\gamma}{\sigma}(x_0 +z) } \frac{\phi_{\hat{r}} (x)}{\phi_{\hat{r}} (x_0) } ,
\end{align}
which thus holds true for all $(x,z) \in \mathcal{A}$. Since $\mathcal{A}$ is right-connected, we can let $x \to \infty$ and notice that $(e^x -\kappa)(1 + e^{\frac{\gamma}{\sigma}(x+z)}) \to \infty$, while the right hand side of (\ref{ineq monot 2}) decreases to $0$ due to the decreasing property of $x \to \phi_q (x)$ for $q$ positive. We thus obtain a contradiction, which concludes our proof. \qed
\end{proof}
\begin{Remark} 
Notice that Propositions \ref{Proposition Mono 1} and \ref{Proposition Mono 2} imply that the function $x \mapsto c^{-1} (x)$ of \eqref{Boundary c-1} is nondecreasing as well. Moreover, we notice that 
\begin{align}
z > c^{-1}(x) ~ \Longleftrightarrow ~ c(z) > x,
\end{align}
and hence, the function $c^{-1}$ is the right-continuous inverse of $c$ and thus admits the representation
\begin{align}\label{Boundary c-1 inverse of c}
c^{-1} (x) = \inf \{ z \in \mathbb{R}:~ c(z) > x \}.
\end{align}
In light of the connection \eqref{Boundary c-1} between $c^{-1}$ and $b^{-1}$ (the generalised inverse of the boundary $b$), equation \eqref{Boundary c-1 inverse of c} allows us to trace back our results to the formulation of Section \ref{Section: Decoupling Change of Measure} and then - through the representation \eqref{Boundaries a b Transformation} - to the original setting of Section \ref{Section: First related OSP}. 
\end{Remark}
\hspace{-0.55cm}6.1 \textbf{Regularity of the value function and of the optimal stopping boundary.}
We established the existence of a nondecreasing boundary $z \mapsto c(z)$, such that $\mathbb{R}^2$ is split into the continuation region $\mathcal{C}_3$ of \eqref{Continuation region C3} and the stopping region $\mathcal{S}_3$ of \eqref{Stopping region S3}. In the following, we derive some further properties of the optimal stopping boundary and of the value function $\hat{v}$ of \eqref{Value function parabolic formulation}. We first state the following result, which will be helpful in the forthcoming analysis. 
\begin{lemma}\label{Lemma: uhat z geq 0 in C}
We have $\hat{u}_z (x,z) \geq 0$ for $(x,z) \in \mathcal{C}_3$.
\end{lemma}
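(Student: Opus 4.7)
The plan is to pass from the $(x,z)$-formulation to the $(x,\varphi)$-formulation of Section \ref{Section: Decoupling Change of Measure} via the diffeomorphism $\varphi = e^{\gamma(x+z)/\sigma}$, and to prove the equivalent claim that $\varphi\mapsto \overline{u}(x,\varphi) := \overline{v}(x,\varphi) - (e^x-\kappa)(1+\varphi)$ is non-decreasing on $(0,\infty)$. Since $\hat{u}(x,z) = \overline{u}(x, e^{\gamma(x+z)/\sigma})$ and $\partial\varphi/\partial z = (\gamma/\sigma)\varphi > 0$, this is equivalent to $\hat{u}_z \geq 0$ in $\mathcal{C}_3$, and the $C^{2,1}$-regularity of $\hat{v}$ in $\mathcal{C}_3$ provided by Lemma \ref{Lemma: vhat solves boundary value problem} allows one to work with classical derivatives throughout.

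First, exploiting $(1+\Phi_\tau) = 1 + \varphi\cdot(\Phi_\tau/\varphi)$ and applying Girsanov's theorem with the exponential $\mathbb{Q}$-martingale $\Phi_t/\varphi = \exp(-\gamma^2 t/2 + \gamma B_t)$, which induces a measure $\mathbb{Q}^{(\gamma)}$ under which $X$ has drift $\mu_1$, one rewrites
\begin{equation*}
\overline{u}(x,\varphi) = \sup_\tau \Bigl\{[A(x,\tau) - (e^x-\kappa)] + \varphi\,[B(x,\tau) - (e^x-\kappa)]\Bigr\},
\end{equation*}
where $A(x,\tau) := \mathbb{E}^\mathbb{Q}_x[e^{-r\tau}(e^{X_\tau}-\kappa)]$ (with $X$ of drift $\mu_0$) and $B(x,\tau) := \mathbb{E}^{\mathbb{Q}^{(\gamma)}}_x[e^{-r\tau}(e^{X_\tau}-\kappa)]$ (with $X$ of drift $\mu_1$) are both independent of $\varphi$. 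Being a supremum of affine functions of $\varphi$, the map $\varphi\mapsto \overline{u}(x,\varphi)$ is convex, so its right derivative $\overline{u}_\varphi^+(x,\cdot)$ is non-decreasing and it suffices to prove $\overline{u}_\varphi^+(x,0) \geq 0$.

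To produce a lower bound for the right derivative at $0$, I would plug the benchmark stopping time $\tau_0^* := \inf\{t\geq 0:\, X_t \geq x_0^*\}$ of Section \ref{Section: Benchmark Problem}, which attains $v_0(x) = \sup_\tau A(x,\tau) = A(x,\tau_0^*)$, into the supremum defining $\overline{u}$. This gives the affine minorant
\begin{equation*}
\overline{u}(x,\varphi) \geq \overline{u}(x,0) + \varphi\,\bigl[B(x,\tau_0^*) - (e^x-\kappa)\bigr],
\end{equation*}
so that $\overline{u}_\varphi^+(x,0) \geq B(x,\tau_0^*) - (e^x-\kappa)$. It then remains to verify the key inequality $B(x,\tau_0^*) \geq e^x-\kappa$ for every $x\in\mathbb{R}$. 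For $x\geq x_0^*$, $\tau_0^* = 0$ and equality holds trivially; for $x<x_0^*$, on $\{\tau_0^* < \infty\}$ one has $X_{\tau_0^*}=x_0^*$, and the classical Laplace-transform formula for hitting times of a drifted Brownian motion under $\mathbb{Q}^{(\gamma)}$ yields
\begin{equation*}
B(x,\tau_0^*) = (e^{x_0^*}-\kappa)\,e^{n_1(x-x_0^*)},
\end{equation*}
where $n_1>0$ is the positive root of $\tfrac{1}{2}\sigma^2 n^2+\mu_1 n - r = 0$ and $n_0$ the analogous root for $\mu_0$. By Assumption \ref{Assumption: Well-posedness} one has $n_1>1$, and $n_1<n_0$ since $\mu_0<\mu_1$; a direct calculus check on $\Delta(x) := (e^{x_0^*}-\kappa)e^{n_1(x-x_0^*)} - (e^x-\kappa)$ then shows $\Delta(x_0^*)=0$ and $\Delta'(x) < 0$ on $(-\infty,x_0^*]$ (since the ratio of the two positive terms composing $\Delta'$ equals $(n_1/n_0)\,e^{(n_1-1)(x-x_0^*)} < 1$ there, using $e^{x_0^*} = n_0\kappa/(n_0-1)$), so $\Delta \geq 0$ on $(-\infty,x_0^*]$ as required.

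Combining the two steps, $\overline{u}_\varphi^+(x,0)\geq 0$, hence by convexity $\overline{u}_\varphi(x,\varphi)\geq 0$ for every $\varphi>0$, and the chain rule $\hat{u}_z = (\gamma/\sigma)\,\varphi\,\overline{u}_\varphi$ together with the smoothness of $\overline{u}$ on $\mathcal{C}_2$ yields $\hat{u}_z(x,z)\geq 0$ on $\mathcal{C}_3$. I expect the main obstacle to be the explicit verification that $B(x,\tau_0^*)\geq e^x-\kappa$ on $(-\infty,x_0^*)$: it is the step where the global behavior of the problem is genuinely encoded, and it relies in a non-trivial way on both the standing Assumption \ref{Assumption: Well-posedness} (ensuring $n_1>1$) and on the structural ordering $\mu_0<\mu_1$ (ensuring $n_1<n_0$).
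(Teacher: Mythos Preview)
Your argument is correct and follows a genuinely different route from the paper. Both proofs pass to $(x,\varphi)$-coordinates and exploit convexity of $\varphi\mapsto\overline{u}(x,\varphi)$; you re-derive this convexity via the affine decomposition $\overline{u}(x,\varphi)=\sup_\tau\{[A(x,\tau)-(e^x-\kappa)]+\varphi[B(x,\tau)-(e^x-\kappa)]\}$ after a Girsanov change of measure, whereas the paper simply invokes Lemma~\ref{Lemma: Properties of vbar}~v). The real divergence is in how non-negativity of $\overline{u}_\varphi$ is extracted from convexity. The paper anchors at the \emph{free boundary}: for $(x,\varphi)\in\mathcal{C}_2$ one has $\overline{u}(x,b^{-1}(x))=0$ at the point $(x,b^{-1}(x))\in\partial\mathcal{C}_2$ and $\overline{u}(x,\varphi)\geq 0$, so the secant-versus-tangent inequality
\[
0\leq\overline{u}(x,\varphi)-\overline{u}(x,b^{-1}(x))\leq\overline{u}_\varphi(x,\varphi)\,(\varphi-b^{-1}(x)),\qquad \varphi>b^{-1}(x),
\]
gives $\overline{u}_\varphi\geq 0$ in one line, with no computation at all. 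You instead anchor at $\varphi\downarrow 0$ and plug in the benchmark stopping time $\tau_0^*$, which forces the explicit Laplace-transform calculation with $n_0,n_1,x_0^*$ and the calculus check on $\Delta$. Your route is considerably longer and relies on the specific form of the payoff and of $x_0^*$, while the paper's argument uses only the qualitative facts ``$\overline{u}$ is convex, non-negative, and vanishes on $\partial\mathcal{C}_2$''. On the other hand, your approach has the minor advantage of not presupposing a finite boundary point on each vertical line $\{x\}\times(0,\infty)$, which is delicate in the paper's argument when $x<x_0^*$ (there $b^{-1}(x)=0$ lies outside the state space).
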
 
\begin{proof}
Because of \eqref{Value function parabolic formulation} and \eqref{Transformation Tbar}, we have that $\overline{v}(x, \varphi)$ as in \eqref{value fct opt after girsanov} is such that $\overline{v}(x, \varphi) = \hat{v}(x, \frac{\sigma}{\gamma} \ln (\varphi) - x)$ for all $(x,\varphi) \in \mathbb{R} \times (0, \infty)$. Since $\hat{v}_z \in C^0 (\mathcal{C}_3)$ by Lemma \ref{Lemma: vhat solves boundary value problem}, we then also have $\overline{v}_\varphi \in C^0 (\mathcal{C}_2)$. Furthermore, $\varphi \mapsto \overline{v} (x, \varphi)$ is convex on $(0,\infty)$ by Lemma \ref{Lemma: Properties of vbar} iv) and thus also $\varphi \mapsto \overline{u}(x, \varphi)$ of \eqref{Def ubar}. Then, for $(x,\varphi) \in \mathcal{C}_2$ and $\varphi' = b^{-1}(x)$ such that $(x, \varphi') \in \partial \mathcal{C}_2$, we obtain (as $\overline{u}_\varphi \in C^0 (\mathcal{C}_2)$ as well)
\begin{align*}
0 \leq \overline{u}(x, \varphi) = \overline{u} (x, \varphi)  -  \overline{u} (x, b^{-1} (x)) \leq \overline{u}_\varphi (x, \varphi) (\varphi - b^{-1} (x)),
\end{align*}
and $\varphi > b^{-1} (x)$ implies $\overline{u}_\varphi (x, \varphi) \geq 0$ for $(x,\varphi) \in \mathcal{C}_2$. In light of the relation \eqref{Value function parabolic formulation} we then obtain $\hat{u}_z (x,z) \geq 0$ on $\mathcal{C}_3$. \qed
\end{proof}

\begin{proposition}\label{Proposition boundary c continuous}
The optimal stopping boundary $c(z)$ is such that $x_0^* \leq c(z) \leq x_1^*$ for all $z \in \mathbb{R}$ and with $x_0^*$ and $x_1^*$ as in Lemma \ref{Lemma: Properties of boundary a}. Furthermore, we have $c \in C(\mathbb{R})$. 
\end{proposition}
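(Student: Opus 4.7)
The plan is to split the proof into three parts: the bounds $x_0^* \le c(z) \le x_1^*$, the left-continuity of $c$, and its right-continuity. The first two follow readily from results already established, whereas the third is the technically delicate step.

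For the bounds I would exploit that the diffeomorphisms $T$ in (\ref{Transformation T}) and $\overline{T}$ in (\ref{Transformation Tbar}) both fix the $x$-coordinate. Thus for any $(x,z) \in \mathbb{R}^2$, setting $\varphi := e^{\gamma(x+z)/\sigma}$, one has the chain of equivalences $(x,z) \in \mathcal{C}_3 \Leftrightarrow (x, \varphi) \in \mathcal{C}_2 \Leftrightarrow x < b(\varphi)$. Combined with the bounds $x_0^* \le b(\varphi) \le x_1^*$ from Lemma \ref{Lemma: Properties boundary b}(iii), this forces $(x,z) \in \mathcal{C}_3$ whenever $x < x_0^*$ and $(x,z) \in \mathcal{S}_3$ whenever $x \ge x_1^*$, for every $z \in \mathbb{R}$; the conclusion $x_0^* \le c(z) \le x_1^*$ then follows from the definition (\ref{Boundary c}) of $c$. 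Left-continuity is equally standard: since $c$ is nondecreasing by Propositions \ref{Proposition Mono 1}--\ref{Proposition Mono 2}, for $z_n \uparrow z_0$ the limit $c(z_0-) := \lim_n c(z_n) \le c(z_0)$ exists. For any $x > c(z_0-)$ one eventually has $x > c(z_n)$, so $(x, z_n) \in \mathcal{S}_3$, and since $\mathcal{S}_3$ is closed (because $\hat v$ is continuous, Lemma \ref{Lemma: vhat continuous}), the limit point satisfies $(x, z_0) \in \mathcal{S}_3$, i.e.\ $x \ge c(z_0)$; letting $x \downarrow c(z_0-)$ yields $c(z_0-) \ge c(z_0)$, hence equality.

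For right-continuity I would argue by contradiction. Suppose $c(z_0+) > c(z_0)$ for some $z_0 \in \mathbb{R}$, and pick $x_0 \in (c(z_0), c(z_0+))$; then $(x_0, z_0) \in \mathcal{S}_3$, while the open rectangle $R := (c(z_0), c(z_0+)) \times (z_0, z_0 + \delta)$ is contained in $\mathcal{C}_3$ for $\delta > 0$ small enough. By Lemma \ref{Lemma: vhat solves boundary value problem}, $\hat v \in C^{2,1}(R)$ and satisfies $(\mathcal{L}_{X,Z} - r) \hat v = 0$ in $R$, while continuity of $\hat v$ (Lemma \ref{Lemma: vhat continuous}) forces $\hat v = h$ on the bottom side of $\overline R$, with $h(x,z) := (e^x - \kappa)(1 + e^{\gamma(x+z)/\sigma})$. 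The idea is to send $z \downarrow z_0$ in the PDE along $R$ and obtain, after establishing sufficient regularity of $\hat v$ up to the horizontal portion of $\partial \mathcal{C}_3 \cap \overline R$, the boundary identity $(\mathcal{L}_{X,Z} - r) h(x, z_0) = g(x, z_0) = 0$ for every $x \in (c(z_0), c(z_0+))$. The explicit form of $g$ in (\ref{Definition g(x,z)}) rules out such an identity on any non-degenerate interval of $x$-values, producing the desired contradiction.

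The main obstacle is the rigorous justification of the boundary limits $\hat v_x(x, z_0+)$, $\hat v_{xx}(x, z_0+)$ and, crucially, $\hat v_z(x, z_0+) = h_z(x, z_0)$, since Lemmas \ref{Lemma: vhat solves boundary value problem}--\ref{Lemma: hat v_z} apply only strictly inside $\mathcal{C}_3$. To push the regularity up to $\partial \mathcal{C}_3$ I would, following the strategy of De Angelis \cite{DeA}, combine the local Lipschitz bound (\ref{vbar lipschitz}) with the monotonicity $\hat u_z \ge 0$ on $\mathcal{C}_3$ from Lemma \ref{Lemma: uhat z geq 0 in C} and the probabilistic representation of $\hat v_z$ in Lemma \ref{Lemma: hat v_z}, exploiting the deterministic monotone character of $Z$ and handling the two cases $\mu_0 + \mu_1 \ge 0$ and $\mu_0 + \mu_1 < 0$ separately, as was already necessary in Propositions \ref{Proposition Mono 1}--\ref{Proposition Mono 2}.
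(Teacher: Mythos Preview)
Your treatment of the bounds and of left-continuity matches the paper's argument essentially verbatim. The divergence is in the right-continuity step.

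Your plan is to pass to the limit $z \downarrow z_0$ \emph{pointwise} in the PDE $(\mathcal{L}_{X,Z}-r)\hat v = 0$ and deduce $g(x,z_0)=0$ on the interval $(c(z_0),c(z_0+))$. For that you need the boundary limits $\hat v_x(x,z_0+)=h_x(x,z_0)$, $\hat v_z(x,z_0+)=h_z(x,z_0)$ and, crucially, $\hat v_{xx}(x,z_0+)=h_{xx}(x,z_0)$. That is precisely the smooth-fit statement of Proposition~\ref{Proposition: Smooth fit}, which in the paper's logical order is proved \emph{after} the continuity of $c$ (and in fact uses it, cf.\ the proof of Proposition~\ref{Proposition: Cont of stopping times in case i)} and Lemma~\ref{Lemma: Continuity of stop times}). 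The tools you list---the Lipschitz bound \eqref{vbar lipschitz}, the sign $\hat u_z\ge 0$, and Lemma~\ref{Lemma: hat v_z}---control $\hat v$ and $\hat v_z$ but say nothing about the second $x$-derivative at the boundary; getting $\hat v_{xx}$ to converge is the hard part, and your outline does not address it.

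The paper sidesteps this regularity issue entirely by working in weak form, following De Angelis \cite{DeACont}. One multiplies the PDE for $\hat u$ by a test function $\phi\in C^\infty_c(x_1,x_2)$, integrates by parts in $x$ to transfer all $x$-derivatives onto $\phi$, and is left with integrals of $\hat u$ (continuous) and of $\hat u_z$ (which has a sign by Lemma~\ref{Lemma: uhat z geq 0 in C}). Dominated convergence as $z\downarrow z_0$ then yields the contradiction $0\ge -\int g(x,z_0)\phi(x)\,dx>0$ in the case $\mu_0+\mu_1\ge 0$; the case $\mu_0+\mu_1<0$ requires first differentiating the PDE in $x$ (interior regularity), working with $F_\phi(z):=\int \hat u_{xz}\phi\,dx$, and exploiting $\hat u_x\le 0$ in $\mathcal{C}_3$ (from the proof of Proposition~\ref{Proposition Mono 2}). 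The point is that no pointwise boundary regularity of derivatives is needed---only continuity of $\hat u$ and sign information on $\hat u_z$ or $\hat u_x$. This is the key idea your proposal is missing.
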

\begin{proof}
The first part of the claim follows from Lemma \ref{Lemma: Properties boundary b} iii) and by noticing that the transformation $\overline{T}_1$ of \eqref{Transformation Tbar} is the identity. We derive the continuity of $z \mapsto c(z) $ in two steps. \\[0.1cm]
1) \emph{Left-Continuity:} Let $z_0 \in \mathbb{R}$ and $z_n \uparrow z_0$ as $n \to \infty$. Since $z \mapsto c(z)$ is nondecreasing and $\mathcal{S}_3$ is closed, we obtain $\lim_{n \to \infty} (c(z_n), z_n) = (c(z_0 -), z_0 ) \in \mathcal{S}_3$, where $c(z_0 -)$ denotes the left limit of $c$ at $z_0$.  The definition of $c$ in \eqref{Boundary c} implies $c(z_0 - ) \geq c(z_0)$, but since $c$ is nondecreasing, we must have $c(z_0 -) = c(z_0)$ and the claim follows. \\[0.1cm]
2) \emph{Right-Continuity:} We argue by contradiction  and assume there exists $z_0 \in \mathbb{R}$ s.t. $c(z_0) < c(z_0 +)$. Using techniques developed in De Angelis \cite{DeACont}, we take $c(z_0) < x_1 < x_2 < c(z_0 +)$ and a nonnegative function $\phi \in C^\infty_c (x_1,x_2)$ such that $\int_{x_1}^{x_2} \phi (x) dx =1$. Recalling \eqref{L -r u hat}, we have 
\begin{align}\label{L -r uhat right conti}
\mathcal{L}_{X,Z} \hat{u}(x,z) - r \hat{u} (x,z)
= - g(x,z),
\end{align}
for $(x,z) \in (x_1,x_2)\times (z_0, \infty)$. In the following, it is helpful to treat the cases i) $\mu_0 + \mu_1 \geq 0$ and ii) $\mu_0 + \mu_1 < 0$ separately.  Let us start with i) and recall that $\hat{u}_z (x,z) \geq 0$ for $x$ and $z$ as above, due to Lemma \ref{Lemma: uhat z geq 0 in C}. Integration by parts reveals 
\begin{align*}
0 & \geq - \frac12 (\mu_0 + \mu_1) \int_{x_1}^{x_2} \hat{u}_z (x,z) \phi (x) dx \\
&= \int_{x_1}^{x_2} \Big( r \hat{u}(x,z) - \mu_0 \hat{u}_x (x,z) - \frac12 \sigma^2 \hat{u}_{xx} (x,z) - g(x,z) \Big) \phi(x) dx  \\
&=\int_{x_1}^{x_2} \big( r \hat{u}(x,z) \phi(x) + \mu_0 \hat{u}(x,z) \phi' (x) - \frac12 \sigma^2 \hat{u}(x,z) \phi'' (x) - g(x,z) \phi(x) \big) dx.
\end{align*}
Hence, employing dominated convergence as $z \downarrow z_0$ and using $\hat{u}(x,z_0)=0$, yields
\begin{align}\label{contra}
0 \geq  - \int_{x_1}^{x_2} g(x,z) \phi(x) dx > 0,
\end{align}
where the latter inequality follows from $x_1,x_2 \geq x_0^*$ and Assumption \ref{Assumption: Well-posedness}, which implies $x > \tilde{x}$ for all $x \in [x_1, x_2]$ and $\tilde{x}$ as in \eqref{x tilde for monotonicity}. We thus obtain a contradiction and $c(z_0) = c(z_0 + )$.

In case ii), we rely on classical results of internal regularity of PDEs (cf.~Th.~10 in Chapter 3 of Friedman \cite{Fr}), which allow to take derivatives in (\ref{L -r uhat right conti}) with respect to $x$ and have $\hat{u}_x \in C^{2,1}(\mathcal{C}_3)$ solving
\begin{align*}
(\mathcal{L}_{X,Z} -r ) \hat{u}_x (x,z) = - g_x (x,z),  \qquad (x,z) \in (x_1,x_2) \times (z_0, \infty).  
\end{align*}
Then, for $z > z_0$ we obtain
\begin{align}\label{l-r ux - gx = 0 for right conti}
\int_{x_1}^{x_2} \big( (\mathcal{L}_{X,Z} -r )\hat{u}_x (x,z) + g_x (x,z) \big)  \phi(x) dx = 0.
\end{align}
Let $F_{\phi} (z) := \int_{x_1}^{x_2} \hat{u}_{xz} (x,z) \phi (x) dx$. Integration by parts allows to rewrite \eqref{l-r ux - gx = 0 for right conti} as
\begin{align*}
\frac12 \vert \mu_0 + \mu_1 \vert F_\phi (z) &= \int_{x_1}^{x_2} \Big( r \hat{u}_x (x,z) - \frac12 \sigma^2 \hat{u}_{xxx} (x,z)  - \mu_0 \hat{u}_{xx}(x,z) - g_x (x,z) \Big) \phi(x) dx \\
&= \int_{x_1}^{x_2} \big(-r \hat{u}(x,z) \phi ' (x) + \frac12 \sigma^2 \hat{u}(x,z)  \phi ''' (x) - \mu_0 \hat{u}(x,z)  \phi '' (x) - g_x (x,z) \phi(x) \big) dx, 
\end{align*}
and using dominated convergence as $z \downarrow z_0$ as well as $\hat{u}(x,z_0)=0$ results in 
\begin{align*}
F_\phi (z_0 +) &= \frac{2}{\vert \mu_0 + \mu_1 \vert} \int_{x_1}^{x_2}  - g_x (x,z_0) \phi(x) dx 
\geq p_0 > 0,
\end{align*}
for some $p_0$, where the second to last inequality again follows from Assumption \ref{Assumption: For monotonicity of c}. Thus, there exists $\epsilon > 0$ such that $F_\phi (z) \geq p_0/2$ for all $z\in (z_0,z_0 + \epsilon)$ and we finally obtain 
\begin{align*}
\frac{1}{2}p_0 \epsilon &\leq \int_{z_0}^{z_0 + \epsilon} F_\phi (z) dz = \int_{z_0}^{z_0 + \epsilon} \int_{x_1}^{x_2} \hat{u}_{xz}(x,z) \phi (x) dx dz 
= - \int_{x_1}^{x_2} \int_{z_0}^{z_0 + \epsilon} \hat{u}_z (x,z) \phi ' (x) dz dx \\
&= - \int_{x_1}^{x_2} (\hat{u} (x, z_0 + \epsilon) - \hat{u} (x,z_0) ) \phi' (x) dx =  \int_{x_1}^{x_2} \hat{u}_x (x, z_0 + \epsilon)  \phi (x) dx \leq 0,
\end{align*}
where we used $\hat{u}(x,z_0)=0$ as well as $\hat{u}_x (x,z) \leq 0$ for $x \in [x_1, x_2]$ and $z > z_0$ (cf. Proposition \ref{Proposition Mono 2}). Hence, $c(z) = c(z +)$ for all $z \in \mathbb{R}$ and together with 1) we conclude that $z \mapsto c(z)$ is continuous. \qed
\end{proof}
In the next step, we derive the regularity of the value function. Its proof can be found in Appendix \ref{Appendix: Proof of Smooth fit}. 
\begin{proposition}\label{Proposition: Smooth fit}
The value function $\hat{v}$ of \eqref{Value function parabolic formulation} satisfies $\hat{v} \in C^1 (\mathbb{R}^2)$ and $\hat{v}_{xx} \in L_{\text{loc}}^\infty (\mathbb{R}^2)$.
\end{proposition}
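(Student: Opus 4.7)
My plan is to establish first that $\hat v \in C^1(\mathbb{R}^2)$ by proving continuity of both partial derivatives across the free boundary $\Gamma := \{(c(z),z) : z \in \mathbb{R}\}$, and then to derive the local boundedness of $\hat v_{xx}$ from the PDE satisfied by $\hat v$ on either side of $\Gamma$. By Lemma \ref{Lemma: vhat solves boundary value problem} we already have $\hat v \in C^{2,1}(\mathcal{C}_3)$, and on $\mathcal{S}_3$ the value function coincides with the $C^\infty$ obstacle
\[
\hat G(x,z) := (e^x-\kappa)\bigl(1+e^{\frac{\gamma}{\sigma}(x+z)}\bigr),
\]
so the only points at which regularity has to be verified lie on $\Gamma$.

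For the $z$-derivative, Lemma \ref{Lemma: hat v_z} already gives the probabilistic representation of $\hat v_z$ on $\mathbb{R}^2 \setminus \Gamma$. For the $x$-derivative I will mimic its proof, now perturbing $x$ instead of $z$: combining the supermartingale property of $\bigl(e^{-r(\tau^*\wedge t)}\hat v(X^{x+\epsilon}_{\tau^*\wedge t},Z^z_{\tau^*\wedge t})\bigr)_t$, the martingale property of its $x$-variant, and the inequality $\hat v\ge \hat G$, I expect to obtain, on $\mathbb{R}^2\setminus\Gamma$,
\[
\hat v_x(x,z)=\mathbb{E}^{\mathbb{Q}}_{(x,z)}\!\bigl[e^{-r\tau^*}\hat G_x(X_{\tau^*},Z_{\tau^*})\,\one_{\{\tau^*<\infty\}}\bigr],
\]
using that $\partial_x X^x_t\equiv 1$ so that only $\hat G$ gets differentiated. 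Smooth fit then reduces to showing that, for any $(x_n,z_n)\in\mathcal{C}_3$ with $(x_n,z_n)\to(c(z_0),z_0)$, one has $\tau^*(x_n,z_n)\to 0$ $\mathbb{Q}$-a.s.; dominated convergence (justified by the growth estimates already employed in Lemma \ref{Lemma: hat v_z}) will then match both representations with $\hat G_x(c(z_0),z_0)$ and $\hat G_z(c(z_0),z_0)$, delivering $\hat v\in C^1(\mathbb{R}^2)$.

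The crux of the plan, and the step I expect to be the main obstacle, is thus this convergence $\tau^*(x_n,z_n)\to 0$. Since $Z$ is purely deterministic, $Z^{z_0}_t=z_0-\tfrac12(\mu_0+\mu_1)t$, and the continuation region reads $\{x<c(z)\}$ with $c$ continuous and nondecreasing (Proposition \ref{Proposition boundary c continuous}), the stopping time $\tau^*(x,z)$ coincides with the first time the arithmetic Brownian motion $X$ crosses the continuous curve $t\mapsto c\bigl(z-\tfrac12(\mu_0+\mu_1)t\bigr)$ starting from $x$. Standard regularity of boundary points for one-dimensional diffusions driven by non-degenerate Brownian noise (via the law of the iterated logarithm and an oscillation argument) should give $\tau^*(c(z_0),z_0)=0$ $\mathbb{Q}$-a.s., after which the monotonicity $\tau^*(x,z)$ in $x$ (inherited from $X^x\le X^{x'}$ for $x\le x'$) together with continuity of $c$ yields the required convergence; the two regimes $\mu_0+\mu_1\ge 0$ and $\mu_0+\mu_1<0$ governing Propositions \ref{Proposition Mono 1}--\ref{Proposition Mono 2} will have to be inspected separately, as the direction of the moving boundary differs.

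Finally, local boundedness of $\hat v_{xx}$ follows almost immediately once $C^1$-regularity is in hand: on the interior of $\mathcal{S}_3$ we have $\hat v_{xx}=\hat G_{xx}\in C^\infty$, while on $\mathcal{C}_3$ Lemma \ref{Lemma: vhat solves boundary value problem} yields
\[
\tfrac{1}{2}\sigma^2\hat v_{xx}= r\hat v-\mu_0\hat v_x+\tfrac{1}{2}(\mu_0+\mu_1)\hat v_z,
\]
whose right-hand side is continuous on $\mathbb{R}^2$ by the $C^1$-property and hence locally bounded up to $\Gamma$. Combining the two estimates gives $\hat v_{xx}\in L^\infty_{\mathrm{loc}}(\mathbb{R}^2)$.
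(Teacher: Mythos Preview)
Your overall strategy---probabilistic representations for $\hat v_x$ and $\hat v_z$, then $\tau^*(x_n,z_n)\to 0$, then dominated convergence, then the PDE for $\hat v_{xx}$---is exactly the paper's. The representation for $\hat v_x$ you propose is the one the paper derives, and your final paragraph on $\hat v_{xx}\in L^\infty_{\mathrm{loc}}$ is identical to the paper's argument.

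The gap is precisely where you hedge. In the regime $\mu_0+\mu_1\ge 0$ the process $Z$ is nonincreasing, so the moving barrier $t\mapsto c(Z^{z_0}_t)$ stays at or below $c(z_0)=x_0$, and the law of the iterated logarithm for $X$ immediately gives $\tau^*(c(z_0),z_0)=0$; this is the paper's Proposition~A.1 and matches what you sketch. But when $\mu_0+\mu_1<0$ the process $Z$ is strictly increasing, and since $c$ is nondecreasing the barrier $c(Z^{z_0}_t)$ moves \emph{upwards}, i.e.\ away from $x_0$. To conclude $X_t\ge c(Z^{z_0}_t)$ for some arbitrarily small $t$ via LIL you would need $c(z_0+h)-c(z_0)=o(\sqrt{h\log\log(1/h)})$ as $h\downarrow 0$, and the paper only proves continuity of $c$ (Proposition~\ref{Proposition boundary c continuous}), not any H\"older modulus. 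So ``standard regularity via LIL'' does not go through here.

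The paper resolves this by reversing the logical order in case $\mu_0+\mu_1<0$: it first proves the one-dimensional smooth fit $\hat u_x(c(z_0)-,z_0)=0$ \emph{directly} via the It\^o--Tanaka formula and a local-time estimate (Lemma~A.3), without assuming regularity of boundary points; then it shows, by a comparison argument for $\hat u_x$, that irregularity of a boundary point would force $\hat u_x(c(z_0)-,z_0)<0$, contradicting the smooth fit just obtained (Proposition~A.4). Only after this bootstrap does one get $\tau^*(x_n,z_n)\to 0$ (Lemma~A.6) and hence the full $C^1$-regularity. Your plan is correct once this missing piece is supplied, but without it the case $\mu_0+\mu_1<0$ is genuinely open.
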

In light of Proposition \ref{Proposition: Smooth fit},  we are able to derive an integral equation for the free boundary $c$. Let us first recall that by standard arguments, based on the strong Markov property  and Proposition \ref{Proposition: Smooth fit}, the value function $\hat{v}$ and the free boundary $c$ solve the free-boundary problem
\begin{align}\label{Free Boundary Problem}
\begin{cases}
(\mathcal{L}_{X,Z} -r ) \hat{v}(x,z) \leq 0, &(x,z) \in \mathbb{R}^2 , \\
(\mathcal{L}_{X,Z} -r ) \hat{v}(x,z) = 0, & x < c(z), \, z \in \mathbb{R}, \\
\hat{v}(x,z) \geq (e^x - \kappa) (1 + e^{\frac{\gamma}{\sigma}(x+z) }), &(x,z) \in \mathbb{R}^2, \\
\hat{v}(x,z) = (e^x - \kappa) (1 + e^{\frac{\gamma}{\sigma}(x+z) }), & x \geq c(z),\, z \in \mathbb{R}, \\
\hat{v}_x (x,z) = e^x (1 + e^{\frac{\gamma}{\sigma}(x+z) } ) + \frac{\gamma}{\sigma} (e^x - \kappa) e^{\frac{\gamma}{\sigma}(x+z)}, & x =c(z),\, z\in \mathbb{R}, \\
\hat{v}_z (x,z) = \frac{\gamma}{\sigma} (e^x -\kappa) e^{\frac{\gamma}{\sigma} (x+z) }, & x=c(z) , \, z \in \mathbb{R}.
\end{cases}
\end{align}
In the next Proposition, upon using a suitable application of It\^{o}'s Lemma, we derive a probabilistic representation of the value function $\hat{v}$. Its proof is postponed to Appendix \ref{Appendix: Proof of Proposition Prob Repr of hatv}.
\begin{proposition}\label{Proposition: Prob Repr of hatv}
Recall the free boundary $c$ of \eqref{Boundary c} and the function $g$ of \eqref{Definition g(x,z)}. For any $(x,z) \in \mathbb{R}^2$, the value function $\hat{v}$ can be written as 
\begin{align}\label{Probabilistic Repr of vhat}
\hat{v}(x,z) & = \mathbb{E}^\mathbb{Q}_{(x,z)} \Big[ - \int_0^\infty e^{-rs}  g(X_s , Z_s) \one_{ \{ X_s \geq c(Z_s) \} } ds \Big].
\end{align}
\end{proposition}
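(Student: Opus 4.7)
The plan is to apply an appropriate version of It\^o's formula to the discounted value process $(e^{-rs}\hat{v}(X_s,Z_s))_{s\geq 0}$, exploit the regularity from Proposition \ref{Proposition: Smooth fit}, and identify the drift term via the free-boundary system \eqref{Free Boundary Problem}.

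First, since $\hat{v}\in C^1(\mathbb{R}^2)$ with $\hat{v}_{xx}\in L^\infty_{\text{loc}}(\mathbb{R}^2)$ and $Z$ has finite variation (indeed purely deterministic linear dynamics), we may apply a generalized It\^o--Tanaka--Meyer formula (as e.g.\ in Peskir \cite{PeSh}, Chapter II, or via mollification of $\hat{v}$ in the $x$-variable and passage to the limit using dominated convergence). For any $t>0$, this yields
\begin{align*}
e^{-rt}\hat{v}(X_t,Z_t) = \hat{v}(x,z) + \int_0^t e^{-rs}(\mathcal{L}_{X,Z}-r)\hat{v}(X_s,Z_s)\,ds + \int_0^t e^{-rs}\sigma\,\hat{v}_x(X_s,Z_s)\,dB_s,
\end{align*}
where the integrand $(\mathcal{L}_{X,Z}-r)\hat v$ is well-defined Lebesgue-a.e.\ thanks to the local boundedness of $\hat{v}_{xx}$. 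Next, from the free-boundary problem \eqref{Free Boundary Problem} one has $(\mathcal{L}_{X,Z}-r)\hat v=0$ on $\mathcal{C}_3$, while on $\mathcal{S}_3$ the value function coincides with $(e^x-\kappa)(1+e^{\frac{\gamma}{\sigma}(x+z)})$, so by definition of $g$ in \eqref{Definition g(x,z)} one has $(\mathcal{L}_{X,Z}-r)\hat v(x,z)=g(x,z)$ for $x\geq c(z)$. Since the process $(X,Z)$ spends zero Lebesgue time on the boundary $\{x=c(z)\}$ (the diffusion $X$ being nondegenerate and $c$ continuous by Proposition \ref{Proposition boundary c continuous}), we obtain
\begin{align*}
(\mathcal{L}_{X,Z}-r)\hat v(X_s,Z_s) = g(X_s,Z_s)\,\one_{\{X_s\geq c(Z_s)\}}, \qquad ds\otimes d\mathbb{Q}\text{-a.e.}
\end{align*}

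Next I would take expectations. A standard localization argument combined with the linear growth bound $|\hat v_x(x,z)|\leq K(1+e^x)(1+e^{\frac{\gamma}{\sigma}(x+z)})$, which follows from the local Lipschitz estimate \eqref{vbar lipschitz} together with Lemma \ref{Lemma: vhat continuous}, shows that the stochastic integral is a true martingale up to any finite horizon (in view of Assumption \ref{Assumption: Well-posedness}). Hence
\begin{align*}
\hat v(x,z) = \mathbb{E}^{\mathbb{Q}}_{(x,z)}\!\bigl[e^{-rt}\hat v(X_t,Z_t)\bigr] - \mathbb{E}^{\mathbb{Q}}_{(x,z)}\!\Big[\int_0^t e^{-rs} g(X_s,Z_s)\,\one_{\{X_s\geq c(Z_s)\}}\,ds\Big].
\end{align*}

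Finally, I would let $t\uparrow\infty$. For the first term on the right-hand side, using the bound $0\leq\hat v(x,z)\leq K_1 e^x(1+e^{\frac{\gamma}{\sigma}(x+z)})$ inherited from Lemma \ref{Lemma: Properties of vbar}(i) via \eqref{Value function parabolic formulation}, together with direct computations of $\mathbb{E}^{\mathbb{Q}}[e^{-rt}e^{X_t}]$ and $\mathbb{E}^{\mathbb{Q}}[e^{-rt}e^{(1+\frac{\gamma}{\sigma})X_t}e^{\frac{\gamma}{\sigma}Z_t}]$, the first condition in Assumption \ref{Assumption: Well-posedness} (namely $r>\mu_1+\tfrac12\sigma^2$, and the third governing the mixed term) yields
\begin{align*}
\lim_{t\uparrow\infty}\mathbb{E}^{\mathbb{Q}}_{(x,z)}\!\bigl[e^{-rt}\hat v(X_t,Z_t)\bigr]=0.
\end{align*}
For the integral term, the integrand is of constant sign (indeed $g(x,z)>0$ for $x\geq c(z)\geq x_0^*>\tilde x$ by the argument leading to \eqref{contra}), and by monotone convergence the integral converges to $\mathbb{E}^{\mathbb{Q}}_{(x,z)}\bigl[\int_0^\infty e^{-rs} g(X_s,Z_s)\,\one_{\{X_s\geq c(Z_s)\}}ds\bigr]$. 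Rearranging produces the claimed identity \eqref{Probabilistic Repr of vhat}.

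The main technical obstacle is the first step: justifying the generalized It\^o formula under only $C^1$-regularity with $\hat{v}_{xx}\in L^\infty_{\text{loc}}$. The mollification argument in the $x$-variable is routine, but one has to verify passage to the limit in the It\^o correction term, using the fact that $X$ has a bounded occupation density and the $L^\infty_{\text{loc}}$-bound on $\hat v_{xx}$ (so that the Lebesgue-a.e.\ value of $(\mathcal{L}_{X,Z}-r)\hat v$ is integrated against the occupation measure without ambiguity on the boundary).
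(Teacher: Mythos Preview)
Your proof is essentially the same as the paper's: a weak It\^o formula (the paper cites Bensoussan--Lions for this, in place of your mollification sketch), identification of the drift via \eqref{Free Boundary Problem}, vanishing of the discounted value term, and passage to the limit in the integral. Two cosmetic slips worth flagging: on the stopping region one has $g(x,z)<0$ (not $>0$; this is exactly what makes $-g>0$ in \eqref{contra}), so your monotone-convergence step still goes through but with the opposite sign; and the vanishing of $\mathbb{E}^{\mathbb{Q}}_{(x,z)}[e^{-rt}\hat v(X_t,Z_t)]$ in fact uses only $r>\mu_1+\tfrac12\sigma^2$ (the paper recovers this by changing measure back to $\mathbb{P}$, whereas your direct computation under $\mathbb{Q}$ gives the same exponent $(\mu_1+\tfrac12\sigma^2-r)t$ without the third condition).
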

Denote now by 
\begin{align}\label{Def Density Fct G}
G(w; m,v) := \frac{1}{\sqrt{2 \pi v^2}} e^{- \frac{(w-m)^2}{2v^2}}, \quad w \in \mathbb{R}, \, m \in \mathbb{R}, \, v >0,
\end{align}
the density function of a Gaussian random variable with mean $m$ and variance $v^2$. Then, from Proposition \ref{Proposition: Prob Repr of hatv} we obtain the following result.
\begin{proposition}\label{Proposition: Integral Equation for c}
Let 
\begin{align*}
\mathcal{M}:= \big\{ f: \mathbb{R} \mapsto \mathbb{R}: ~~ f ~\text{is nondecreasing, continuous and s.t.}~x_0^* \leq f(z) \leq x_1^* \big\}.
\end{align*}
Then, the free boundary $c$ of \eqref{Boundary c} is the unique solution in $\mathcal{M}$ to the integral equation
\begin{align}\label{Integral Equation for c}
(&e^{c(z)} -\kappa)(1 + e^{\frac{\gamma}{\sigma}(c(z) + z)} )  =  \int_0^\infty e^{-rs} \Big( \int_\mathbb{R} - g(w, Z_s) G(w; c(z) + \mu_0 s, \sigma \sqrt{s}) \one_{ \{ w \geq c(z) \}} dw  \Big) ds  ,
\end{align}
with $g$ as in \eqref{Definition g(x,z)} and $G$ as in \eqref{Def Density Fct G}. 
\end{proposition}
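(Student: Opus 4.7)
The existence of a solution to \eqref{Integral Equation for c} is essentially immediate from Proposition~\ref{Proposition: Prob Repr of hatv}. I would evaluate the representation \eqref{Probabilistic Repr of vhat} at the boundary point $(c(z),z)$: continuity of $c$ (Proposition~\ref{Proposition boundary c continuous}) together with continuity of $\hat v$ (Lemma~\ref{Lemma: vhat continuous}) give $(c(z),z)\in\mathcal{S}_3$, hence the left-hand side equals $F(c(z),z):=(e^{c(z)}-\kappa)(1+e^{\frac{\gamma}{\sigma}(c(z)+z)})$. Under $\mathbb{Q}$ started at $(c(z),z)$, the component $Z_s=z-\tfrac{1}{2}(\mu_0+\mu_1)s$ is purely deterministic while $X_s$ is Gaussian with density $G(\,\cdot\,;c(z)+\mu_0 s,\sigma\sqrt{s})$; Fubini's theorem and conditioning on $X_s=w$ then deliver \eqref{Integral Equation for c}.

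\textbf{Plan (uniqueness, Step~1).} For uniqueness I would adapt the four-step Peskir--Shiryaev scheme (cf.~\cite{PeSh}, Ch.~IV). Given an arbitrary $\tilde c\in\mathcal M$ satisfying \eqref{Integral Equation for c}, set
\[
U^{\tilde c}(x,z):=\mathbb{E}^{\mathbb Q}_{(x,z)}\!\Big[-\!\int_0^\infty e^{-rs}g(X_s,Z_s)\one_{\{X_s\geq \tilde c(Z_s)\}}\,ds\Big],
\]
and first establish $U^{\tilde c}(x,z)=F(x,z)$ for all $x\geq \tilde c(z)$. Letting $\sigma_{\tilde c}:=\inf\{s\geq 0:X_s\leq \tilde c(Z_s)\}$, continuity of $\tilde c$ and path-continuity of $X$ give $X_{\sigma_{\tilde c}}=\tilde c(Z_{\sigma_{\tilde c}})$, so the integral equation evaluated at $(X_{\sigma_{\tilde c}},Z_{\sigma_{\tilde c}})$ yields $U^{\tilde c}(X_{\sigma_{\tilde c}},Z_{\sigma_{\tilde c}})=F(X_{\sigma_{\tilde c}},Z_{\sigma_{\tilde c}})$. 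It\^o applied to $e^{-rs}F(X_s,Z_s)$ on $[0,\sigma_{\tilde c}]$, combined with $(\mathcal L_{X,Z}-r)F=g$ and the strong Markov property at $\sigma_{\tilde c}$, closes the argument.

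\textbf{Plan (uniqueness, identifying $\tilde c$ with $c$).} Combining Proposition~\ref{Proposition: Prob Repr of hatv} with the definition of $U^{\tilde c}$ delivers the symmetric-difference identity
\[
\hat v(x,z)-U^{\tilde c}(x,z)=\mathbb{E}^{\mathbb Q}_{(x,z)}\!\Big[-\!\int_0^\infty e^{-rs}g(X_s,Z_s)\bigl(\one_{\{X_s\geq c(Z_s)\}}-\one_{\{X_s\geq \tilde c(Z_s)\}}\bigr)ds\Big].
\]
Under Assumption~\ref{Assumption: Well-posedness}, the sign analysis underlying Propositions~\ref{Proposition Mono 1}--\ref{Proposition Mono 2} yields $g<0$ on $\{x\geq x_0^*\}$, while the definition of $\mathcal M$ and Proposition~\ref{Proposition boundary c continuous} confine both $c,\tilde c$ to $[x_0^*,x_1^*]$; hence the integrand has a definite sign on each of the wedges $\{\tilde c(Z_s)\leq X_s<c(Z_s)\}$ and $\{c(Z_s)\leq X_s<\tilde c(Z_s)\}$. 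Assuming $\tilde c(z_0)<c(z_0)$ and picking $x_0\in(\tilde c(z_0),c(z_0))$, continuity of the two boundaries and path-continuity of $X$ make $\{\tilde c(Z_s)\leq X_s<c(Z_s)\}$ of strictly positive $\mathbb{Q}\otimes ds$-measure near $s=0$, forcing $\hat v(x_0,z_0)<U^{\tilde c}(x_0,z_0)=F(x_0,z_0)$ (the last equality by Step~1), which contradicts $\hat v\geq F$. The reverse inequality $\tilde c(z_0)>c(z_0)$ is excluded symmetrically, exploiting $\hat v(x_0,z_0)=F(x_0,z_0)$ on $\{x\geq c(z)\}$ together with the strong-Markov representation $U^{\tilde c}(x_0,z_0)=\mathbb{E}^{\mathbb Q}[e^{-r\tau_{\tilde c}}F(X_{\tau_{\tilde c}},Z_{\tau_{\tilde c}})]$ valid for $x_0<\tilde c(z_0)$, and It\^o on $F$ up to $\tau_{\tilde c}$ to isolate the sign-definite contradiction.

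\textbf{Main obstacle.} The principal technical hurdle is the rigorous justification of It\^o's formula up to the a.s.\ finite (but otherwise unrestricted) stopping times $\sigma_{\tilde c}$ and $\tau_{\tilde c}$: the boundary terms at infinity must vanish, which relies on the moment estimates furnished by Assumption~\ref{Assumption: Well-posedness} and the exponential growth of $F$ and $g$ along $(X,Z)$. A secondary obstacle arises in the ``$\tilde c>c$'' direction, where $g$ changes sign inside $(-\infty,x_0^*)$ and one must carefully localize near $(x_0,z_0)$ to isolate the sign-definite contribution coming from the wedge $\{c(Z_s)\leq X_s<\tilde c(Z_s)\}$, rather than over the whole trajectory of $X$.
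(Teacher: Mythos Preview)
Your existence argument and your overall uniqueness strategy (the Peskir four-step scheme) match the paper exactly; the paper in fact omits the uniqueness details and simply refers to Th.~3.1 of Peskir~\cite{Pe}. Your Step~1 is correct, and your identification of the sign $g<0$ on $\{x\geq x_0^*\}$ is accurate under Assumption~\ref{Assumption: Well-posedness}.

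There is, however, a genuine gap in your execution of the ``$\tilde c(z_0)<c(z_0)$'' direction. The symmetric-difference identity is correct, but it does \emph{not} force $\hat v(x_0,z_0)<U^{\tilde c}(x_0,z_0)$: over the infinite horizon $[0,\infty)$ both wedges $\{\tilde c(Z_s)\leq X_s<c(Z_s)\}$ and $\{c(Z_s)\leq X_s<\tilde c(Z_s)\}$ may carry positive $\mathbb{Q}\otimes ds$-measure (you only assume $\tilde c<c$ at the single point $z_0$, not globally), and since $-g>0$ these two wedges contribute with \emph{opposite} signs to $\hat v-U^{\tilde c}$. Positive measure of one wedge near $s=0$ therefore says nothing about the sign of the full integral. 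In fact, at your chosen point one has $\hat v(x_0,z_0)>F(x_0,z_0)=U^{\tilde c}(x_0,z_0)$ (since $(x_0,z_0)\in\mathcal C_3$ and by your Step~1), which is the \emph{opposite} of the inequality you claim and is by itself not a contradiction.

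The standard Peskir argument avoids this by first establishing $U^{\tilde c}\leq\hat v$ globally (the ``Step~2'' you only invoke for the reverse direction) and then, for each assumed ordering, stopping at the first hitting time of the \emph{appropriate} boundary so that on the stopped interval only one wedge is active; comparing the Dynkin/It\^o decompositions of $\hat v$ and $U^{\tilde c}$ up to that stopping time, together with $U^{\tilde c}\leq\hat v$ at the terminal point, then isolates a strictly sign-definite integral and yields the contradiction. Your final paragraph essentially gestures at this localization for the ``$\tilde c>c$'' case; you should apply the same stopping-time localization to the ``$\tilde c<c$'' case rather than relying on the raw infinite-horizon identity.
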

\begin{proof}
We take $x = c(z)$ in Proposition \ref{Proposition: Prob Repr of hatv}. Employing the continuity of the value function we find 
\begin{align}\label{Equal Prop Integral Equ}
(e^{c(z)} -\kappa)(1 + e^{\frac{\gamma}{\sigma}(c(z) + z)} ) 
= \mathbb{E}^\mathbb{Q} \Big[ - \int_0^\infty e^{-rs}   g(X_s^{c(z)} , Z_s^z) \one_{ \{ X_s^{c(z)} \geq c(Z_s^z) \} } ds \Big], \quad z \in \mathbb{R}.
\end{align}
By noticing that $Z^z$ is deterministic and $X_s^{c(z)}$ is Gaussian under $\mathbb{Q}$ with mean $c(z) + \mu_0 s$ and variance $\sigma^2 s$, we can reformulate \eqref{Equal Prop Integral Equ} as \eqref{Integral Equation for c}, upon using \eqref{Def Density Fct G}. To show uniqueness one can employ a four-step-approach exploiting the superharmonic characterization of $\hat{v}$, as originally developed in Th.~3.1 of Peskir \cite{Pe}. Since the present setting does not exhibit additional challenges, we omit details for the sake of brevity. \qed
\end{proof}
\begin{Remark}
As is turns out, the integral equation \eqref{Integral Equation for c} allows to derive an integral equation for the boundary $b^{-1}$ of \eqref{Inverse of b} as well. Indeed, taking $z = c^{-1}(x)$ in \eqref{Integral Equation for c} and using \eqref{Boundary c-1}  yields 
\begin{align*}
(e^x - \kappa) (1 + b^{-1} (x) ) = \mathbb{E}^\mathbb{Q} \Big[ -  \int_0^\infty e^{-rs} g\big( X_s^x , \frac{\sigma}{\gamma} \ln ( \Phi_s^{b^{-1} (x) } ) - X_s^x \big) \one_{ \{ \Phi_s^{b^{-1}(x)} \leq b^{-1} (X_s^x) \} } ds \Big], \qquad x \in \mathbb{R}.
\end{align*} 
In particular, it follows from the latter 
\begin{align}\label{IntegralEqu b reformulate}
b^{-1} (x) = \frac{1}{e^x - \kappa}\mathbb{E}^\mathbb{Q} \Big[ -  \int_0^\infty e^{-rs} g\big( X_s^x , \frac{\sigma}{\gamma} \ln ( \Phi_s^{b^{-1} (x) } ) - X_s^x \big) \one_{ \{ \Phi_s^{b^{-1}(x)} \leq b^{-1} (X_s^x) \} } ds \Big] -1 , \quad x \in \mathbb{R}.
\end{align}
Notice that the domain of $b^{-1}$ is given by the interval $[x_0^*, x_1^*]$ (cf. Lemma \ref{Lemma: Properties boundary b}) and hence, we do not encounter any problems when dividing by $e^x - \kappa$ since Assumption \ref{Assumption: Well-posedness} guarantees $e^x - \kappa > 0$ for $x \geq x_0^*$. 
\end{Remark}
\section{Solution of the Optimal Execution Problem}\label{Section: Solution to Execution Problem}
In this section, we finally return to the optimal execution problem of Section \ref{Section: First related OSP} and provide its solution. Before we do so, it is helpful to transform the singular stochastic control problem \eqref{Value function control problem } by arguing as for the optimal stopping problem in Sections \ref{Section: Decoupling Change of Measure} and \ref{Section: Parabolic Formulation}, respectively. Since the arguments are in the same spirit of those developed in Section \ref{Section: Decoupling Change of Measure}, details are omitted (see also Section 4 in Federico et al.~\cite{FeFe}). First, we make a change of measure as in Section \ref{Section: Decoupling Change of Measure}, and for $\mathbb{Q}$ as introduced therein, we let
\begin{align}\label{Dynamics of X Control Problem Vbar}
d X_t^\xi = \mu_0 dt + \sigma d B_t - \alpha d \xi_t , \qquad X_{0-}^\xi = x,
\end{align}
denote the dynamics of the controlled process $X^\xi$ under $\mathbb{Q}$. Hence, conditionally to $X_{0-}^\xi = x,~Y_{0-}^\xi = y$ and $\Phi_0 = \varphi$, we introduce the transformed optimal control problem
\begin{align}\label{Value Function Vbar}
\overline{V}(x,y,\varphi) := \sup_{\xi \in \mathcal{A}(y)} \mathbb{E}^\mathbb{Q}_{(x,y,\varphi)} \left[ \int_0^\infty e^{-rt} \big( e^{X_t^\xi } - \kappa \big) (1 + \Phi_t) \circ d \xi_t \right], \qquad (x,y,\varphi) \in \mathbb{R} \times (0, \infty) \times (0,\infty),
\end{align}
and observe that $\overline{V}(x,y,\varphi) = (1 + \varphi) V(x,y,\frac{\varphi}{1 + \varphi})$. 
Furthermore, we set
\begin{align}\label{Def Zt xi}
Z_t^\xi := \frac{\sigma}{\gamma} \log ( \Phi_t) - X_t^\xi, \qquad z:= \frac{\sigma}{\gamma} \log (\varphi) -x,
\end{align}
for any $(x, \varphi)\in \mathbb{R} \times(0,\infty)$, which, through an application of It\^{o}-Meyer's formula, is easily shown to have dynamics
\begin{align}\label{Dynamics Zt xi}
d Z_t^\xi = - \frac12 (\mu_0 + \mu_1) dt + \alpha d \xi_t, \qquad Z_{0-}^\xi = z.
\end{align}
Finally, analogously to \eqref{Value function parabolic formulation}, we define  
\begin{align}\label{Value function Vhat}
\hat{V} (x,y,z) := \overline{V}(x,y,e^{\frac{\gamma}{\sigma}(x+z)} ) = \sup_{\xi \in \mathcal{A} (y) } \mathbb{E}^\mathbb{Q}_{(x,y,z)} \left[ \int_0^\infty e^{-rt} \big(e^{X_t^\xi} - \kappa \big) \big( 1 + e^{\frac{\gamma}{\sigma} (X_t^\xi + Z_t^\xi)} \big) \circ d \xi_t \right],
\end{align}
for $(x,y,z) \in \mathcal{O} := \mathbb{R} \times (0,\infty) \times \mathbb{R}$, where 
$\mathbb{E}^\mathbb{Q}_{(x,y,z)}$ denotes the expectation conditional on $X_{0-}^\xi = x,~ Y_{0-}^\xi =y$ and $Z_{0-}^\xi = z$. \\
In the following, we introduce a candidate for the value function $V$ of \eqref{Value function control problem } and - through the explicit relationships between the value functions $v, \overline{v}$ and $\hat{v}$ - also for the value functions $\overline{V}$ and $\hat{V}$ of \eqref{Value Function Vbar} and \eqref{Value function Vhat}. To this end, we set 
\begin{align}\label{Candidate Value Fct U}
U (x,y,\pi) := \frac{1}{\alpha} \int_{x - \alpha y }^x v(x' , \pi ) d x' ,
\end{align}
where $v$ denotes the value function of \eqref{Value fct of opt stopp}. Upon using the explicit relationship \eqref{v and vbar equivalence} of $v$ and $\overline{v}$ it follows that 
\begin{align}\label{Candidate Ubar}
\overline{U}(x,y, \varphi) := (1 + \varphi) U \big( x,y, \frac{\varphi}{1 + \varphi} \big)  = (1 + \varphi) \frac{1}{\alpha} \int_{x - \alpha y}^x v \big( x' , \frac{\varphi}{1 + \varphi} \big) d x' = \frac{1}{\alpha} \int_{x - \alpha y}^x \overline{v} (x', \varphi) dx' ,
\end{align}
as the candidate for the value function $\overline{V}$ of \eqref{Value Function Vbar}. Furthermore, by defining $\hat{U}(x,y,z) := \overline{U}(x,y, e^{\frac{\gamma}{\sigma}(x+z)} )$ and exploiting the relationship \eqref{Value function parabolic formulation} we can derive 
\begin{align}\label{Candidate Uhat}
\hat{U}(x,y,z) =  \frac{1}{\alpha} \int_{x -\alpha y}^x \hat{v}(x', x + z - x') dx' = \frac{1}{\alpha} \int_z^{z + \alpha y} \hat{v} (x + z -q,q) dq ,
\end{align}
where the last equality above follows from a simple change of variables. With regard to Proposition \ref{Proposition: Smooth fit} we can state the following result, whose proof is based on direct computations. 
\begin{lemma}\label{Lemma: hatU is C1}
The function $\hat{U}$ of \eqref{Candidate Uhat} is such that $\hat{U} \in C^1 (\mathcal{O})$. Moreover, $\hat{U}_{xy},\,\hat{U}_{yz} \in C(\mathcal{O})$ and $\hat{U}_{xx} , \, \hat{U}_{xz} \in L_{\text{loc}}^\infty (\mathcal{O})$. 
\end{lemma}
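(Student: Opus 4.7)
The plan is to prove Lemma 7.1 by direct differentiation under the integral sign, using the representation
\begin{equation*}
\hat{U}(x,y,z) = \frac{1}{\alpha}\int_z^{z+\alpha y} \hat{v}(x+z-q, q)\,dq
\end{equation*}
together with the regularity of $\hat{v}$ established in Proposition \ref{Proposition: Smooth fit}, namely $\hat{v} \in C^1(\mathbb{R}^2)$ and $\hat{v}_{xx} \in L_{\text{loc}}^\infty(\mathbb{R}^2)$. Since $\hat{v}$ is continuous and the integrand is jointly continuous in $(x,y,z,q)$, classical Leibniz differentiation is permitted whenever the derivative of the integrand is itself locally integrable.

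First, differentiating with respect to the limits and parameters of the integral, I obtain
\begin{equation*}
\hat{U}_y(x,y,z) = \hat{v}(x-\alpha y,\, z+\alpha y), \qquad \hat{U}_x(x,y,z) = \frac{1}{\alpha}\int_z^{z+\alpha y}\hat{v}_x(x+z-q,q)\,dq,
\end{equation*}
and, using the Leibniz rule to account for both the limits and the $z$-dependence in the integrand,
\begin{equation*}
\hat{U}_z(x,y,z) = \frac{1}{\alpha}\bigl[\hat{v}(x-\alpha y,\,z+\alpha y) - \hat{v}(x,z)\bigr] + \hat{U}_x(x,y,z).
\end{equation*}
Each of these expressions is continuous on $\mathcal{O}$: $\hat{U}_y$ and the bracket term in $\hat{U}_z$ are continuous because $\hat{v}\in C^0(\mathbb{R}^2)$, while $\hat{U}_x$ is continuous because $\hat{v}_x\in C^0(\mathbb{R}^2)$ and the integration is over a compact interval depending continuously on $(y,z)$ (a standard dominated-convergence argument, combined with the local boundedness of $\hat{v}_x$, yields joint continuity in all three variables). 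This gives $\hat{U}\in C^1(\mathcal{O})$.

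For the higher-order derivatives, I would again differentiate the expressions obtained above. Directly,
\begin{equation*}
\hat{U}_{xy}(x,y,z) = \hat{v}_x(x-\alpha y,\,z+\alpha y), \qquad \hat{U}_{yz}(x,y,z) = \hat{v}_z(x-\alpha y,\,z+\alpha y),
\end{equation*}
and both belong to $C(\mathcal{O})$ since $\hat{v}\in C^1(\mathbb{R}^2)$. For the remaining second derivatives, Leibniz differentiation yields
\begin{equation*}
\hat{U}_{xx}(x,y,z) = \frac{1}{\alpha}\int_z^{z+\alpha y}\hat{v}_{xx}(x+z-q,q)\,dq,
\end{equation*}
and
\begin{equation*}
\hat{U}_{xz}(x,y,z) = \frac{1}{\alpha}\bigl[\hat{v}_x(x-\alpha y,\,z+\alpha y) - \hat{v}_x(x,z)\bigr] + \hat{U}_{xx}(x,y,z).
\end{equation*}
Since $\hat{v}_{xx}\in L_{\text{loc}}^\infty(\mathbb{R}^2)$, for any compact $K\subset\mathcal{O}$ the integrand in $\hat{U}_{xx}$ is bounded uniformly on $K$, so $\hat{U}_{xx}\in L_{\text{loc}}^\infty(\mathcal{O})$; the same conclusion transfers to $\hat{U}_{xz}$ since its first summand is continuous.

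The only subtle point is justifying that the Leibniz rule applies for $\hat{U}_{xx}$ and $\hat{U}_{xz}$, because $\hat{v}_{xx}$ is merely in $L_{\text{loc}}^\infty$ rather than continuous. I expect this to be the main (mild) obstacle: one must show that $\hat{U}_x$ admits $\frac{1}{\alpha}\int_z^{z+\alpha y}\hat{v}_{xx}(x+z-q,q)\,dq$ as its weak $x$-derivative, which follows from approximating $\hat{v}$ by smooth functions (via convolution in $x$) and passing to the limit using the uniform local bound on $\hat{v}_{xx}$ together with dominated convergence. All other computations are routine applications of the Leibniz rule.
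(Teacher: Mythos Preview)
Your proof is correct and follows essentially the same approach as the paper: both compute the first- and second-order partial derivatives of $\hat{U}$ by direct Leibniz differentiation of the integral representation \eqref{Candidate Uhat}, arriving at the same explicit formulae, and then read off the required regularity from Proposition~\ref{Proposition: Smooth fit}. Your additional remark on justifying the Leibniz rule for $\hat{U}_{xx}$ and $\hat{U}_{xz}$ via mollification when $\hat{v}_{xx}$ is only in $L_{\text{loc}}^\infty$ is a welcome clarification that the paper leaves implicit under the heading ``direct computations.''
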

\begin{proof}
Notice that \eqref{Candidate Uhat} gives 
\begin{align}
\hat{U}_x (x,y,z) &= \frac{1}{\alpha} \int_z^{z + \alpha y} \hat{v}_x (x + z - q,q)dq,  \qquad 
\hat{U}_y (x,y,z) = \hat{v}(x - \alpha y, z + \alpha y),  \label{Uhat_x and Uhat_y}\\
\hat{U}_z (x,y,z) &= \frac{1}{\alpha} \int_z^{z + \alpha y} \hat{v}_x (x + z -q,q) dq + \frac{1}{\alpha} \left[ \hat{v} (x- \alpha y, z + \alpha y) - \hat{v}(x,z) \right] \nonumber \\
&= \frac{1}{\alpha} \int_z^{z + \alpha y} \hat{v}_z (x+z -q ,q) dq, \label{Uhat_z}
\end{align}
so that $\hat{U}_x, \hat{U}_y$ and $\hat{U}_z$ are continuous due to Proposition \ref{Proposition: Smooth fit}. Moreover, Proposition \ref{Proposition: Smooth fit} also implies that
\begin{align}\label{Uhat_xx}
\hat{U}_{xx}(x,y,z) = \frac{1}{\alpha} \int_z^{z + \alpha y} \hat{v}_{xx} (x+z-q,q)dq ,
\end{align}
is locally bounded, and the mixed derivatives 
\begin{align*}
\hat{U}_{xy} (x,y,z) = \hat{v}_x(x- \alpha y ,z + \alpha y), \qquad
\hat{U}_{yz}(x,y,z) = \hat{v}_z (x - \alpha y, z + \alpha y),
\end{align*}
are continuous, while 
\begin{align*}
\hat{U}_{xz} (x,y,z) = \frac{1}{\alpha} \int_z^{z + \alpha y} \hat{v}_{xx} (x + z -q,q) dq + \frac{1}{\alpha} \left[ \hat{v}_x (x - \alpha y, z + \alpha y) - \hat{v}_x (x,z) \right],
\end{align*}
is locally bounded. Furthermore, it is easy to see that $\hat{U}_{yx} = \hat{U}_{xy}$, $\hat{U}_{xz} = \hat{U}_{zx}$ and $\hat{U}_{yz} = \hat{U}_{zy}$. \qed
\end{proof}
The proof of the next corollary follows from \eqref{Uhat_x and Uhat_y}-\eqref{Uhat_z} and direct computations.
\begin{corollary}\label{Corollary: Rewrite C2 and C3}
One has
\begin{align}\label{aUhatx - aUhatz + Uhaty = vhat}
\alpha \hat{U}_x (x,y,z) - \alpha \hat{U}_z (x,y,z) + \hat{U}_y (x,y,z) = \hat{v}(x,z) \geq \big( e^x - \kappa \big) \big( 1 + e^{\frac{\gamma}{\sigma}(x+ z) } \big),
\end{align}
so
\begin{align}
\label{eq:W3}
\mathbb{W}_3:=\big\{ (x,y,z) \in \mathcal{O}:\,\alpha \hat{U}_x (x,y,z) - \alpha \hat{U}_z (x,y,z) + \hat{U}_y (x,y,z) > \big( e^x - \kappa \big) \big( 1 + e^{\frac{\gamma}{\sigma}(x+ z) } \big) \big\} = \mathcal{C}_3,  
\end{align}
with $\mathcal{C}_3$ as in \eqref{Continuation region C3}. Furthermore, $\hat{U}_x - \hat{U}_z = \overline{U}_x$ as well as $\hat{U}_y = \overline{U}_y$, and we have 
\begin{align}
\label{eq:W2}
\mathbb{W}_2:=\big\{ (x,y,\varphi) \in \mathbb{R} \times (0,\infty) \times (0,\infty):\, \alpha \overline{U}_x (x,y,\varphi)  + \overline{U}_y (x,y,\varphi) > \big( e^x - \kappa \big) \big( 1 + \varphi \big) \big\} = \mathcal{C}_2,
\end{align}
with $\mathcal{C}_2$ of \eqref{continuation after girsanov}.
\end{corollary}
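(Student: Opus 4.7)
The plan is to reduce everything to the explicit derivative formulas already recorded in Lemma \ref{Lemma: hatU is C1} and to invoke the elementary bound $\hat{v} \geq (e^x-\kappa)(1+e^{\gamma(x+z)/\sigma})$, which is immediate from the definition of $\hat{v}$ as a supremum (take $\tau = 0$). The computation is essentially a chain-rule identity, so the work is bookkeeping rather than analysis.

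First I would combine the three formulas in \eqref{Uhat_x and Uhat_y}--\eqref{Uhat_z}. From them,
\begin{align*}
\alpha\hat{U}_x(x,y,z) - \alpha\hat{U}_z(x,y,z) &= \int_z^{z+\alpha y}\hat{v}_x(x+z-q,q)\,dq - \int_z^{z+\alpha y}\hat{v}_x(x+z-q,q)\,dq \\
&\qquad - \bigl[\hat{v}(x-\alpha y, z+\alpha y) - \hat{v}(x,z)\bigr] \\
&= \hat{v}(x,z) - \hat{v}(x-\alpha y, z+\alpha y),
\end{align*}
where I used the second representation of $\hat{U}_z$ given on line \eqref{Uhat_z}. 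Adding $\hat{U}_y(x,y,z)=\hat{v}(x-\alpha y,z+\alpha y)$ yields $\alpha\hat{U}_x-\alpha\hat{U}_z+\hat{U}_y=\hat{v}(x,z)$. The inequality $\hat{v}(x,z)\geq(e^x-\kappa)(1+e^{\gamma(x+z)/\sigma})$ is then read off from \eqref{Value function parabolic formulation}. Comparing with the definition \eqref{Continuation region C3} of $\mathcal{C}_3$ and the definition \eqref{eq:W3} of $\mathbb{W}_3$ gives the identification $\mathbb{W}_3=\mathcal{C}_3$.

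Next I would pass from $\hat{U}$ to $\overline{U}$ by the chain rule. Since $\hat{U}(x,y,z)=\overline{U}(x,y,e^{\gamma(x+z)/\sigma})$, writing $\varphi=e^{\gamma(x+z)/\sigma}$ gives
\begin{equation*}
\hat{U}_x = \overline{U}_x + \tfrac{\gamma}{\sigma}\varphi\,\overline{U}_\varphi,\qquad \hat{U}_z = \tfrac{\gamma}{\sigma}\varphi\,\overline{U}_\varphi,\qquad \hat{U}_y = \overline{U}_y,
\end{equation*}
so $\hat{U}_x-\hat{U}_z=\overline{U}_x$ and $\hat{U}_y=\overline{U}_y$ as claimed. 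Combining with the first part, $\alpha\overline{U}_x+\overline{U}_y=\hat{v}(x,z)=\overline{v}(x,\varphi)$, and the inequality $\overline{v}(x,\varphi)\geq(e^x-\kappa)(1+\varphi)$ (again from $\tau=0$) now matches the defining inequality of $\mathcal{C}_2$ in \eqref{continuation after girsanov}, so $\mathbb{W}_2=\mathcal{C}_2$.

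There is no real obstacle here; the only point that requires minimal care is the cancellation that produces $\hat{v}(x,z)-\hat{v}(x-\alpha y,z+\alpha y)$ in the first display, which hinges on using the \emph{second} (integral-of-$\hat{v}_z$) representation of $\hat{U}_z$ rather than the first. Everything else is a direct substitution from Lemma \ref{Lemma: hatU is C1} and the chain rule, together with the trivial lower bound on the optimal stopping value.
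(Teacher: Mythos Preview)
Your proof is correct and follows exactly the route the paper intends: the identity \eqref{aUhatx - aUhatz + Uhaty = vhat} drops out of the derivative formulas \eqref{Uhat_x and Uhat_y}--\eqref{Uhat_z}, and the rest is the chain rule plus the trivial bound from $\tau=0$. One small slip in your commentary: the display you wrote actually uses the \emph{first} representation of $\hat{U}_z$ in \eqref{Uhat_z} (the one with the $\hat{v}_x$-integral plus the bracketed difference), not the integral-of-$\hat{v}_z$ form you mention twice; the computation is nonetheless correct.
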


\hspace{-0.55cm}7.1 \textbf{Construction of the optimal control for the state space process $(X,Y,\Phi)$.} Recall $b$ as in \eqref{boundary b}, which is nondecreasing and left-continuous by Lemma \ref{Lemma: Properties boundary b}. Then, for any $(x,y,\varphi) \in \mathbb{R} \times (0,\infty)\times (0,\infty)$, define the admissible control strategy  
\begin{align}\label{Control xi hat explicit}
\hat{\xi}_t := y \wedge \sup_{0 \leq s \leq t} \frac{1}{\alpha} \Big[ x - b(\Phi_s^\varphi) + \mu_0 s + \sigma B_s \Big]^+, \qquad t \geq 0, \qquad \hat{\xi}_{0-} = 0,
\end{align}
according to which the investor should only execute a lump-sum amount of shares whenever the process $X_{t-}$ is strictly inside the selling region and hence strictly above the boundary $b(\Phi_t)$. More precisely, if $y \leq \frac{1}{\alpha} (x - b(\varphi))$ it is optimal to sell the complete amount of shares instantaneously, while for $y > \frac{1}{\alpha} (x - b(\varphi))$ the system is brought immediately to the level $(X_0, Y_0, \Phi_0)= (b(\varphi), y - \frac{1}{\alpha} (x - b(\varphi)), \varphi)$. Afterwards, the strategy \eqref{Control xi hat explicit} prescribes to take action whenever the process $X_t$ approaches the boundary $b(\Phi_t)$ from below and the process $(X_t,Y_t)$ is obliquely reflected at the belief-dependent boundary $b(\Phi_t)$ in the direction $(- \alpha, -1)$. Hence, the process $X_t$ is kept inside the interval $(-\infty , b(\Phi_t)]$ with ``minimal effort''. These actions are the so-called \textit{Skorokhod reflection-type policies} and caused by the continuous part $\hat{\xi}^c$ of the control $\hat{\xi}$. Notice that the nondecreasing process $\hat{\xi}$, and the induced random measure $d\hat{\xi}$ on $[0,\infty)$, are such that (recall \eqref{eq:W2})
\begin{align}\label{Control xi hat X Phi}
\begin{cases}
(X_{t}^{\hat{\xi}} , Y_{t}^{\hat{\xi}} , \Phi_t ) \in \overline{\mathbb{W}}_2,  \quad \mathbb{Q} \otimes dt \text{-a.s.};\\
d \hat{\xi}_t ~\text{has support on}~\{ t \geq 0:~ (X_{t-}^{\hat{\xi}} , Y_{t-}^{\hat{\xi}}, \Phi_t) \notin \mathbb{W}_2 \}; \\
\hat{\xi}_t \leq y, \quad t \geq 0.
\end{cases}
\end{align}
Furthermore, due to \eqref{Dynamics Zt xi}-\eqref{Value function Vhat} and Corollary \ref{Corollary: Rewrite C2 and C3}, we can express the control $\hat{\xi}$ equivalently in terms of the state-process $(X^{\hat{\xi}} , Y^{\hat{\xi}}, Z^{\hat{\xi}})$ by (cf.\ \eqref{eq:W3})
\begin{align}\label{Control xi hat X Z}
\begin{cases}
(X_t^{\hat{\xi}} , Y_t^{\hat{\xi}} , Z_t^{\hat{\xi}} ) \in \overline{\mathbb{W}}_3,  \quad \mathbb{Q} \otimes dt \text{-a.s.};\\
d \hat{\xi}_t ~\text{has support on}~\{ t \geq 0:~ (X_{t-}^{\hat{\xi}}, Y_{t-}^{\hat{\xi}}, Z_{t-}^{\hat{\xi}} ) \notin \mathbb{W}_3 \}; \\
\hat{\xi}_t \leq y, \quad t \geq 0.
\end{cases}
\end{align}
In the following, we prove that in fact $\hat{\xi}$ is an optimal control for problem \eqref{Value function Vhat} and $\hat{U} = \hat{V}$. As an immediate consequence we have that $\overline{U} = \overline{V}$ and $U=V$. 
\begin{theorem}[Verification Theorem]\label{Theorem: Verification Theorem}
Let $(x,y,z) \in \mathbb{R} \times [0,\infty) \times \mathbb{R}$ and $\hat{U}(x,y,z)$ as in \eqref{Candidate Uhat}. Then, one has $\hat{U}(x,y,z) = \hat{V} (x,y,z)$ and $\hat{\xi}$ as in \eqref{Control xi hat explicit} is optimal for the singular control problem \eqref{Value function Vhat}. 
\end{theorem}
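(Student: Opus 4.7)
The plan is a classical verification argument for singular stochastic control, tailored to the weak regularity of $\hat{U}$ established in Lemma \ref{Lemma: hatU is C1} and exploiting the free-boundary characterization of $\hat{v}$.

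I would first check that $\hat{U}$ satisfies the variational inequality natural to problem \eqref{Value function Vhat}. By differentiating under the integral in $\hat{U}(x,y,z)=\alpha^{-1}\int_z^{z+\alpha y}\hat{v}(x+z-q,q)\,dq$ and invoking the free-boundary problem \eqref{Free Boundary Problem},
\begin{align*}
(\mathcal{L}_{X,Z}-r)\hat{U}(x,y,z)=\frac{1}{\alpha}\int_z^{z+\alpha y}(\mathcal{L}_{X,Z}-r)\hat{v}(x+z-q,q)\,dq\leq 0.
\end{align*}
The monotonicity of $c$ (Propositions \ref{Proposition Mono 1} and \ref{Proposition Mono 2}) forces the integration path $q\mapsto(x+z-q,q)$ to remain in $\mathcal{C}_3$ whenever $(x,z)\in\mathcal{C}_3$, since $x+z-q\leq x<c(z)\leq c(q)$. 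Thus $(\mathcal{L}_{X,Z}-r)\hat{U}=0$ on $\mathbb{W}_3$. The gradient inequality $\alpha\hat{U}_x+\hat{U}_y-\alpha\hat{U}_z=\hat{v}\geq(e^x-\kappa)(1+e^{\gamma(x+z)/\sigma})$ is Corollary \ref{Corollary: Rewrite C2 and C3}, with equality precisely on $\mathcal{S}_3$.

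The second step is a generalized It\^o expansion of $e^{-rt}\hat{U}(X_t^\xi,Y_t^\xi,Z_t^\xi)$ for an arbitrary $\xi\in\mathcal{A}(y)$. Since $\hat{U}\in C^1(\mathcal{O})$ with $\hat{U}_{xx}\in L^\infty_{\mathrm{loc}}(\mathcal{O})$ and $Y^\xi,Z^\xi$ are of finite variation, a mollification-and-localization scheme (the correction involves only $\hat{U}_{xx}$) yields, after suppressing the local martingale via stopping times $\tau_n$,
\begin{align*}
\hat{U}(x,y,z)&=\mathbb{E}^{\mathbb{Q}}\!\big[e^{-r(T\wedge\tau_n)}\hat{U}(X_{T\wedge\tau_n}^\xi,Y_{T\wedge\tau_n}^\xi,Z_{T\wedge\tau_n}^\xi)\big]-\mathbb{E}^{\mathbb{Q}}\!\Big[\int_0^{T\wedge\tau_n}e^{-rs}(\mathcal{L}_{X,Z}-r)\hat{U}\,ds\Big]\\
&\quad+\mathbb{E}^{\mathbb{Q}}\!\Big[\int_0^{T\wedge\tau_n}e^{-rs}\big[\alpha\hat{U}_x+\hat{U}_y-\alpha\hat{U}_z\big]d\xi_s^c+\sum_{s\leq T\wedge\tau_n,\,\Delta\xi_s>0}e^{-rs}\!\!\int_0^{\Delta\xi_s}\!\!\big[\alpha\hat{U}_x+\hat{U}_y-\alpha\hat{U}_z\big]du\Big].
\end{align*}
Observing that along each jump $X+Z$ is preserved, the factor $(1+e^{\gamma(X_{s-}+Z_{s-})/\sigma})$ is constant in $u\in[0,\Delta\xi_s]$, so plugging in the HJB inequalities yields $\hat{U}(x,y,z)\geq\mathbb{E}^{\mathbb{Q}}[\text{truncated cost up to }T\wedge\tau_n]$ plus a nonnegative boundary term. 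Passing to the limit $n,T\to\infty$ using the growth bound $\hat{v}(x,z)\leq K_1 e^x(1+e^{\gamma(x+z)/\sigma})$ from Lemma \ref{Lemma: Properties of vbar}(i), and exploiting Assumption \ref{Assumption: Well-posedness} to deduce $\mathbb{E}^{\mathbb{Q}}[e^{-rT}\hat{U}(X_T^\xi,Y_T^\xi,Z_T^\xi)]\to 0$, gives $\hat{U}(x,y,z)\geq \hat{V}(x,y,z)$.

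For the matching lower bound and the optimality of $\hat{\xi}$, I would invoke \eqref{Control xi hat X Z}: the state $(X_t^{\hat{\xi}},Y_t^{\hat{\xi}},Z_t^{\hat{\xi}})$ stays in $\overline{\mathbb{W}}_3=\overline{\mathcal{C}_3}$ $\mathbb{Q}\otimes dt$-almost surely, where $(\mathcal{L}_{X,Z}-r)\hat{U}=0$ by the first step together with continuity of the relevant derivatives up to $\partial\mathcal{C}_3$; moreover $d\hat{\xi}$ (including its jumps, along which the whole segment traversed lies on $\partial\mathbb{W}_3$) is supported where the gradient constraint is saturated. Both inequalities in the It\^o expansion therefore become equalities for $\xi=\hat{\xi}$, giving $\hat{U}(x,y,z)=J(x,y,z,\hat{\xi})\leq \hat{V}(x,y,z)\leq\hat{U}(x,y,z)$ and concluding the proof. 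The main technical hurdle I anticipate is the rigorous justification of the generalized It\^o formula under the only weak regularity $\hat{U}\in C^1$ with $\hat{U}_{xx}\in L^\infty_{\mathrm{loc}}$; because $Z^\xi$ is deterministic and $Y^\xi$ has no martingale part, mollification in the $x$-variable combined with dominated convergence should suffice, but the handling of lump-sum jumps (in particular the decomposition of $\hat{U}$ along the jump direction $(-\alpha,-1,\alpha)$) requires care.
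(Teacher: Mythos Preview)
Your proposal is correct and follows essentially the same verification route as the paper: establish the HJB inequalities for $\hat{U}$ from the integral representation and the free-boundary problem for $\hat{v}$ (using the monotonicity of $c$ exactly as you do to get equality on $\mathbb{W}_3$), apply a mollified/localized It\^o expansion (the paper invokes Fleming--Soner, Ch.~8, Th.~4.1), and then saturate both inequalities for $\hat{\xi}$ via \eqref{Control xi hat X Z}, controlling the terminal term through the growth bound on $\hat{v}$ and a change of measure back to $\mathbb{P}$. One minor imprecision: the segment traversed during the initial lump-sum jump of $\hat{\xi}$ lies in $\mathcal{S}_3$ (not merely on $\partial\mathbb{W}_3$), but since the gradient constraint $\alpha\hat{U}_x+\hat{U}_y-\alpha\hat{U}_z=\hat{v}=(e^x-\kappa)(1+e^{\gamma(x+z)/\sigma})$ is saturated on all of $\mathcal{S}_3$, the argument is unaffected.
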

\begin{proof}
First of all, for $y =0$ we have $\hat{U}(x,0,z) =0= \hat{V}(x,0,z)$. Hence, in the following we assume $(x,y,z)\in \mathcal{O}$. 

\textbf{1.~We prove $\hat{U} \geq \hat{V}$}. Take an arbitrary control $\xi \in \mathcal{A}(y)$ and for $R>0$ and $N \in \mathbb{N}$ we set $\tau_{R,N} := \inf \{ s \geq 0:~\vert (X_s^\xi , Z_s^\xi)\vert > R \} \wedge N$. Due to Lemma \ref{Lemma: hatU is C1} we can proceed as in Fleming and Soner \cite{FlSo}, Chapter 8, Th.~4.1 to obtain (after performing an approximation of $\hat{U}$ via mollifiers and taking limits)
\begin{align}\label{Ito on Uhat in VerThe}
\mathbb{E}^\mathbb{Q}_{(x,y,z)} \big[ &e^{-r \tau_{R,N}}  \hat{U}(X_{\tau_{R,N}}^\xi , Y_{\tau_{R,N}}^\xi , Z_{\tau_{R,N}}^\xi ) - \hat{U}(x,y,z) \big] \nonumber\\
&= \mathbb{E}_{(x,y,z)}^\mathbb{Q} \Big[ \int_0^{\tau_{R,N}} e^{-rs} (\mathcal{L}_{X,Z} -r) \hat{U}(X_s^\xi , Y_s^\xi , Z_s^\xi ) ds + \underbrace{\sigma \int_0^{\tau_{R,N}} e^{-rs} \hat{U}_x (X_{\tau_{R,N}}^\xi , Y_{\tau_{R,N}}^\xi , Z_{\tau_{R,N}}^\xi) d B_s}_{=: M_{R,N}} \nonumber \\
&\quad + \sum_{0 \leq s \leq \tau_{R,N}} e^{-rs} \left( \hat{U}(X_s^\xi , Y_s^\xi , Z_s^\xi) - \hat{U}(X_{s-}^\xi , Y_{s-}^\xi , Z_{s-}^\xi) \right) \nonumber \\
&\quad + \int_0^{\tau_{R,N}} e^{-rs} \left( - \alpha \hat{U}_x (X_s^\xi , Y_s^\xi , Z_s^\xi) - \hat{U}_y (X_s^\xi , Y_s^\xi , Z_s^\xi) + \alpha \hat{U}_z (X_s^\xi , Y_s^\xi , Z_s^\xi) \right) d \xi_s^c \Big].
\end{align}
Notice that 
\begin{align}\label{Uhat - Uhat in VerThe}
\hat{U}(X_s^\xi , Y_s^\xi , Z_s^\xi) - \hat{U}(X_{s-}^\xi , Y_{s-}^\xi , Z_{s-}^\xi) 
&= \hat{U}(X_{s-}^\xi - \alpha \Delta \xi_s , Y_{s-}^\xi - \Delta \xi_s , Z_{s-}^\xi + \alpha \Delta \xi_s ) - \hat{U}(X_{s-}^\xi , Y_{s-}^\xi ,Z_{s-}^\xi ) \nonumber \\
&= \int_0^{\Delta \xi_s} \frac{\partial \hat{U} (X_{s-}^\xi - \alpha u, Y_{s-}^\xi - u, Z_{s-}^\xi + \alpha u)}{\partial u} du \nonumber \\
&= \int_0^{\Delta \xi_s} (- \alpha \hat{U}_x - \hat{U}_y + \alpha \hat{U}_z) (X_{s-}^\xi - \alpha u , Y_{s-}^\xi - u, Z_{s-}^\xi + \alpha u)du.
\end{align}
Hence, combining \eqref{Ito on Uhat in VerThe} and \eqref{Uhat - Uhat in VerThe}, upon adding the term 
\begin{align*}
\mathbb{E}_{(x,y,z)}^\mathbb{Q} \big[ \int_0^{\tau_{R,N}} e^{-rs} (e^{X_s^\xi} - \kappa)(1 + e^{\frac{\gamma}{\sigma} (X_s^\xi + Z_s^\xi)} ) d \xi_s^c + \sum_{0 \leq s \leq \tau_{R,N}} e^{-rs} \int_0^{\Delta \xi_s} (e^{X_{s-}^\xi - \alpha u} - \kappa)(1 + e^{\frac{\gamma}{\sigma} (X_{s-}^\xi + Z_{s-}^\xi )} ) du \big],
\end{align*}
on both sides, yields
\begin{align}\label{Equality in VerThe}
&\mathbb{E}_{(x,y,z)}^\mathbb{Q} \Big[ \int_0^{\tau_{R,N}} e^{-rs} (e^{X_s^\xi} -\kappa) (1 + e^{\frac{\gamma}{\sigma} (X_s^\xi + Z_s^\xi)}) d \xi_s^c  \nonumber \\
&\hspace*{2cm}  + \sum_{0 \leq s \leq \tau_{R,N}} e^{-rs} \int_0^{\Delta \xi_s} ( e^{X_{s-}^\xi - \alpha u} -\kappa)(1 + e^{\frac{\gamma}{\sigma} (X_{s-}^\xi + Z_{s-}^\xi )}) du - \hat{U} (x,y,z) \Big] \nonumber \\
= ~ &\mathbb{E}_{(x,y,z)}^\mathbb{Q} \Big[ \int_0^{\tau_{R,N}} e^{-rs} (\mathcal{L}_{X,Z}- r) \hat{U}(X_s^\xi, Y_s^\xi , Z_s^\xi) ds + M_{R,N} - e^{-r \tau_{R,N}}  \hat{U}(X_{\tau_{R,N}}^\xi , Y_{\tau_{R,N}}^\xi , Z_{\tau_{R,N}}^\xi )\nonumber \\
&\hspace*{2cm} + \sum_{0 \leq s \leq \tau_{R,N}} e^{-rs} \int_0^{\Delta \xi_s} (- \alpha \hat{U}_x - \hat{U}_y + \alpha \hat{U}_z) (X_{s-}^\xi - \alpha u , Y_{s-}^\xi - u, Z_{s -}^\xi + \alpha u )  \nonumber \\
&\hspace*{8.3cm} + (e^{X_{s-}^\xi - \alpha u} - \kappa)(1 + e^{\frac{\gamma}{\sigma} (X_{s-}^\xi + Z_{s-}^\xi )} ) du  \nonumber \\
&\hspace*{2cm} + \int_0^{\tau_{R,N}} e^{-rs} \big( ( - \alpha \hat{U}_x - \hat{U}_y + \alpha \hat{U}_z) (X_s^\xi , Y_s^\xi , Z_s^\xi) + (e^{X_s^\xi} -\kappa) (1 + e^{\frac{\gamma}{\sigma} (X_s^\xi + Z_s^\xi)} ) \big) d \xi_s^c \Big].
\end{align}
We observe that \eqref{Uhat_x and Uhat_y}-\eqref{Uhat_xx} imply
\begin{align}\label{L-r Uhat leq 0}
 (\mathcal{L}_{X,Z} -r )\hat{U}(x,y,z) = \frac{1}{\alpha} \int_{x-\alpha y}^{x} (\mathcal{L}_{X,Z} - r) \hat{v}(x' ,x+z - x') dx' \leq 0,
\end{align} 
where the last inequality follows from the supermartingale property of $(e^{-rt} \hat{v}(X_t, Z_t))_t$ combined with the regularity obtained in Proposition \ref{Proposition: Smooth fit}. Hence, due to \eqref{aUhatx - aUhatz + Uhaty = vhat} and since $\hat{U} \geq 0$ and $\mathbb{E}^\mathbb{Q}_{(x,y,z)} [M_{R,N}] = 0$,  \eqref{Equality in VerThe} writes as 
\begin{align}\label{Inequality VerThe}
\hat{U} (x,y,z) 
&\geq \mathbb{E}^\mathbb{Q}_{(x,y,z)} \Big[ \int_0^{\tau_{R,N}} e^{-rs} (e^{X_s^\xi} - \kappa) (1 + e^{\frac{\gamma}{\sigma} (X_s^\xi + Z_s^\xi)}) d \xi_s^c  \nonumber \\
&\hspace*{5cm}+  \sum_{0 \leq s \leq \tau_{R,N}} e^{-rs} \int_0^{\Delta \xi_s} ( e^{X_{s-}^\xi - \alpha u} - \kappa)(1 + e^{\frac{\gamma}{\sigma} (X_{s-}^\xi + Z_{s-}^\xi )}) du \Big].
\end{align}
Taking limits as $R \uparrow \infty$ as well as $N \uparrow \infty$, invoking the dominated convergence theorem due to Assumption \ref{Assumption: Well-posedness}, we obtain
\begin{align}
\hat{U}(x,y,z) &\geq \mathbb{E}^\mathbb{Q}_{(x,y,z)} \Big[ \int_0^\infty e^{-rs} (e^{X_s^\xi} - \kappa) (1 + e^{\frac{\gamma}{\sigma}(X_s^\xi + Z_s^\xi)} ) d \xi_s^c \nonumber \\
& \hspace*{5cm}+  \sum_{s: \Delta \xi_s \neq 0} e^{-rs} \int_0^{\Delta \xi_s} ( e^{X_{s-}^\xi - \alpha u} - \kappa)(1 + e^{\frac{\gamma}{\sigma} (X_{s-}^\xi + Z_{s-}^\xi )}) du \Big] \nonumber \\
&= J(x,y,z,\xi).
\end{align}
Since $\xi$ was arbitrary, we have
\begin{align}\label{Uhat geq Vhat}
\hat{U}(x,y,z) \geq \hat{V}(x,y,z),
\end{align}
for all $(x,y,z) \in \mathcal{O}$. That is, $\hat{U} \geq \hat{V}$ on $\mathcal{O}$. 

\textbf{2.~We prove that $\hat{U} \leq \hat{V}$}. In order to accomplish that, let $\hat{\xi}$ satisfy the conditions in \eqref{Control xi hat X Z} and define $\hat{\tau}_{R,N} = \inf \{ t \geq 0: ~ \vert (X_t^{\hat{\xi}} , Z_t^{\hat{\xi}} ) \vert > R \} \wedge N$, again for $R>0$ and $N \in \mathbb{N}$. Notice that the properties of $\hat{\xi}$ imply equalities in \eqref{aUhatx - aUhatz + Uhaty = vhat} and \eqref{L-r Uhat leq 0}, where the equality in \eqref{L-r Uhat leq 0} follows from the monotonicity of $c$ and we can deduce that $(x' , x+z -x') \in \mathbb{W}_3$ for $(x,y,z) \in \mathbb{W}_3$ and $x' \leq x$. Employing the same arguments as in the first part of the proof yields
\begin{align}\label{Equality for VerThe}
\hat{U} (x,y,z) &= \mathbb{E}^\mathbb{Q} _{(x,y,z)} \Big[ e^{-r \hat{\tau}_{R,N}} \hat{U}(X_{\hat{\tau}_{R,N}}^{\hat{\xi}} , Y_{\hat{\tau}_{R,N}}^{\hat{\xi}} , Z_{\hat{\tau}_{R,N}}^{\hat{\xi}} ) \Big] \nonumber \\ 
&\quad + \mathbb{E}^\mathbb{Q}_{(x,y,z)} \Big[ \int_0^{\hat{\tau}_{R,N}} e^{-rs} (e^{X_s^{\hat{\xi}}} -\kappa) (1 + e^{\frac{\gamma}{\sigma} (X_s^{\hat{\xi}} + Z_s^{\hat{\xi}})}) d \hat{\xi}_s^{c} \nonumber \\
&\hspace*{5cm}+  \sum_{0 \leq s \leq \hat{\tau}_{R,N}} e^{-rs} \int_0^{\Delta \hat{\xi}_s} ( e^{X_{s-}^{\hat{\xi}} - \alpha u} - \kappa)(1 + e^{\frac{\gamma}{\sigma} (X_{s-}^{\hat{\xi}} + Z_{s-}^{\hat{\xi}} )}) du \Big].
\end{align}
It is thus left to prove that
\begin{align}\label{Limit for VerThe}
\lim_{N \uparrow \infty} \lim_{R \uparrow \infty}  \mathbb{E}^\mathbb{Q}_{(x,y,z)} \Big[e^{-r \hat{\tau}_{R,N}} \hat{U}(X_{\hat{\tau}_{R,N}}^{\hat{\xi}} , Y_{\hat{\tau}_{R,N}}^{\hat{\xi}} , Z_{\hat{\tau}_{R,N}}^{\hat{\xi}} ) \Big] = 0,
\end{align}
since taking limits as $R \uparrow \infty $ and $N \uparrow \infty$ together with \eqref{Inequality VerThe} implies $J(x,y,z,\hat{\xi}) = \hat{U} (x,y,z)$ and hence $\hat{V}(x,y,z) \geq \hat{U}(x,y,z)$ for all $(x,y,z) \in \mathcal{O}$. Combining the latter with \eqref{Uhat geq Vhat} yields $\hat{U} = \hat{V}$ on $\mathcal{O}$. 

In order to prove \eqref{Limit for VerThe} we notice that Lemma \ref{Lemma: Properties of vbar} i), \eqref{Value function parabolic formulation} and \eqref{Candidate Uhat} imply 
\begin{align}\label{Upper bound on Uhat}
\hat{U}(x,y,z) \leq \frac{1}{\alpha} \int_z^{z+\alpha y} K_1 e^{x+z-q} (1 + e^{\frac{\gamma}{\sigma} (x+z -q + q) }) dq = \frac{1}{\alpha} K_1 e^x (1 + e^{\frac{\gamma}{\sigma} (x+z) }) (1 - e^{- \alpha y}),
\end{align}
and, since $y \mapsto \hat{U}(x,y,z)$ is increasing, we obtain 
\begin{align*}
0 \leq e^{-r \hat{\tau}_{R,N}} \hat{U}(X_{\hat{\tau}_{R,N}}^{\hat{\xi}} , Y_{\hat{\tau}_{R,N}}^{\hat{\xi}} , Z_{\hat{\tau}_{R,N}}^{\hat{\xi}} ) &\leq e^{-r \hat{\tau}_{R,N}} \hat{U}(X_{\hat{\tau}_{R,N}}^{\hat{\xi}} , y , Z_{\hat{\tau}_{R,N}}^{\hat{\xi}} ) \\ 
&\leq \frac{K_1}{\alpha} (1 - e^{- \alpha y} ) e^{-r \hat{\tau}_{R,N}} e^{X_{\hat{\tau}_{R,N}}^0} (1 + e^{ \frac{\gamma}{\sigma} (X_{\hat{\tau}_{R,N}}^0 + Z_{\hat{\tau}_{R,N}}^0 ) } ),
\end{align*}
where we used that $X_t^\xi \leq X_t^0$ as well as $X_t^\xi + Z_t^\xi = X_t^0 + Z_t^0$ a.s. Hence, taking expectations  yields
\begin{align}
0 
\leq \mathbb{E}^\mathbb{Q}_{(x,y,z)} \Big[ e^{-r \hat{\tau}_{R,N}} &\hat{U}(X_{\hat{\tau}_{R,N}}^{\hat{\xi}} , Y_{\hat{\tau}_{R,N}}^{\hat{\xi}} , Z_{\hat{\tau}_{R,N}}^{\hat{\xi}} )  \Big] \nonumber \\
&\leq \frac{K_1}{\alpha} (1 - e^{- \alpha y} ) \mathbb{E}^\mathbb{Q}_{(x,y,z)} \Big[ e^{-r \hat{\tau}_{R,N}} e^{X_{\hat{\tau}_{R,N}}^0} (1 + e^{\frac{\gamma}{\sigma} (X_{\hat{\tau}_{R,N}}^0 + Z_{\hat{\tau}_{R,N}}^0 )} ) \Big] \nonumber \\ 
&= \frac{K_1}{\alpha} (1 - e^{-\alpha y})(1 + e^{\frac{\gamma}{\sigma}(x+z)}) \mathbb{E}_{(x,y,\pi)} \Big[ e^{-r \hat{\tau}_{R,N}} e^{X_{\hat{\tau}_{R,N}}^0} \Big], 
\end{align}
with $\pi := e^{\frac{\gamma}{\sigma}(x+z)}/(1 + e^{\frac{\gamma}{\sigma}(x+z)} )$ and the last equality follows from a change of measure as in Section \ref{Section: Decoupling Change of Measure}. Upon using Assumption \ref{Assumption: Well-posedness}, it is easy to check that \eqref{Limit for VerThe} holds true, thus completing the proof. \qed
\end{proof}
\begin{Remark}
We can use the transformation \eqref{Transformation Tbar} from $(x,z)$- to $(x, \varphi)$-coordinates in order to show that $\hat{\xi}$ is an optimal control for problem \eqref{Value Function Vbar} as well.  
Indeed, recall \eqref{Process Z} and $\hat{V}(x,y,z) = \hat{V}(x,y,\frac{\sigma}{\gamma} \ln (\varphi) -x) = \overline{V}(x,y,\varphi)$ to conclude 
\begin{align*}
\overline{V}(x,y,\varphi) &= \mathbb{E}^\mathbb{Q}_{(x,y,\varphi)} \Big[ \int_0^\infty e^{-rs} (e^{X_s^{\hat{\xi}}} - \kappa ) (1 + \Phi_s) d\hat{\xi}_s^{c}  + \sum_{s:~\Delta \hat{\xi}_s \neq 0} e^{-rs} \int_0^{\Delta \hat{\xi}_s} (e^{X_{s-}^{\hat{\xi}} - \alpha u} - \kappa)(1 + \Phi_{s} ) du \Big].
\end{align*}
Furthermore, the latter equation and \eqref{Candidate Ubar} imply $U(x,y,\pi)=V(x,y,\pi)$ for all $(x,y,\pi) \in \mathbb{R}\times (0,\infty) \times (0,1)$. 
\end{Remark}
\begin{Remark}
Letting $\tilde{\tau}(x,y,\varphi) := \inf \{ t \geq 0:~x + \mu_0 t + \sigma B_t \geq b(\Phi_s^\varphi) \}$, the optimal execution strategy $\hat{\xi}$ as in \eqref{Control xi hat explicit} converges as $\alpha \downarrow 0$ to the execution strategy 
\begin{align*}
\tilde{\xi}_t = 
\begin{cases}
0 & t < \tilde{\tau} (x,y,\varphi), \\
y & t \geq \tilde{\tau} (x,y,\varphi),
\end{cases}
\end{align*}
which prescribes to sell the total amount of shares instantaneously when the process $X$ reaches the optimal execution boundary $b(\Phi)$. It is interesting to notice that the optimal solution and the value function are robust w.r.t.~the parameter $\alpha$. Indeed, by L'H\^{o}pital's rule, we see from \eqref{Candidate Ubar} that $\lim_{\alpha \downarrow 0} = y \overline{v}(x, \varphi), ~(x,y,\varphi) \in \mathbb{R}\times (0,\infty)\times (0,\infty)$. It is in fact easy to show via a verification theorem that $y \overline{v}(x,\varphi)$ and $\ttilde{\xi}$ are the value function and the optimal execution rule in the problem with no market impact. 
\end{Remark}
\begin{Remark}\label{Remark: limY}
Let $\hat{\sigma} := \inf \{ t \geq 0: ~ Y_t^\xi = 0 \}$ denote the time at which the portfolio is fully depleted. Imposing the constraint that the investor has to sell all assets until terminal time (cf.~Guo and Zervos \cite{GuZe}), we notice that for $y \leq \frac{1}{\alpha} (x - b(\varphi))$ the control strategy $\hat{\xi}$ of \eqref{Control xi hat explicit} still defines an optimal control, as the complete amount of shares is sold immediately at time $t=0$. However, for $y > \frac{1}{\alpha} (x - b(\varphi))$, simple calculations yield
\begin{align*}
\lim_{T \uparrow \infty} \mathbb{Q}( \hat{\sigma} > T) \geq 1 - \exp \Big( \frac{2 \mu_0}{\sigma^2} (\alpha y + x_0^* - x) \Big), 
\end{align*}
and we notice that for increasing $y$ and decreasing $x$, the probability increases that the investor does not sell the  entire amount of shares until terminal time. Hence, if we restrict the admissible strategies to all $\xi \in \mathcal{A}(y)$ such that $\lim_{T \to \infty} Y_T^\xi = 0$, the control strategy $\hat{\xi}$ of \eqref{Control xi hat explicit} does not provide an admissible execution strategy. In this case, arguing as in Guo and Zervos \cite{GuZe}, Proposition 5.1, we can use $\hat{\xi}$ to construct a sequence of $\epsilon$-optimal strategies. 
\end{Remark}

\section{Numerical Study}\label{Section: Comparative Statics Analysis}
In this section, we (i) perform a comparative statics analysis on the optimal execution boundaries $a$ and $b$ of \eqref{boundary a} and \eqref{boundary b}, respectively, as well as (ii) investigate the \textit{value of information} in our model, by comparing the value function $V$ of \eqref{Value function control problem } to the value of an \textit{average drift problem}. \\[0.1cm]
\subsection{Comparative Statics Analysis}\label{Sec: CompStat} Based on the integral equation \eqref{Integral Equation for c} we implement a recursive numerical scheme, which relies on an application of the Monte-Carlo method. To this end, we let $\zeta$ denote an auxiliary exponentially distributed random variable with parameter $r$, that is independent of the Brownian motion $B$. Recalling that \eqref{Integral Equation for c} can be reformulated as \eqref{IntegralEqu b reformulate}, we  notice that the latter takes the shape of a fixed point problem
\begin{align}\label{Fixed point problem}
b^{-1} (x) = \Gamma (b^{-1}(x),x;b^{-1}), 
\end{align}
for $x \in \mathbb{R}$ and $b^{-1}$ being the generalized inverse of $b$ as in \eqref{Inverse of b}. Here, the operator $\Gamma$ is defined via 
\begin{align}\label{Operator}
\Gamma (\varphi, x ;f) := \frac{1}{e^x - \kappa} \frac{1}{r} \mathbb{E}^\mathbb{Q} \Big[ -   g\big( X_\zeta^x , \frac{\sigma}{\gamma} \ln ( \Phi_\zeta^\varphi  ) - X_\zeta^x \big) \one_{ \{ \Phi_\zeta^\varphi \leq f (X_\zeta^x) \} }  \Big] -1 ,
\end{align}
for $(x,\varphi) \in \mathbb{R} \times (0,\infty)$ and a function $f: \mathbb{R} \to (0,\infty) $. By employing techniques seen in Christensen and Salminen \cite{ChSa}, Dammann and Ferrari \cite{DaFe} and Detemple and Kitapbayev \cite{DeKi}, we aim to solve \eqref{Fixed point problem} via an iterative scheme. To this end, we let
\begin{align}\label{Sequence of boundaries}
(b^{-1})^{[n]} (x) = \Gamma ((b^{-1})^{[n-1]}(x) , x; (b^{-1})^{[n-1]} ), \qquad x\in \mathbb{R}, n \geq 1,
\end{align}
define a sequence of boundaries and - for a given boundary $(b^{-1})^{[k]}$ - we estimate the expectation in \eqref{Operator} by 
\begin{align*}
 - \frac{1}{N} \sum_{i=1}^N g\Big( X_{\zeta_i}^{i,x} ,\frac{\sigma}{\gamma} \ln \big( \Phi_{\zeta_i}^{i, (b^{-1})^{[k]}(x)} \big) - X_{\zeta_i}^{i, x} \big) \one_{ \big\{  \Phi_{\zeta_i}^{i,(b^{-1})^{[k]} (x)}     \leq (b^{-1})^{[k]} ( X_{\zeta_i}^{i,x} )  \big\} },
\end{align*}
where $N$ denotes the total amount of realizations of the exponential random variable. We can choose the initial boundary $(b^{-1})^{[0]}$ as a simple exponential function with $(b^{-1})^{[0]} (x_0^*) = 0$ and $(b^{-1})^{[0]} (x) \to \infty$ for $x \uparrow x_1^*$ with $x_0^*$ and $x_1^*$ as in \eqref{x0*} and Remark \ref{Remark: Full Info mu1}, respectively. The numerical scheme \eqref{Sequence of boundaries} is then iterated until the variation between steps drops below a predetermined level. Finally, we calculate $b$ from its generalized inverse $b^{-1}$ and can transform the resulting boundary according to the explicit relationship \eqref{Boundaries a b Transformation}. We can thus study the sensitivity of $b(\varphi)$ as well as $a(\pi)$ with respect to some of the model's parameters. 
\begin{figure}[b] 
\begin{center}
  \includegraphics[width=8cm]{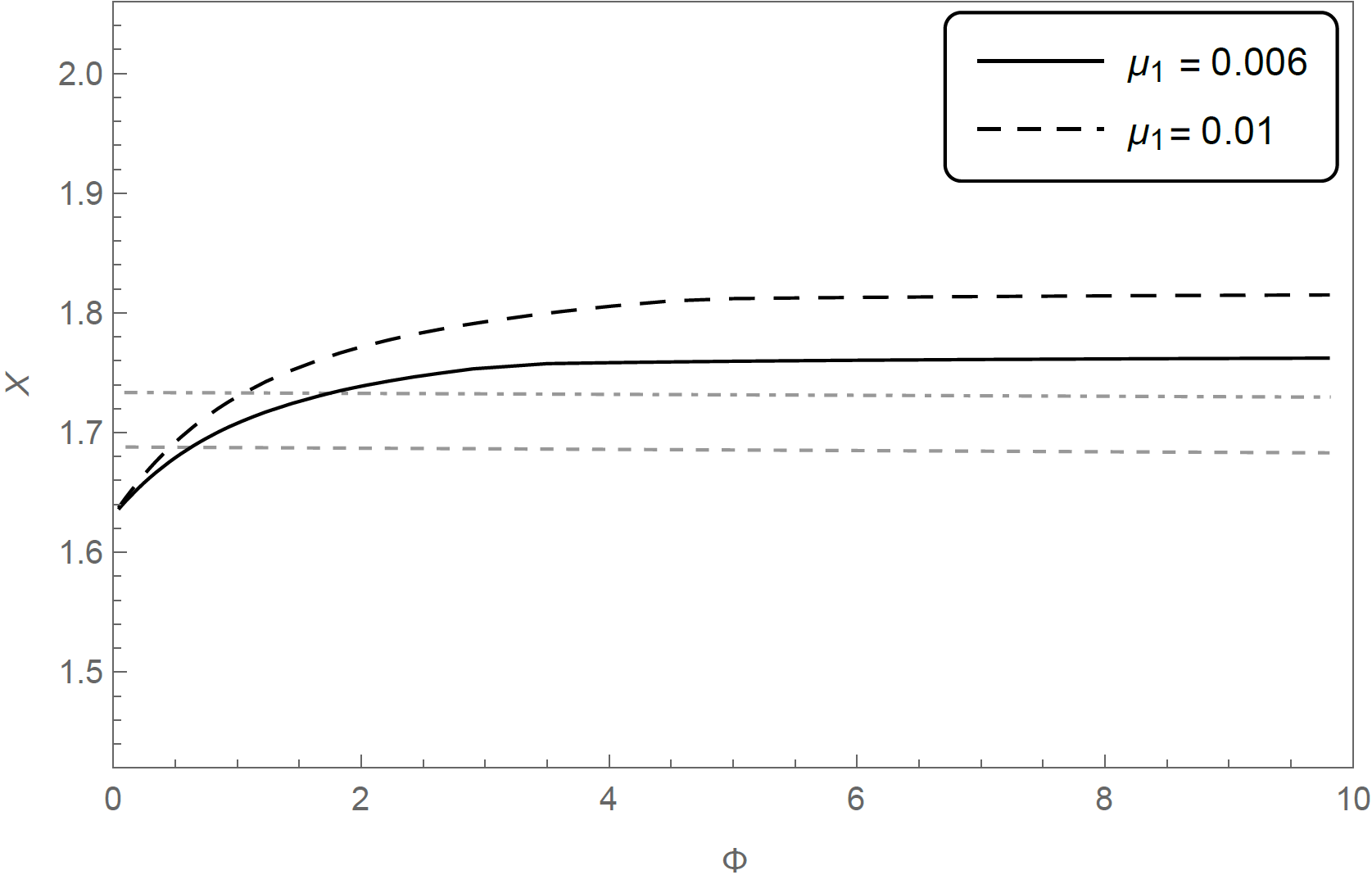}\includegraphics[width=8cm]{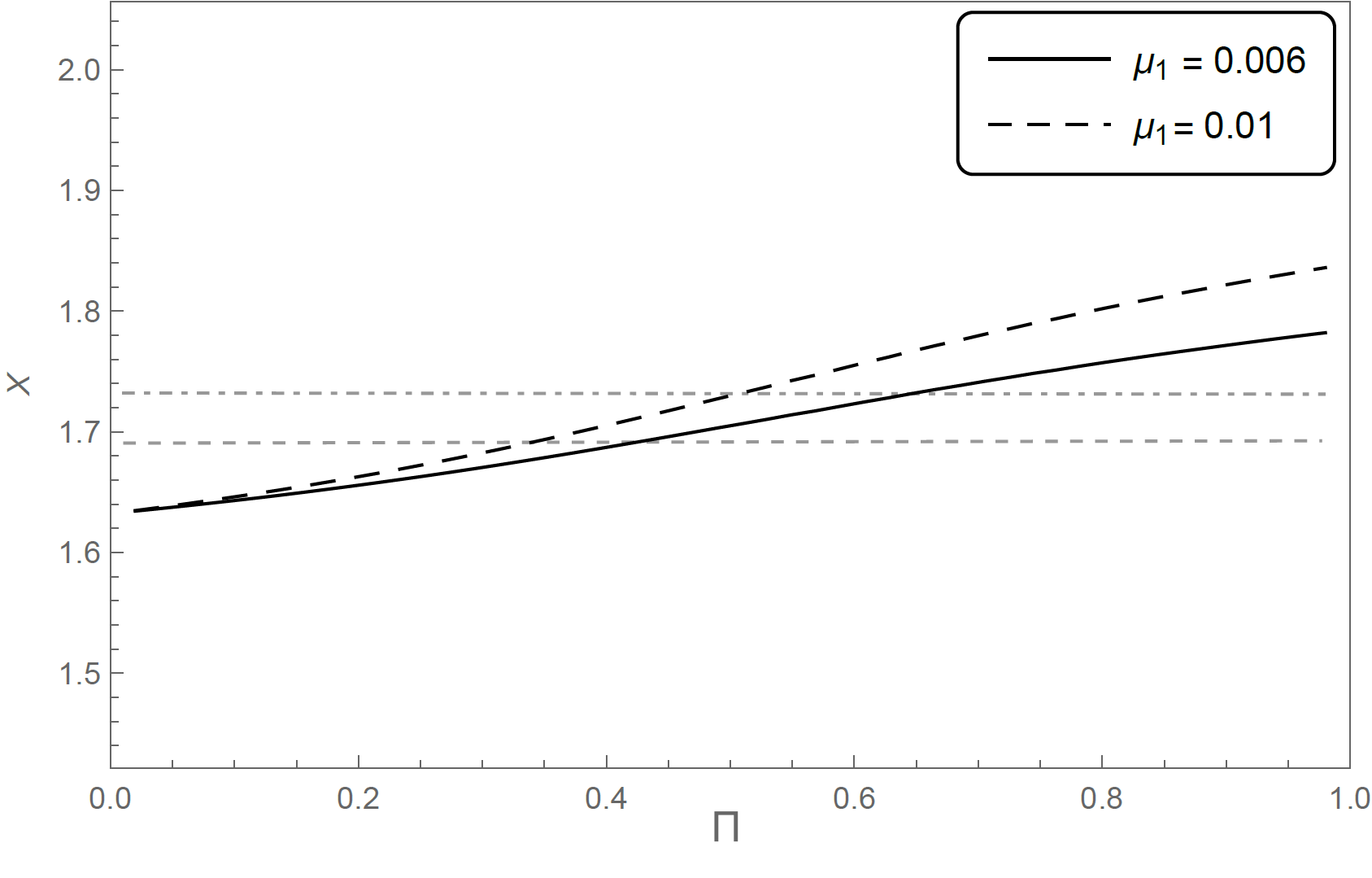} \caption{The optimal execution boundaries $b(\varphi)$ and $a(\pi)$ as well as the pre-committed strategies for different values of $\mu_1$ and following parameters: $r =0.07 ,\, \mu_0 = - 0.01 , \,\sigma = 0.17 ,\, \kappa = 3,\,\pi = 0.6.$} \label{Figure: Comp Stat Mu}
\end{center} 
\end{figure}
Furthermore, we can compare the belief-dependent boundaries to the strategy of a pre-committed agent, who - after forming an initial belief $\pi = \mathbb{P}[\mu = \mu_1 \vert \mathcal{F}_t^X]$ - refrains from updating her belief and thus acts as if the drift value was constant and equal to $\mu_1 \pi + \mu_0 (1- \pi)$.    The resulting strategy is then triggered by a constant execution threshold, which is of similar structure as the one derived in Section \ref{Section: Benchmark Problem}. Consequently, we observe that such an agent  cannot react to any price movements on the market and is thus not able to decrease or increase the target price at which she would like to sell the asset. \\[0.15cm] 
8.1.1 \textit{Sensitivity with respect to the drift.}
In Figure \ref{Figure: Comp Stat Mu} we can observe the sensitivity of the optimal execution boundaries with respect to one of the possible drift values. Since an increase in $\mu_1$ implies higher expected prices on the market, the investor delays her decision to sell a fraction of her shares and waits for larger prices to evolve. This effect is strongest for higher values of $\pi$, which reflect a stronger belief in the drift $\mu_1$. On the other hand, we notice that the lower bound $x_0^*$ remains untouched by a change in $\mu_1$, since it results from the case of full information when $\mu = \mu_0$. Consequently, for a strong belief towards the drift value $\mu_0$, the investor does not significantly change her execution strategy. \\[0.15cm] 
8.1.2 \textit{Sensitivity with respect to the discount rate.} 
Figure \ref{Figure: Comp Stat R} shows the effect on the boundaries $a$ and $b$ for a change in $r$, the latter can be interpreted as the subjective impatience of the investor. For an increasing value of $r$ the investor gets more impatient and discounts future revenues more heavily. Consequently, the investor is willing to liquidate her assets earlier, which is realized by decreasing the target price she aims at achieving on the market. This clear effect can be observed for every value of belief $\pi \in [0,1]$. 
\begin{figure}[h] 
\begin{center}
 \includegraphics[width=8cm]{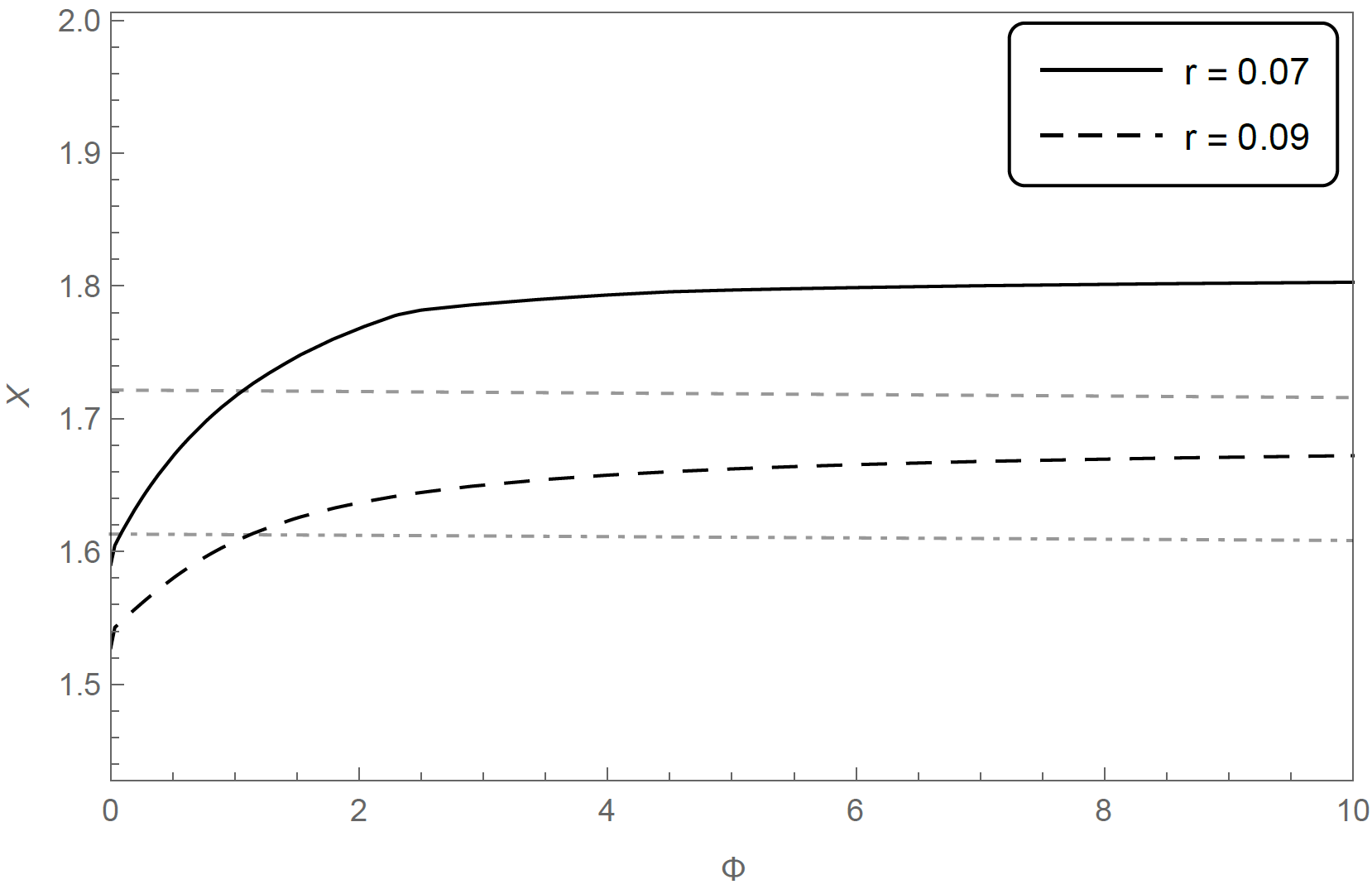}  
  \includegraphics[width=8cm]{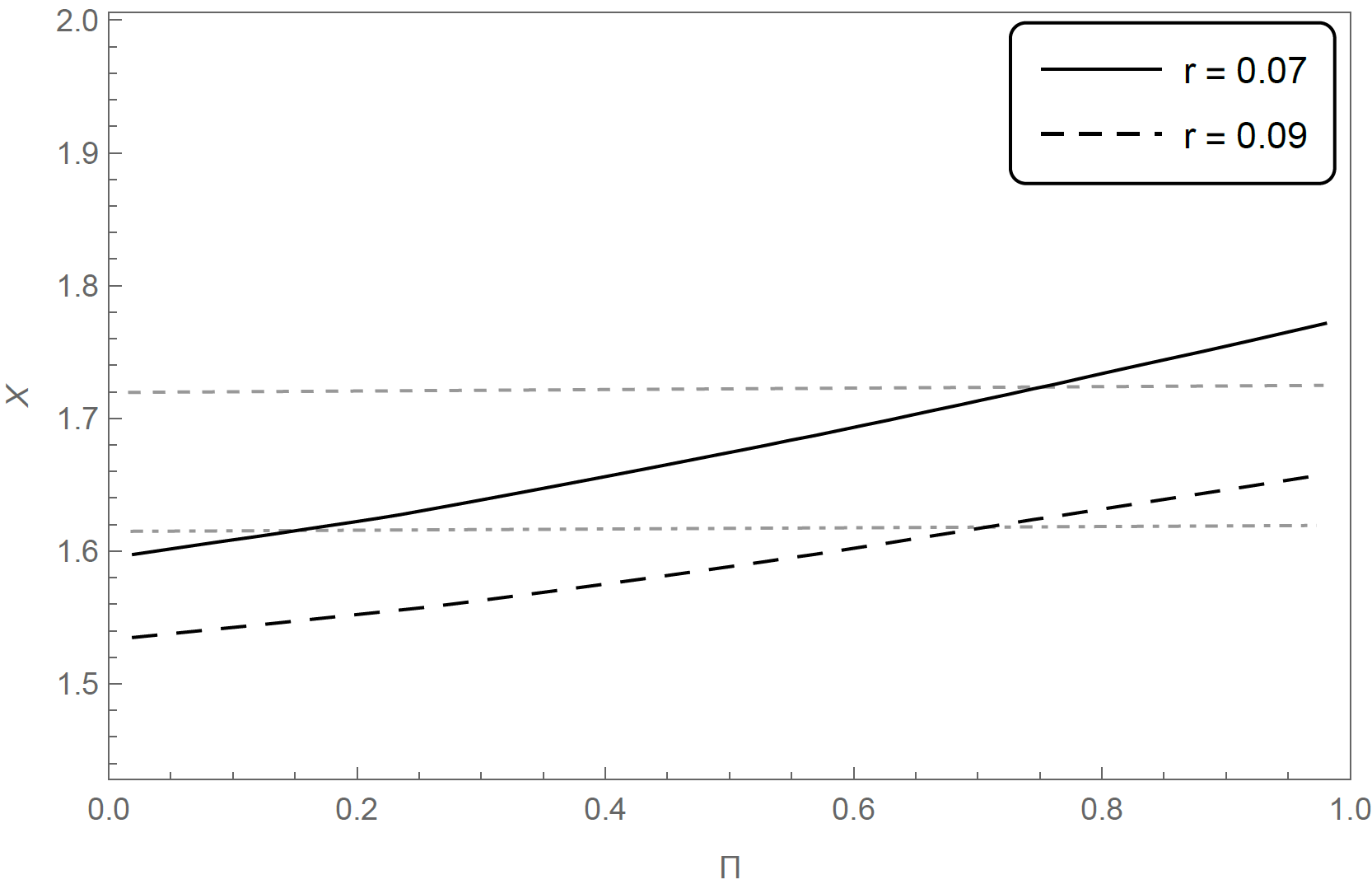} \caption{The optimal execution boundaries $b(\varphi)$ and $a(\pi)$ as well as the pre-committed strategies for different values of $r$ and following parameters: $\mu_0 = - 0.01,\, \mu_1 = 0.007, \,\sigma = 0.17 ,\, \kappa = 3.,\, \pi = 0.6.$} \label{Figure: Comp Stat R}
\end{center} 
\end{figure}\text{}\\[0.15cm]
8.1.3 \textit{Sensitivity with respect to the volatility.}
The sensitivity of the optimal execution boundaries $a$ and $b$ on the volatility of the underlying asset is more delicate. As pointed out by D\'{e}camps et al.~\cite{Deca}, who consider an optimal stopping problem of a structure similar to the one in \eqref{Value fct of opt stopp}, the effect of an increase in volatility is ambiguous and cannot always be predicted with the help of standard real option models (see for example Dixit and Pindyck \cite{Dixit}, McDonald and Siegel \cite{McDonald}). In general, one expects an increasing value function with rising volatility, as this increases the spread of possible future values of the asset and thus the maximal possible profit, while the maximal possible loss remains unchanged. The investor exploits this upside potential by delaying her liquidation decision and increasing the target price she aims at realizing on the market. This effect, widely known and referred to as the ``real option effect" in D\'{e}camps et al.~\cite{Deca}, can be observed in the benchmark case of \eqref{Value function V mu0} as well as in the problem \eqref{Value function control problem } under partial information,  as Figure \ref{Figure: Comp Stat Sigma} reveals. 
\begin{figure}[h] 
\begin{center}
  \includegraphics[width=7.95cm]{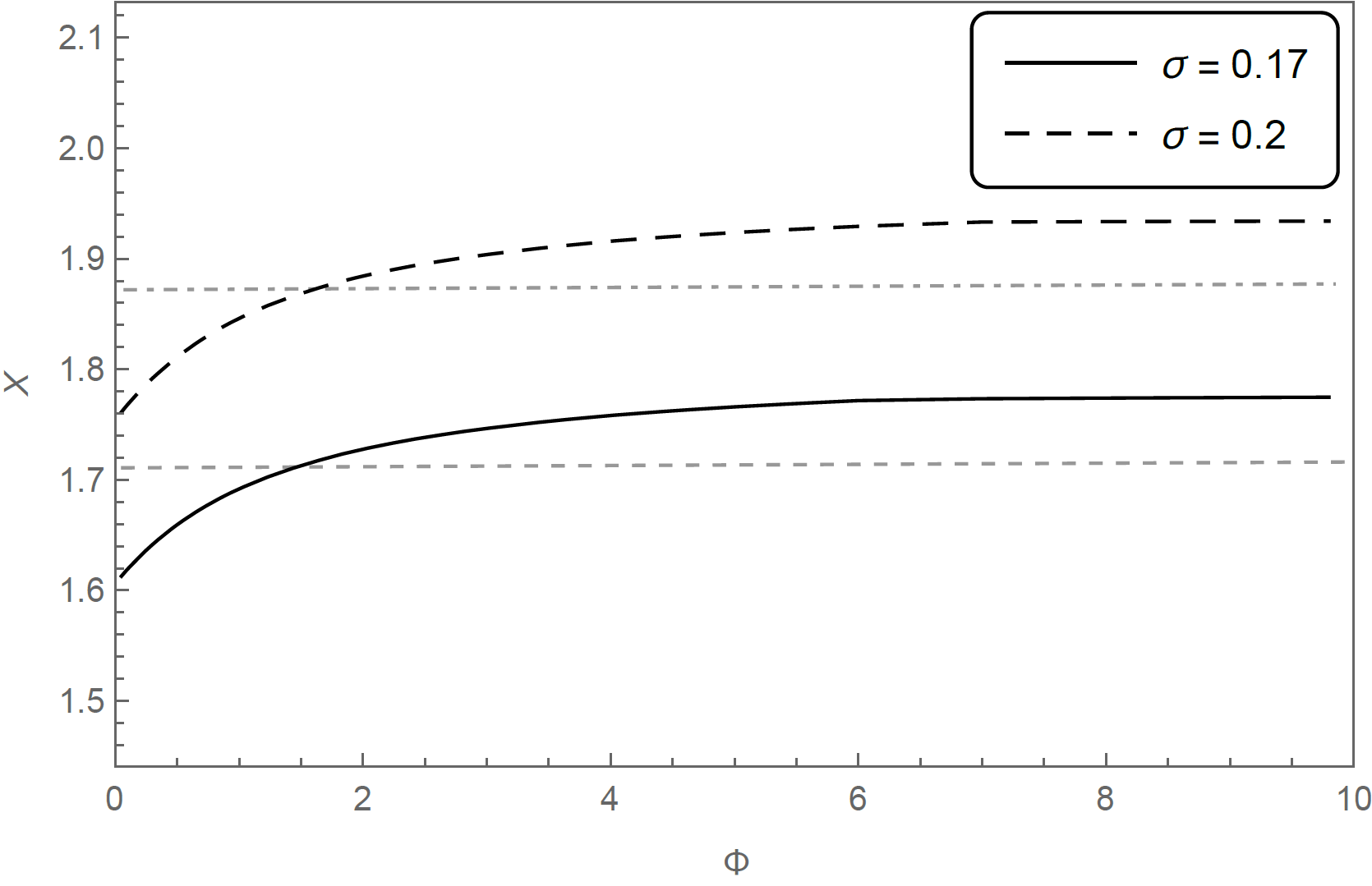}\includegraphics[width=8cm]{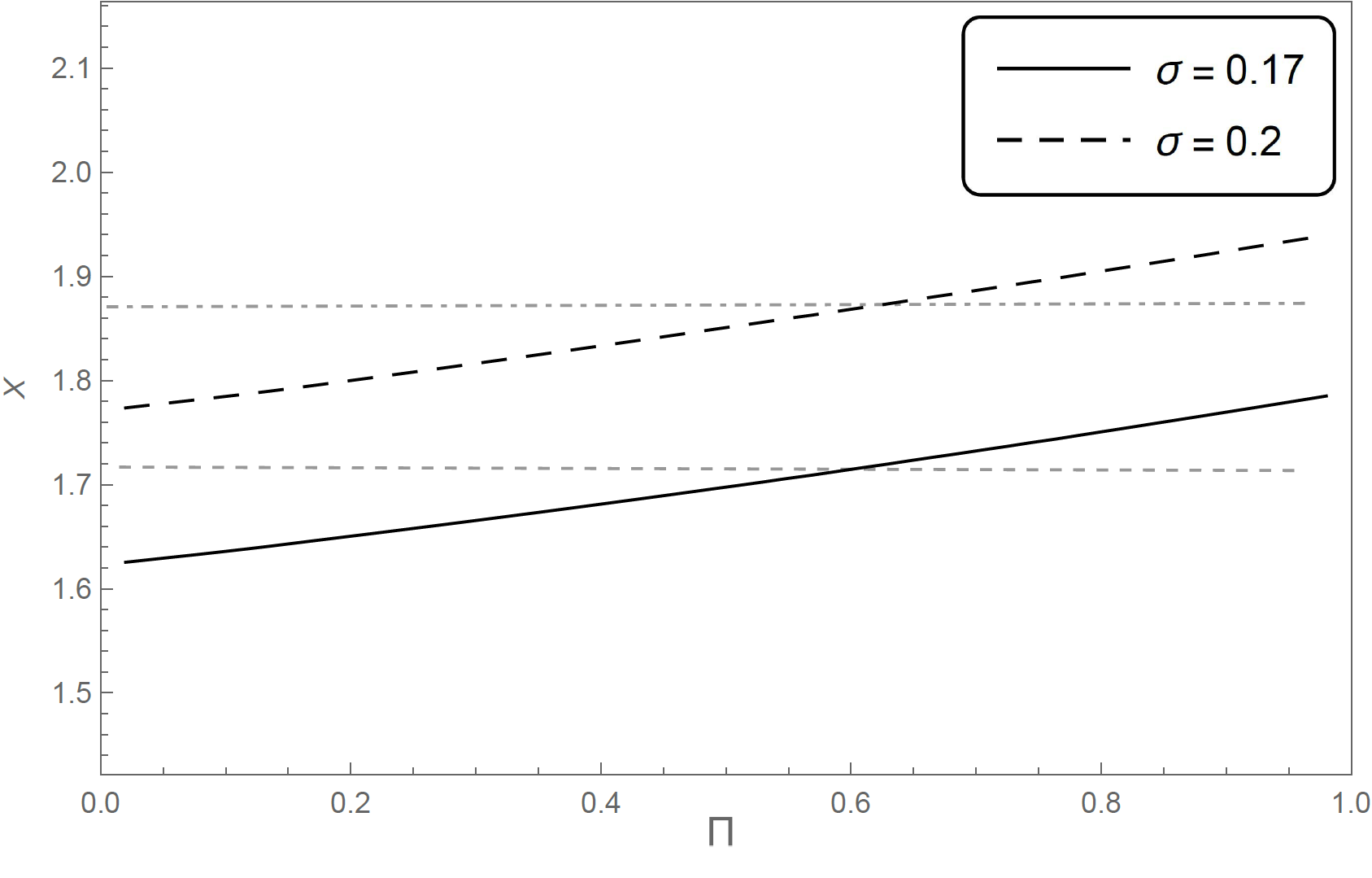} \caption{The optimal execution boundaries $b(\varphi)$ and $a(\pi)$ as well as the pre-committed strategies for different values of $\sigma$ and following parameters: $r =0.07 ,\, \mu_0 = - 0.01 , \, \mu_1 = 0.007 ,\, \kappa = 3,\, \pi = 0.6.$} \label{Figure: Comp Stat Sigma}
\end{center} 
\end{figure} \text{}\\
However, this effect does not need to be robust. To understand how an increase in volatility might indeed harm the investor, we recall the dynamics of the belief process $\Pi$, given by \eqref{dynamics control problem}. In particular, we observe that increasing volatility lowers the signal-to-noise ratio $\gamma = (\mu_1 - \mu_0)/\sigma$ (determining  the variance of the process $\Pi$) and thus the \textit{efficiency of learning}. The latter effect is in contrast to the mentioned real option effect, and the sensitivity of the value function with respect to an increase in volatility ``depends on which of the real option and the inefficient learning effect dominates" (D\'{e}camps et al.~\cite{Deca}, p. 487). The overall impact of a change in volatility thus clearly depends on the parameters' constellation of the model, a division of the parameters' space is however not straightforward. For a broader discussion on this subject we refer to D\'{e}camps et al.~\cite{Deca}, Section 6.2.  \\[0.15cm]
8.2 \textbf{The Value of Information.}
Here, we want to address the question on whether incomplete information about the drift actually harms or benefits the investor. To this end, we introduce the ``average drift problem", whose value is denoted by $V^A (x,y)$ and modelled as in \eqref{Value function V mu0}, but with constant and known drift $\pi \mu_0  + (1 - \pi)\mu_1$; i.e.\ the average of $\mu$ with respect to the prior Bernoulli distribution. We then investigate the preference of an investor faced with the decision of choosing between two portfolios containing assets with either an unknown drift coefficient, or with a constant and known \textit{average} drift. An analytical attempt to answer this question is presented in D\'{e}camps et al.~\cite{Deca}, although the derived result does not hold true in general, as pointed out by Klein \cite{Klein}. 
\begin{figure}[h] 
\begin{center}
  \includegraphics[width=7.95cm]{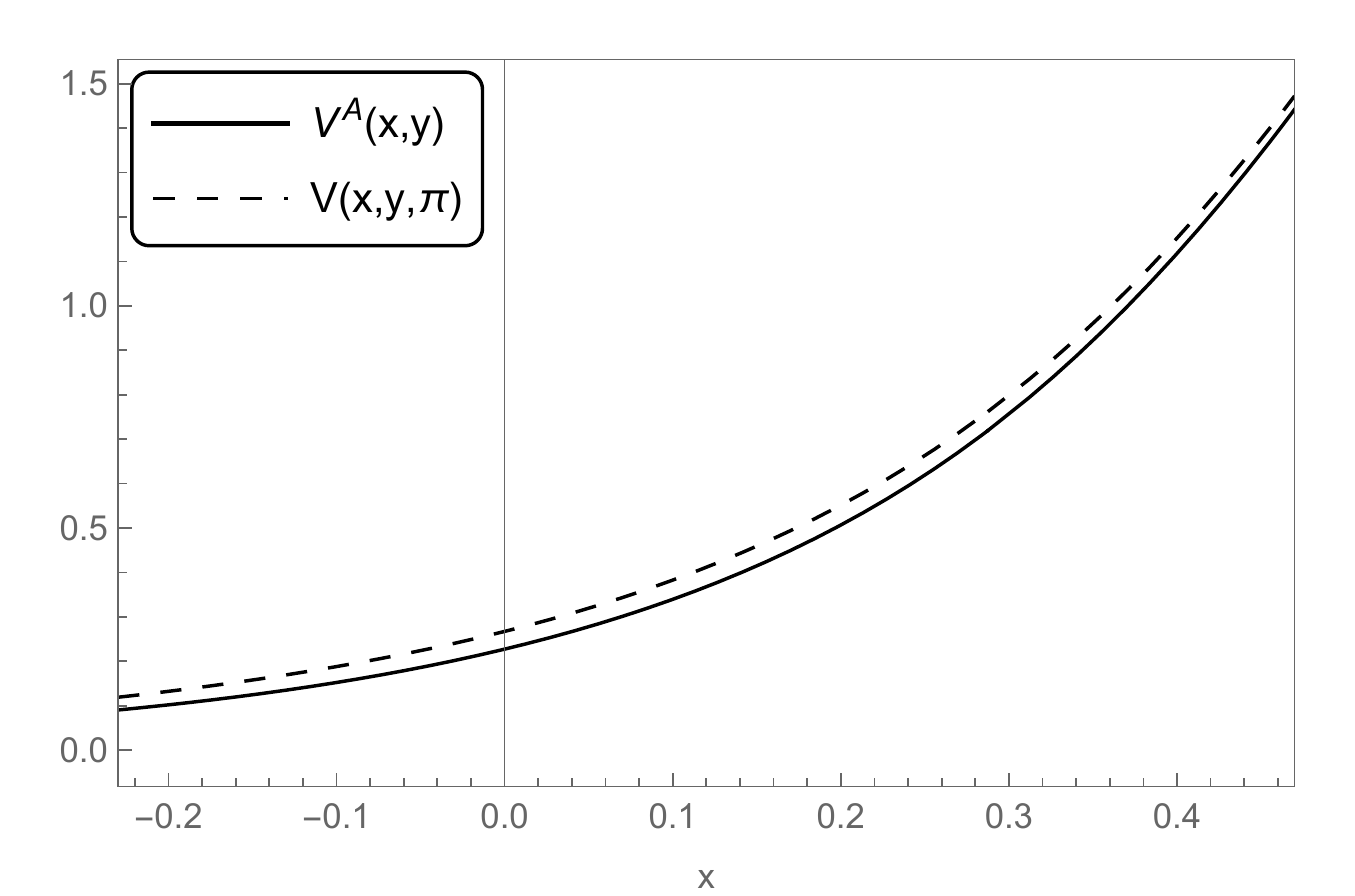}\includegraphics[width=8cm]{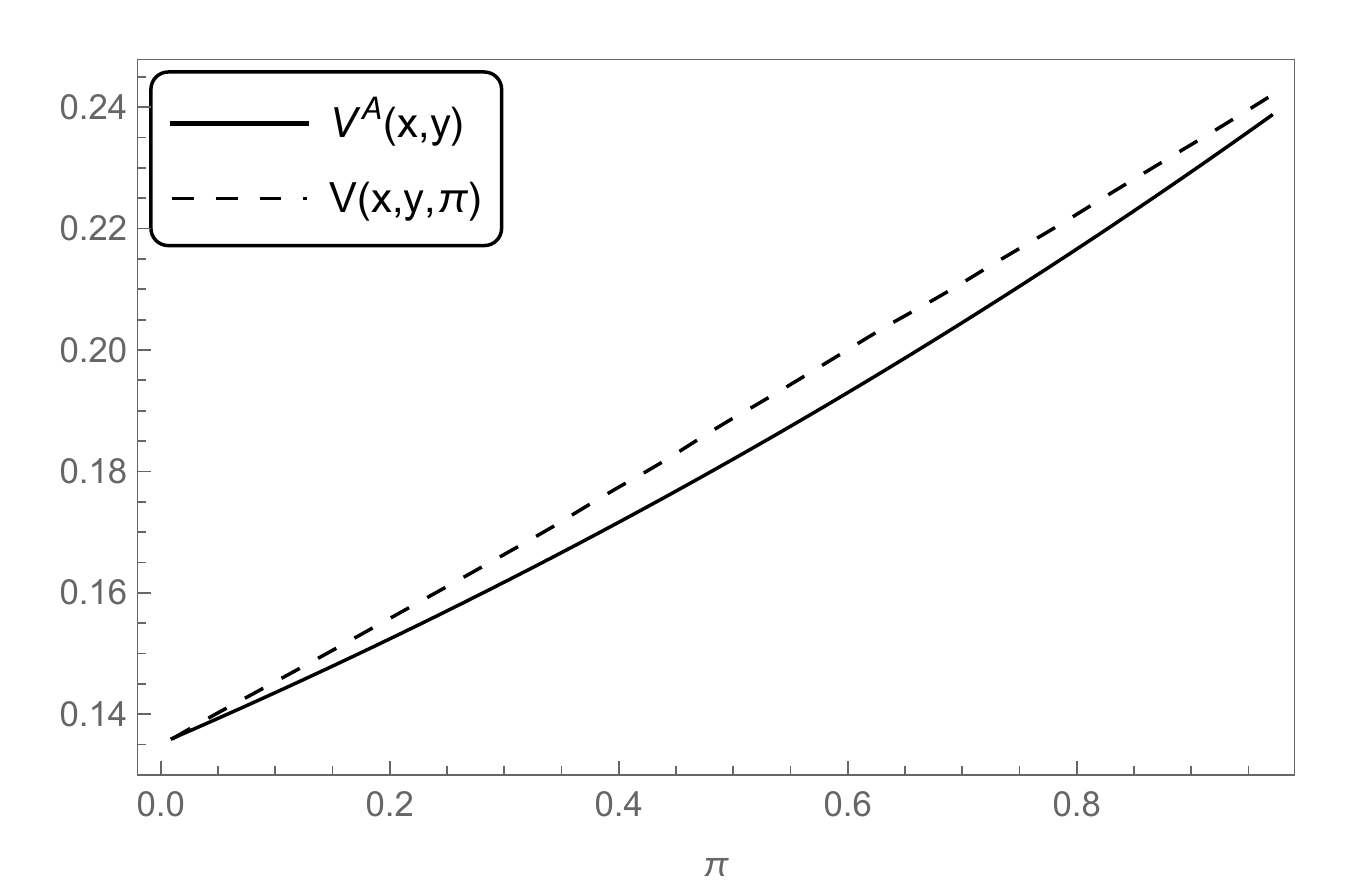} \caption{The value function $V$ of \eqref{Value function control problem } and the average drift value function $V^A$ as functions of $x$ and $\pi$, respectively. The parameters of the model have been specified as $r =0.15 ,\,\sigma = 0.15,\, \mu_0 = - 0.012 , \, \mu_1 = 0.01 ,\, \kappa = 1,\, \pi = 0.3, x = -0.1$} \label{Figure: Comp Stat Value1}
\end{center} 
\end{figure} 
Here, we are able to analyse this question with numerical methods based on the numerical evaluation of the optimal execution boundary (cf.\ Section 8.1) and the representation \eqref{Probabilistic Repr of vhat} of the optimal stopping value function $\hat{v}$. In order to accomplish that, we plug in the numerical evaluation of $b^{-1}$ into \eqref{Probabilistic Repr of vhat} and we transform the result according to \eqref{Transformation T}. This yields the value function $v$ of \eqref{Value fct of opt stopp}, which can be finally integrated via \eqref{Candidate Value Fct U} to obtain a numerical approximation of the control problem's value function $V$. \\
\begin{figure}[h] 
\begin{center}
  \includegraphics[width=7.95cm]{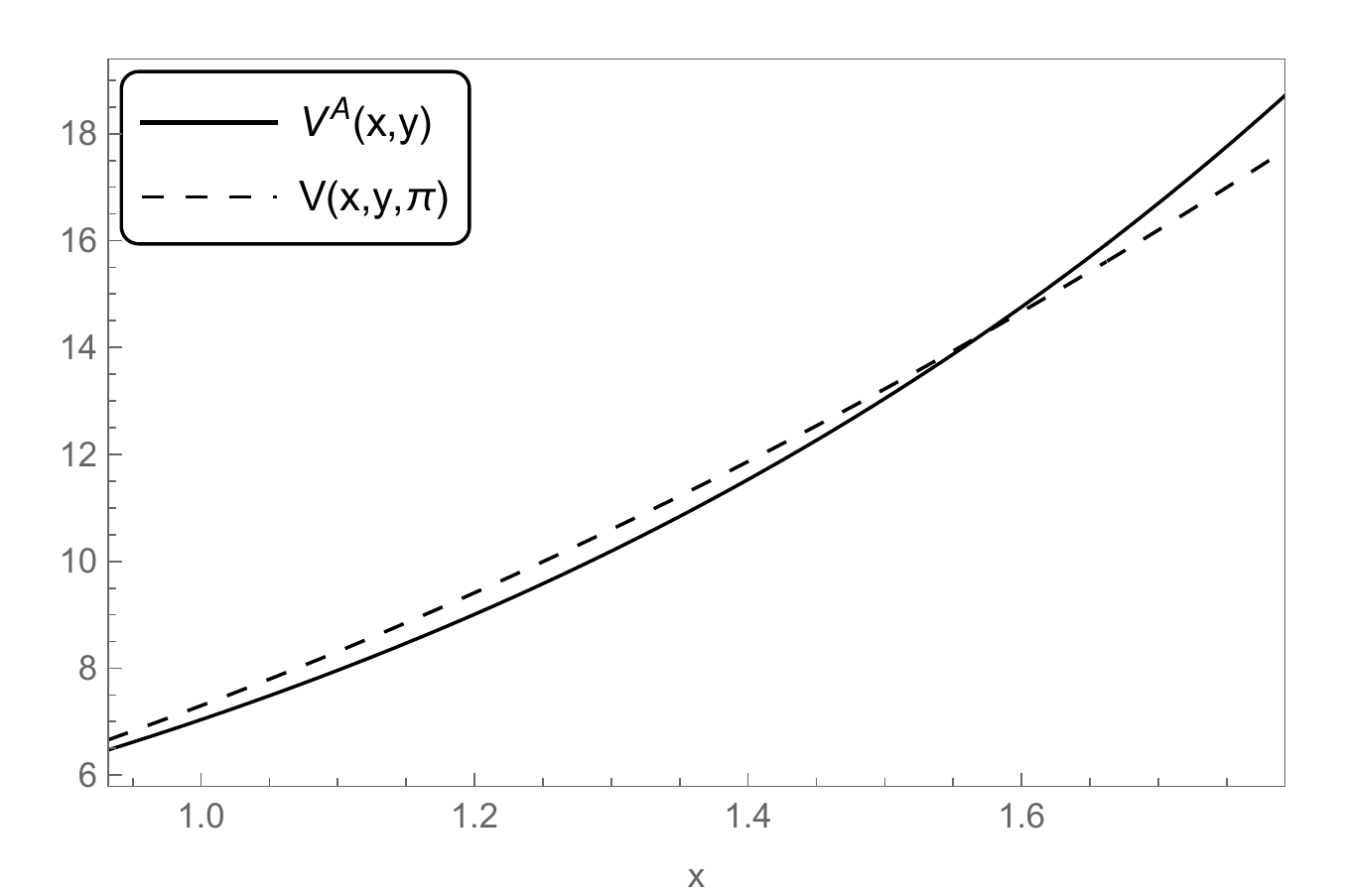} \caption{The value function $V$ of \eqref{Value function control problem } and the average drift value function $V^A$ as functions of $x$. The parameters of the model have been specified as $r =0.2 ,\,\sigma = 0.5,\, \mu_0 = - 0.012 , \, \mu_1 = 0.01 ,\, \kappa = 1,\, \pi = 0.5$} \label{Figure: Comp Stat Uncert}
\end{center} 
\end{figure} 
In general, the results derived in D\'{e}camps et al.~\cite{Deca} and Klein \cite{Klein} suggest that the overall impact of introducing uncertainty over the drift is governed by two separate effects: The introduction of uncertainty in general and the impact of learning. If learning is efficient, which is achieved by -- for example -- specifying a small volatility coefficient $\sigma$, the latter effect seems to outweighs the former and the investor indeed prefers the problem with only incomplete information on the return. We observe this overall effect in Figure \ref{Figure: Comp Stat Value1}.\\
In their model, D\'{e}camps et al.~\cite{Deca} give an analytical proof to this observation in an optimal stopping environment, although restricting the possible drift values to $0$ and $1$. For small values of $\sigma$, depending on the other parameters in the model, this result seems to hold true in our more generalized framework. \\
Nevertheless, this effect cannot be expected to be robust over the whole parameter space. In an example, where the parameter values are aligned such that $\beta_0 + \beta_1 = \sigma^2$ (and thus $\mu_0 = - \mu_1$ in our model), Klein \cite{Klein} obtains an explicit solution to the optimal stopping problem and shows how the introduction of uncertainty might harm the decision maker. This effect appears to have the peculiarity of being, at least in some cases, dependent on the initial value of the price process, as it determines the distance to the target price at which the investor is willing to execute. 
We can observe an example of this in Figure \ref{Figure: Comp Stat Uncert}. In particular, if the asset's price is close to the target value under the current belief and learning is inefficient, the investor will not choose a portfolio with drift uncertainty. This is due to the fact that the downside risk outweighs the upside potential, which could only be achieved if learning is efficient. On the other hand, we observe that for low prices the upside potential might still dominate and the investor is willing to choose the uncertain environment, even if learning is inefficient.

\appendix\normalsize 
\section{Proof of Proposition \ref{Proposition: Smooth fit}}\label{Appendix: Proof of Smooth fit}
The proof follows the lines of Section 4 in \cite{DeA}, suitably adapted to the present setting, and it is obtained through a series of intermediate results. Let $(x,z)\in \mathbb{R}^2$ be given and fixed and set
\begin{align}\label{Stopping times for regularity}
\sigma_* := \sigma_* (x,z) := \inf \{ t \geq 0: ~(X_t^x, Z_t^z) \in \mathcal{S}_3 \}, \quad 
\hat{\sigma}_* := \hat{\sigma}_* (x,z) := \inf \{ t \geq 0: ~(X_t^x, Z_t^z) \in \text{int}(\mathcal{S}_3) \},
\end{align} 
and observe that $\sigma_* = \tau_*$ $\mathbb{Q}$-a.s.\,on $\mathbb{R}^2 \setminus \partial \mathcal{C}_3$ due to the continuity of paths. It is crucial to show that this equality also holds for the boundary points $(x_0, z_0) \in \partial \mathcal{C}_3$. As it turns out, the cases i) $\mu_0 + \mu_1 \geq 0$ and ii) $\mu_0 + \mu_1 < 0$ should be treated in different fashions and the latter case exhibits some more technical difficulties than the first case. Let us start with case i), in which the needed result  follows upon using the law of iterated logarithm. 
\begin{proposition} \label{Proposition: Cont of stopping times in case i)} Assume that $\mu_0 + \mu_1 \geq 0$. Let $(x_n,z_n) \in \mathcal{C}_3$ be a sequence with $(x_n, z_n) \to (x_0,z_0) \in \partial \mathcal{C}_3 $, such that $x_0 = c(z_0)$. We then have $\tau^* (x_n, z_n) \downarrow 0$ as well as $\hat{\sigma}_* (x_n, z_n) \downarrow 0$ $\mathbb{Q}$-a.s.
\end{proposition}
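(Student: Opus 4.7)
The plan is to first verify that from the boundary point itself one has $\hat{\sigma}_*(x_0, z_0) = 0$ (and thus $\tau^*(x_0, z_0) = 0$) $\mathbb{Q}$-a.s., and then to propagate this to the sequence $(x_n, z_n)$ by exploiting the continuity of $c$ and the (pathwise) continuity of the flow of \eqref{Dynamics after Girsanov}--\eqref{Dynamics Z} in the initial datum.

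For the first step, I would exploit the hypothesis $\mu_0 + \mu_1 \geq 0$, which makes $t \mapsto Z_t^{z_0} = z_0 - \tfrac{1}{2}(\mu_0 + \mu_1)t$ non-increasing. Combined with the monotonicity of $c$ from Proposition \ref{Proposition Mono 1}, this yields $c(Z_t^{z_0}) \leq c(z_0) = x_0$ for every $t \geq 0$. Hence, to force $(X_t^{x_0}, Z_t^{z_0}) \in \text{int}(\mathcal{S}_3)$ it suffices that $X_t^{x_0} > x_0$, i.e.\ that $\sigma B_t + \mu_0 t > 0$. By the law of iterated logarithm applied to $B$ at $t = 0^+$, the set $\{t > 0 : \sigma B_t + \mu_0 t > 0\}$ accumulates at $0$ $\mathbb{Q}$-a.s., so we obtain $\hat{\sigma}_*(x_0, z_0) = 0$, and a fortiori $\tau^*(x_0, z_0) = 0$.

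For the second step, I would fix $\varepsilon > 0$ and pick (along a good $\mathbb{Q}$-full probability event) some $t^* \in (0, \varepsilon)$ such that $X_{t^*}^{x_0} > c(Z_{t^*}^{z_0})$, which is possible by the previous step. Because $X_{t^*}^{x_n} = x_n + \mu_0 t^* + \sigma B_{t^*}$ depends linearly (hence continuously) on $x_n$, $Z_{t^*}^{z_n}$ depends continuously on $z_n$, and $c$ is continuous by Proposition \ref{Proposition boundary c continuous}, for all $n$ sufficiently large we still have $X_{t^*}^{x_n} > c(Z_{t^*}^{z_n})$. This gives $\hat{\sigma}_*(x_n, z_n) \leq t^* < \varepsilon$ and, a fortiori, $\tau^*(x_n, z_n) \leq \varepsilon$. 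Since $\varepsilon$ is arbitrary and both sequences are nonnegative, we conclude $\hat{\sigma}_*(x_n, z_n) \downarrow 0$ and $\tau^*(x_n, z_n) \downarrow 0$ $\mathbb{Q}$-a.s.

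The only delicate point is really the boundary step: we must certify that the process escapes from $\partial\mathcal{C}_3$ into the interior of $\mathcal{S}_3$ (not just into $\mathcal{S}_3$), and this is precisely what the LIL combined with the monotonicity of $c$ and the non-increase of $Z$ gives for free under $\mu_0 + \mu_1 \geq 0$. In the opposite case $\mu_0 + \mu_1 < 0$ treated in the subsequent proposition, $Z$ is increasing and $c(Z_t^{z_0})$ may exceed $x_0$ for $t > 0$, so the same LIL argument fails and one needs a genuinely finer analysis; but here no such obstruction arises.
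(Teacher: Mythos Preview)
Your proof is correct and rests on the same key ingredients as the paper's: the non-increase of $t\mapsto Z_t^{z_0}$ when $\mu_0+\mu_1\geq 0$, the monotonicity of $c$ from Proposition~\ref{Proposition Mono 1}, the continuity of $c$ from Proposition~\ref{Proposition boundary c continuous}, and the law of the iterated logarithm to force $\sigma B_t+\mu_0 t>0$ along a sequence $t_k\downarrow 0$. The only structural difference is packaging: the paper runs a single contradiction argument directly on the sequence (assuming $\limsup_n\tau^*(x_n,z_n)=\delta>0$, passing to the limit in the inequality $X_t^{x_n}<c(Z_t^{z_n})$ to obtain $\sigma B_t\leq -\mu_0 t$ on $[0,\delta/2]$, and contradicting this via LIL), whereas you split the proof into two steps --- first establishing $\hat{\sigma}_*(x_0,z_0)=0$ at the boundary point, then propagating to $(x_n,z_n)$ by continuity of the flow and of $c$. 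Your two-step route is slightly more modular and makes the role of boundary regularity explicit; the paper's one-shot contradiction is marginally shorter. Substantively, the arguments are equivalent.
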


\begin{proof}
Fix $\omega \in \Omega$ and assume that $\limsup_{n \to \infty} \tau^* (x_n, z_n) (\omega) =: \delta > 0$. Hence, there exists a subsequence (still labelled by $(x_n,z_n)$) such that 
\begin{align}\label{X lower c(Z)}
X_t^{x_n} (\omega) < c(Z_t^{z_n}) \quad \forall n \in \mathbb{N},~ \forall t \in [0, \delta/2],
\end{align}
which is equivalent to 
\begin{align*}
x_n + \mu_0 t + \sigma B_t (\omega) < c(z_n - \frac{1}{2}(\mu_0 + \mu_1) t) \quad \forall n \in \mathbb{N}, \forall t \in [0, \delta/2].
\end{align*}
Upon using that $z \mapsto c (z)$ is continuous, we let $n \to \infty$ and obtain 
\begin{align}\label{Inequ cont of stop times i)}
\sigma B_t (\omega) \leq  c(z_0 - \frac{1}{2}(\mu_0+ \mu_1)t) - x_0 - \mu_0 t \leq c(z_0) - x_0 - \mu_0 t = - \mu_0 t \quad \forall t \in [0, \delta/2],
\end{align}
where the last inequality follows from $\mu_0 + \mu_1 \geq 0$ and Proposition \ref{Proposition Mono 1}. 
On the other hand, by the law of iterated logarithm, there exists a sequence $(t_n) \downarrow 0 $ for all $ \epsilon > 0$ such that 
\begin{align}\label{law iter log}
B_{t_n} \geq (1- \epsilon) \sqrt{2 t_n \log \big( \log \big(\frac{1}{t_n}\big)\big)} \qquad  \forall n \in \mathbb{N}.
\end{align}
Combining \eqref{Inequ cont of stop times i)} and \eqref{law iter log} implies
\begin{align*}
\frac{1}{t} \sigma (1- \epsilon) \sqrt{2 t \log \big( \log \big( \frac{1}{t} \big) \big) }  \leq - \mu_0,
\end{align*}
but since $\sqrt{2 t \log ( \log (1/ t))}/t \to \infty$ for $t \downarrow 0$, (\ref{X lower c(Z)}) can only happen on a $\mathbb{Q}$-null set. Thus $\tau^*(x_n,z_n) \downarrow 0$ and by replacing the strict inequality in (\ref{X lower c(Z)}) by "$\leq$", we obtain that $\hat{\sigma}_* (x_n ,z_n) \downarrow 0$ as well. \qed 
\end{proof}
Notice that the proof of Proposition \ref{Proposition: Cont of stopping times in case i)} cannot be replicated for the case ii), in which $\mu_0 + \mu_1 < 0$, since the last inequality in \eqref{Inequ cont of stop times i)} does not longer apply. As is turns out, in order to prove the same result for case ii), we have to take a longer route. The reason for this lies in the fact that the process  $(X,Z)$ is moving towards the right in the state space and hence - keeping in mind that the continuation region $\mathcal{C}_3$ of \eqref{Continuation region C3 in terms of c} lies below the increasing boundary $c$ - could possibly evade from the stopping set. In the following, we  show that this is not the case by adapting the procedure in of Section 4 in De Angelis \cite{DeA}. As a first step, we state the following Lemma, whose proof follows the lines of Cox and Peskir \cite{CoPe}, Corollary 8, and is thus omitted for the sake of brevity.
\begin{lemma}\label{Lemma: Equality of Stopping times}
Assume that $\mu_0 + \mu_1 < 0$ and $r > \frac{\gamma}{2 \sigma} \vert \mu_0 + \mu_1 \vert$. We have $\mathbb{Q}( \sigma_* = \hat{\sigma}_*)=1$.
\end{lemma}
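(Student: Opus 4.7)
The plan is to show that every point of $\partial \mathcal{S}_3 = \{(x,z):x = c(z)\}$ is regular for $\mathrm{int}(\mathcal{S}_3)$, i.e.\ $\mathbb{Q}_{(x_0, z_0)}(\hat\sigma_* = 0) = 1$ for every $(x_0, z_0) \in \partial\mathcal{S}_3$; the strong Markov property applied at $\sigma_*$ (combined with the continuity of paths, which forces $(X_{\sigma_*}, Z_{\sigma_*}) \in \mathcal{S}_3$) then yields $\sigma_* = \hat\sigma_*$ $\mathbb{Q}$-a.s.

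Fix $(x_0, z_0) \in \partial\mathcal{S}_3$. Under $\mathbb{Q}_{(x_0,z_0)}$ we have $Z_t = z_0 + \delta t$ with $\delta := -\tfrac12(\mu_0 + \mu_1) > 0$, while $X_t - x_0 = \mu_0 t + \sigma B_t$. The event $\{\hat\sigma_* = 0\} = \bigcap_{\varepsilon>0}\{\exists\,t\in(0,\varepsilon):X_t > c(Z_t)\}$ lies in the germ $\sigma$-algebra $\mathcal{F}_{0+}$ and, by Blumenthal's $0$--$1$ law, has probability $0$ or $1$. Since $c$ is only known to be continuous (Proposition \ref{Proposition boundary c continuous}) and not Lipschitz, the naive law-of-iterated-logarithm argument used in case $(i)$ (Proposition \ref{Proposition: Cont of stopping times in case i)}) is not available here: $\tilde c(t) := c(z_0 + \delta t) - x_0$ need not be $o(\sqrt{t})$, and the Brownian fluctuations need not beat $\tilde c$ deterministically.

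The approach I would take, following [CoPe, Cor.~8], is a local PDE comparison. The three key inputs are: (i) the $C^1$ regularity of $\hat v$ on $\mathbb{R}^2$ with $\hat v_{xx} \in L^\infty_{\text{loc}}$ from Proposition \ref{Proposition: Smooth fit}, which licences an It\^o--Tanaka expansion of $e^{-rt}\hat u(X_t, Z_t)$; (ii) the variational identities $(\mathcal{L}_{X,Z} - r)\hat u = -g$ on $\mathcal{C}_3$ and $\hat u \equiv 0$ on $\mathcal{S}_3$, together with $\int_0^\cdot \mathbf{1}_{\{X_s = c(Z_s)\}}\,ds \equiv 0$ a.s.\ (by Fubini, since $X_s$ has a Lebesgue density under $\mathbb{Q}$ while $Z_s$ is deterministic); and (iii) the strict negativity of $g$ on a one-sided neighbourhood of $\partial \mathcal{S}_3$, valid via $x \geq c(z) \geq x_0^*$ and the second condition of Assumption \ref{Assumption: Well-posedness} (the same mechanism that produced \eqref{contra} in the proof of Proposition \ref{Proposition boundary c continuous}).

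Arguing by contradiction, assume $\mathbb{Q}(\hat\sigma_* > 0) = 1$. The main obstacle, and precisely where the Cox--Peskir construction enters, is the building of a local $C^{2,1}$ sub-barrier $\varphi$ in a parabolic cone at $(x_0, z_0)$ satisfying $(\mathcal{L}_{X,Z} - r)\varphi + g > 0$ in the interior of the cone, $\varphi(x_0, z_0) = 0$, and $\varphi \leq \hat u$ on its parabolic boundary. The third condition of Assumption \ref{Assumption: Well-posedness}, $r > \tfrac{\gamma}{2\sigma}|\mu_0+\mu_1|$ --- the same quantitative threshold driving Proposition \ref{Proposition Mono 2} --- is exactly what makes the differential inequality for $\varphi$ solvable. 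Applying It\^o to $e^{-rt}(\hat u - \varphi)(X_t, Z_t)$ on $[0, \hat\sigma_* \wedge \tau_R]$ for a localizing stopping time $\tau_R$, exploiting ingredients (i)--(iii) to collapse the expansion, and sending $R\uparrow\infty$, one obtains that $\hat u$ is bounded below by $\varphi$ in a neighbourhood of $(x_0, z_0)$, forcing $\hat u(x_0, z_0) > 0$; this contradicts $(x_0, z_0) \in \partial \mathcal{S}_3 \subset \mathcal{S}_3$ and closes the argument.
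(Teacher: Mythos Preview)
Your plan has a structural circularity. In the paper, Lemma \ref{Lemma: Equality of Stopping times} is an \emph{intermediate step} in the proof of Proposition \ref{Proposition: Smooth fit}: Appendix \ref{Appendix: Proof of Smooth fit} establishes the chain Lemma \ref{Lemma: Equality of Stopping times} $\Rightarrow$ Corollary \ref{Corollary Equality of stop times} $\Rightarrow$ Lemma \ref{Lemma: Continuity of stop times} $\Rightarrow$ Proposition \ref{Proposition: Smooth fit}. Hence your ingredient (i) --- the global $C^1$ regularity of $\hat v$ with $\hat v_{xx}\in L^\infty_{\mathrm{loc}}(\mathbb{R}^2)$ --- is not available at this point in the argument. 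The only regularity you may use is $\hat v\in C^{2,1}(\mathcal{C}_3)$ (Lemma \ref{Lemma: vhat solves boundary value problem}) and the local Lipschitz property (Lemma \ref{Lemma: vhat continuous}); smooth fit across $\partial\mathcal{C}_3$ has not yet been established, so an It\^o--Tanaka expansion of $e^{-rt}\hat u(X_t,Z_t)$ across the boundary is not licensed. This removes the footing for your sub-barrier comparison.

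Separately, the reference to Cox--Peskir \cite{CoPe}, Corollary 8, points to a purely probabilistic argument --- that for a Brownian motion (with drift) and a continuous time-dependent boundary, the first entry time into the closed region $\{x\ge c\}$ coincides a.s.\ with that into the open region $\{x>c\}$ --- not to a PDE sub-barrier construction. In the present parabolic coordinates the relevant inputs are that $t\mapsto c(Z_t^z)$ is deterministic, continuous and nondecreasing (Propositions \ref{Proposition Mono 2} and \ref{Proposition boundary c continuous}) and that $X$ is a drifted Brownian motion under $\mathbb{Q}$; the Cox--Peskir technique then rules out the process lingering on the graph of $c$ without entering the interior. Note also that the paper treats boundary regularity for $\mathcal{S}_3$ (Proposition \ref{Proposition: Boundary points regular}) \emph{after} and logically independently of Lemma \ref{Lemma: Equality of Stopping times}, via the one-variable smooth-fit argument of Lemma \ref{Lemma: Classical smooth fit}; you have effectively merged these two separate steps and, in doing so, imported a result that depends on both.
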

In the next step, we aim at proving \textit{regularity} of the boundary points for the stopping set $\mathcal{S}_3$ in the sense of diffusions, that is, for $(x,z) \in \partial \mathcal{C}_3$ we have 
\begin{align}\label{Q = 0 reg of bound points}
\mathbb{Q}_{x,z}(\sigma_* > 0) =0.
\end{align}
It is clear from Blumenthal's 0-1 law that if \eqref{Q = 0 reg of bound points} does not hold, we have $\mathbb{Q}_{x,z} (\sigma_* > 0 ) =1$. Due to the mentioned structure of the problem this is not a straightforward task, since we cannot apply an argument similar to the one on Proposition \ref{Proposition: Cont of stopping times in case i)}. Instead, we establish the result in two steps and begin by showing that the classical smooth-fit property holds at the free-boundary, i.e.\,continuity of $\hat{v}_x ( \cdot ,z)$.
\begin{lemma} \label{Lemma: Classical smooth fit}
Assume that $\mu_0 + \mu_1 < 0$ and $r > \frac{\gamma}{2 \sigma} \vert \mu_0 + \mu_1 \vert$. For $\hat{v}$ of \eqref{Value function parabolic formulation} we have $\hat{v}_x (\cdot\, , z) \in C(\mathbb{R})$, or, equivalently, $\hat{u}_x (\cdot \, , z) \in C(\mathbb{R})$ for $\hat{u}$ of \eqref{u hat}.
\end{lemma}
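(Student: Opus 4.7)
Since the obstacle $\psi(x,z):=(e^x-\kappa)(1+e^{\frac{\gamma}{\sigma}(x+z)})$ is $C^\infty$, the claim is equivalent to $\hat u_x(\cdot,z)\in C(\mathbb R)$. By Lemma \ref{Lemma: vhat solves boundary value problem} we already have $\hat u\in C^{2,1}(\mathcal C_3)$, while $\hat u\equiv 0$ on $\mathrm{int}(\mathcal S_3)$. Under the current hypothesis $\mu_0+\mu_1<0$ and Assumption \ref{Assumption: Well-posedness}, Propositions \ref{Proposition Mono 2} and \ref{Proposition boundary c continuous} give $\mathcal C_3=\{x<c(z)\}$ with $c:\mathbb R\to\mathbb R$ nondecreasing and continuous, so on each slice $\{z=z_0\}$ continuity of $\hat u_x$ across $x_0:=c(z_0)$ reduces to showing
\[
\lim_{x\uparrow x_0}\hat u_x(x,z_0)=0.
\]
Because $\hat u(\cdot,z_0)\ge 0$ vanishes at $x_0$, every subsequential limit is $\le 0$, and the failure of smooth-fit would mean the existence of $x_n\uparrow x_0$ and $L>0$ with $\hat u_x(x_n,z_0)\to -L$.

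\smallskip
The plan is to argue by contradiction via a test-function/PDE calculation, in the spirit of the right-continuity argument for $c$ in the case $\mu_0+\mu_1<0$ of Proposition \ref{Proposition boundary c continuous}. By the classical internal regularity for linear second-order parabolic equations (Friedman \cite{Fr}, Ch.~3, Thm.~10), $\hat u_x\in C^{2,1}(\mathcal C_3)$ and satisfies
\[
(\mathcal L_{X,Z}-r)\hat u_x(x,z)=-g_x(x,z),\qquad (x,z)\in\mathcal C_3.
\]
For $\delta>0$ small, choose $\phi=\phi_\delta\in C_c^\infty((x_0-2\delta,x_0-\delta))$, $\phi\ge 0$, $\int\phi\,dx=1$. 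By the monotonicity and continuity of $c$ there exists $\eta>0$ with $\mathrm{supp}\,\phi\times(z_0,z_0+\eta)\subset\mathcal C_3$. Multiplying the displayed PDE by $\phi(x)$, integrating in $x$, and performing successive integrations by parts to transfer the spatial derivatives onto $\phi$, one obtains a first-order linear ODE in $z$ for
\[
F_\phi(z):=\int\hat u_x(x,z)\phi(x)\,dx
\]
of the form $\tfrac12|\mu_0+\mu_1|F_\phi'(z)=rF_\phi(z)+R_\phi(z)$, where
\[
R_\phi(z):=\int\hat u(x,z)\bigl(-\mu_0\phi''(x)+\tfrac12\sigma^2\phi'''(x)\bigr)dx-\int g_x(x,z)\phi(x)\,dx.
\]

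\smallskip
I then let $\eta\downarrow 0$ and $\delta\downarrow 0$ in a coordinated way. Using the local Lipschitz continuity of $\hat u$ from Lemma \ref{Lemma: vhat continuous}, dominated convergence yields $F_{\phi_\delta}(z_0)\to -L$ and $F_{\phi_\delta}(z_0+\eta)\to -L$ as $\delta,\eta\downarrow 0$, while $\int g_x(x,z)\phi_\delta(x)\,dx\to g_x(x_0,z_0)$. Integrating the ODE over $(z_0,z_0+\eta)$ then forces a balance between the increment $F_\phi(z_0+\eta)-F_\phi(z_0)$, which vanishes in the limit, and the residual term driven by $g_x(x_0,z_0)$. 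Thanks to the bound $r>(\gamma/(2\sigma))|\mu_0+\mu_1|$ from Assumption \ref{Assumption: Well-posedness}---precisely the hypothesis exploited in the analogous step of Proposition \ref{Proposition boundary c continuous} case ii)---the residual is strictly signed near $(x_0,z_0)$, forcing $L=0$ and hence $\hat u_x(x_0^-,z_0)=0$, as required.

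\smallskip
The main obstacle is the rigorous justification of the coupled limit passages: (i) the interchange of the $z$-derivative with the spatial integration involving $\phi_\delta''',\phi_\delta''$, which requires local bounds on $\hat u_{xz}$ near $\partial\mathcal C_3$; (ii) controlling the rate at which $\phi_\delta$ concentrates relative to $\eta\downarrow 0$, so that the strip stays inside $\mathcal C_3$. Both should be tractable using Lemma \ref{Lemma: vhat continuous}, the monotonicity of $c$, the sign of $\hat u_z$ in $\mathcal C_3$ from Lemma \ref{Lemma: uhat z geq 0 in C}, and interior Schauder estimates for the PDE applied on compact subsets of $\mathcal C_3$.
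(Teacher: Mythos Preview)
Your approach is genuinely different from the paper's, and unfortunately it has a real gap. The paper does \emph{not} use a test-function/PDE argument here; instead it freezes $z$ at $z_0$, exploits the local boundedness of $\hat u_{xx}$ (which follows from the PDE and the local Lipschitz property of $\hat v$) to apply the It\^o--Tanaka formula to $e^{-rs}\hat u(X_s,z_0)$, and combines the supermartingale inequality for $e^{-rs}\hat u(X_s,Z_s)$ with $\hat u_z\ge 0$ and $t\mapsto Z_t$ increasing. A hypothetical jump $\hat u_x(x_0-,z_0)\le -\delta_0$ then produces a local-time contribution $\delta_0 e^{-rt}\,\mathbb E[L^{x_0}_{\tau_B\wedge t}]\sim \delta_0\sqrt t$ that cannot be balanced by the drift term $c_B\,\mathbb E[\tau_B\wedge t]\sim t$ as $t\downarrow 0$. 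The condition $r>\tfrac{\gamma}{2\sigma}|\mu_0+\mu_1|$ plays no role in that argument; it is only carried in the hypotheses because it is part of the standing assumptions in the case $\mu_0+\mu_1<0$.

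The gap in your argument is the ``balance'' step. Integrating your ODE over $(z_0,z_0+\eta)$ with $\delta$ fixed yields an identity in which both sides are $O(\eta)$ and vanish as $\eta\downarrow0$; no contradiction arises. If instead you send $\delta\downarrow0$, the terms $\int\hat u\,\phi_\delta''$ and $\int\hat u\,\phi_\delta'''$ in $R_{\phi_\delta}$ do not have controlled limits: after undoing the integrations by parts they become $\int\hat u_{xx}\phi_\delta$ and $\int\hat u_{xx}\phi_\delta'$, and the latter is of order $\delta^{-1}$ absent further regularity of $\hat u_{xx}$ up to $\partial\mathcal C_3$, which is precisely what is at stake. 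Moreover, the claim $F_{\phi_\delta}(z_0+\eta)\to -L$ presupposes information about $\hat u_x(\cdot,z_0+\eta)$ near $x_0$ that you do not have; the contradiction hypothesis concerns only the slice $z=z_0$. In short, the mechanism by which the PDE identity would force $L=0$ is never actually exhibited, and the analogy with the right-continuity proof of $c$ in Proposition~\ref{Proposition boundary c continuous} is misleading: there the test function sits in a \emph{gap} of the boundary and one lets $z\downarrow z_0$ into the stopping region where $\hat u\equiv 0$, which is a structurally different situation.
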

\begin{proof}
From \eqref{L -r u hat} we obtain 
\begin{align*}
\frac12 \sigma^2 \hat{u}_{xx} (x,z) &= r \hat{u} (x,z) - \mu_0 \hat{u}_x (x,z) + \frac12 (\mu_0 + \mu_1 )\hat{u}_z (x,z)  - g (x,z),
\end{align*}
for $(x,z) \in \mathcal{C}_3$, and due to \eqref{vbar lipschitz} (which implies an analogous result for $\hat{v}$) we deduce that for a bounded set $B$, we must have that $\hat{u}_{xx}$ is bounded on the closure of $B \cap \mathcal{C}_3$. Moreover, we recall that $\hat{u}_x \leq 0$ in $\mathcal{C}_3$, as verified in the proof of Proposition \ref{Proposition Mono 2}. Aiming for a contradiction we now assume that for $(x_0, z_0) \in \partial \mathcal{C}_3$, such that $x_0 = c(z_0)$, we have
\begin{align}\label{hat ux < delta for smooth fit}
\hat{u}_x (x_0 - ,z_0) < - \delta_0,
\end{align}
for some $\delta_0 > 0$. We now take a bounded rectangular neighbourhood of $(x_0,z_0)$ and let $\tau_B := \inf \{ t > 0: ~ (X_t,Z_t) \notin B \}$. Notice that 
\begin{align}\label{Inequality for SmFi}
\hat{u}(x_0 ,z_0)  \geq \mathbb{E}^\mathbb{Q}_{(x_0,z_0)} \Big[ e^{-r (\tau_B \wedge t)} \hat{u}(X_{\tau_B \wedge t}, Z_{\tau_B \wedge t} ) + \int_0^{\tau_B \wedge t} e^{-rs} g(X_s, Z_s)  ds \Big],
\end{align}
from the supermartingale property of $(e^{-rt}\hat{v} (X_t, Z_t))_t$. Recall Lemma \ref{Lemma: uhat z geq 0 in C} and since $t \mapsto Z_{\tau_B \wedge t}$ is increasing, we have $\hat{u}(X_{\tau_B \wedge t}^{x_0} , Z_{\tau_B \wedge t}^{z_0} ) \geq \hat{u}(X_{\tau_B \wedge t}^{x_0} , z_0 )$ $\mathbb{Q}$-a.s. Moreover, since the integrand on the right-hand side of \eqref{Inequality for SmFi} is bounded on $B$, we obtain
\begin{align}\label{equa in smfit}
\hat{u}(x_0,z_0) \geq \mathbb{E}^\mathbb{Q}_{(x_0,z_0)} \Big[ e^{-r (\tau_B \wedge t)} \hat{u}(X_{\tau_B \wedge t} , z_0) - c_B (\tau_B \wedge t) \Big],
\end{align}
where $c_B$ is a constant depending on $B$. Due to the previously discussed local boundedness of $\hat{u}_{xx}$, we can apply It\^{o}-Tanaka's formula to the first term in the expectation of \eqref{equa in smfit}. Let $\mathcal{L}_X := \frac12 \sigma^2 \partial_{xx} + \mu_0 \partial_x$ and denote the local time of $X$ at $x_0$ by $L^{x_0}$. Moreover, noticing that $\hat{u}_{xx} (\cdot \, ,z_0) = 0$ for $x > x_0$, we obtain 
\begin{align*}
\mathbb{E}^\mathbb{Q}_{(x_0,z_0)} \Big[ e^{-r (\tau_B \wedge t)} \hat{u}(X_{\tau_B \wedge t} ,z_0) \Big] 
&= \hat{u}(x_0,z_0) + \mathbb{E}^\mathbb{Q}_{(x_0,z_0)} \Big[ \int_0^{\tau_B \wedge t} e^{-rs} (\mathcal{L}_{X} -r ) \hat{u}(X_s ,z_0) \one_{ \{X_s \neq x_0 \} } ds \Big] \\
&\hspace*{2cm} - \mathbb{E}^\mathbb{Q}_{(x_0,z_0)} \Big[ \int_0^{\tau_B \wedge t} e^{-rs} \hat{u}_x (x_0 - ,z_0) d L_s^{x_0} \Big],
\end{align*}
and, combining this with (\ref{equa in smfit}), as well as noticing that $(\mathcal{L}_X -r) \hat{u}(X_s ,Z_s)$ is bounded on $B$, we find
\begin{align*}
0 &\geq \mathbb{E}^\mathbb{Q}_{(x_0,z_0)} \Big[ \int_0^{\tau_B \wedge t} e^{-rs} (\mathcal{L}_X -r) \hat{u}(X_s , z_0) \one_{ \{ X_s \neq x_0 \} } ds - c_B (\tau_B \wedge t) \Big] \\
&\hspace*{2cm} - \mathbb{E}^\mathbb{Q}_{(x_0,z_0)} \Big[ \int_0^{\tau_B \wedge t} e^{-rs} \hat{u}_x (x_0 - ,z_0 ) d L_s^{x_0} \Big] \\
&\geq \delta_0 e^{-rt} \mathbb{E}^\mathbb{Q}_{(x_0,z_0)} [ L_{\tau_B \wedge t}^{x_0} ] - c_B \mathbb{E}^\mathbb{Q}_{(x_0,z_0)} [ \tau_B \wedge t ],
\end{align*}
where we used our assumption \eqref{hat ux < delta for smooth fit} in the last inequality. Since this is equivalent to $c_B \mathbb{E}^\mathbb{Q}_{(x_0,z_0)} [\tau_B \wedge t] \geq \delta_0 e^{-rt} \mathbb{E}^\mathbb{Q}_{(x_0,z_0)} [L_{\tau_B \wedge t}^{x_0}]$, and $\mathbb{E}^\mathbb{Q}_{(x_0,z_0)} [\tau_B \wedge t] \approx t$ while $\mathbb{E}^\mathbb{Q}_{(x_0,z_0)} [L_{\tau_B \wedge t}^{x_0} ] \approx \sqrt{t}$ (see, e.g., Peskir \cite{PeCont}, Lemma 15), we obtain the desired contradiction. Hence, $\hat{u}_x (\cdot\, ,z) \in C(\mathbb{R})$. \qed
\end{proof}
We can now state the regularity of the boundary points.
\begin{proposition}\label{Proposition: Boundary points regular}
Assume that $\mu_0 + \mu_1 < 0$ and $r > \frac{\gamma}{2 \sigma} \vert \mu_0 + \mu_1 \vert$. All points $(x,z) \in \partial \mathcal{C}_3$ are regular, i.e.\,we have $\mathbb{Q}_{x,z}(\sigma_* > 0)=0$.
\end{proposition}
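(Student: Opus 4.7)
The plan is to argue by contradiction, following Section~4 of De~Angelis~\cite{DeA}, suitably adapted to the present bivariate setting where $Z$ is deterministic. Suppose that some $(x_0, z_0) \in \partial \mathcal{C}_3$ with $x_0 = c(z_0)$ satisfies $\mathbb{Q}_{x_0, z_0}(\sigma_* > 0) > 0$. Since the event $\{\sigma_* > 0\}$ lies in the germ $\sigma$-algebra of the Brownian filtration driving $X$, Blumenthal's $0$--$1$ law upgrades this to $\mathbb{Q}_{x_0, z_0}(\sigma_* > 0) = 1$, and by Lemma~\ref{Lemma: Equality of Stopping times} the same holds with $\hat{\sigma}_*$ in place of $\sigma_*$; in particular the process starting at $(x_0, z_0)$ remains in $\mathcal{C}_3$ for a strictly positive amount of time.

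Next, I would localise: pick a bounded rectangular neighbourhood $B$ of $(x_0, z_0)$ small enough that $g < 0$ on $\overline{B}$. This is possible because the calculation underlying the proof of Proposition~\ref{Proposition Mono 2} shows that $x_0 \geq x_0^* > \tilde{x}$ (with $\tilde{x}$ in~\eqref{x tilde for monotonicity}) under Assumption~\ref{Assumption: Well-posedness}, so that both summands in the definition of $g$ in~\eqref{Definition g(x,z)} are strictly negative at $(x_0, z_0)$ and, by continuity, on a full neighbourhood. Set $\tau_B := \inf\{t \geq 0 : (X_t, Z_t) \notin B\}$ and $\theta := \hat{\sigma}_* \wedge \tau_B$, so that $\theta > 0$ $\mathbb{Q}$-a.s.\ and $(X_s, Z_s) \in \mathcal{C}_3 \cap B$ for all $s < \theta$.

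The heart of the argument combines the identities $\hat{u}(x_0, z_0) = 0$, the classical smooth-fit $\hat{u}_x(x_0, z_0) = 0$ from Lemma~\ref{Lemma: Classical smooth fit} (obtained by continuity of $\hat{u}_x$ in $x$ together with $\hat{u} \equiv 0$ on $\mathcal{S}_3$), and the PDE $(\mathcal{L}_{X,Z} - r)\hat{u} = -g$ in $\mathcal{C}_3$ (Lemma~\ref{Lemma: vhat solves boundary value problem}). Applying It\^o's formula to $e^{-rs}\hat{u}(X_s, Z_s)$ on $[0, \theta]$, which is justified by the $C^{2,1}$-regularity of $\hat{u}$ in $\mathcal{C}_3$ and the local boundedness of $\hat{u}_{xx}$ verified inside the proof of Lemma~\ref{Lemma: Classical smooth fit}, and taking expectations so that the stochastic integral (a true martingale, since $\hat{u}_x$ is continuous and bounded on $\overline{B}$) vanishes, one obtains
\begin{align*}
\mathbb{E}_{x_0, z_0}^{\mathbb{Q}}\bigl[e^{-r\theta}\hat{u}(X_\theta, Z_\theta)\bigr]
= \mathbb{E}_{x_0, z_0}^{\mathbb{Q}}\Bigl[\int_0^\theta e^{-rs}\bigl(-g(X_s, Z_s)\bigr)\, ds\Bigr].
\end{align*}
The right-hand side is bounded below by $c_0\,\mathbb{E}_{x_0, z_0}^{\mathbb{Q}}[\theta]$ for some $c_0 > 0$, since $-g > 0$ on $\overline{B}$. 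The left-hand side is estimated from above using $\hat{u}(x_0, z_0) = \hat{u}_x(x_0, z_0) = 0$ together with a second-order Taylor expansion of $\hat{u}$ along the semimartingale $(X, Z)$; the $z$-contribution is controlled via Lemma~\ref{Lemma: uhat z geq 0 in C} and Lemma~\ref{Lemma: hat v_z}, which together force the right-derivative of $\hat{u}$ in $z$ to be small close to $(x_0, z_0)$. Shrinking $B$ around $(x_0, z_0)$, the resulting upper bound becomes $o(\mathbb{E}_{x_0, z_0}^{\mathbb{Q}}[\theta])$, contradicting the linear lower bound $c_0\,\mathbb{E}_{x_0, z_0}^{\mathbb{Q}}[\theta]$ and yielding the claim.

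The main obstacle is that, in sharp contrast to case~i) handled by Proposition~\ref{Proposition: Cont of stopping times in case i)} via the law of the iterated logarithm, the deterministic drift of $Z$ now pushes the moving threshold $c(Z_t)$ \emph{away} from $x_0 = c(z_0)$, and only continuity of $c$ at $z_0$ is a priori available from Proposition~\ref{Proposition boundary c continuous}. A pathwise LIL argument therefore cannot be closed, and the classical smooth-fit of Lemma~\ref{Lemma: Classical smooth fit} together with the PDE structure of $\hat{u}$ must replace it; this is precisely why Lemma~\ref{Lemma: Classical smooth fit} is established \emph{before} Proposition~\ref{Proposition: Boundary points regular} in the order of the development.
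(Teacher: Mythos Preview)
Your overall strategy---assume $\mathbb{Q}_{x_0,z_0}(\sigma_*>0)=1$ and derive a contradiction with the smooth-fit Lemma~\ref{Lemma: Classical smooth fit}---matches the paper. However, the mechanism you propose for obtaining the contradiction has a genuine gap.

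The problematic step is the upper bound on $\mathbb{E}^\mathbb{Q}_{x_0,z_0}[e^{-r\theta}\hat u(X_\theta,Z_\theta)]$. You write that the $z$-contribution in a Taylor expansion is controlled because Lemmas~\ref{Lemma: uhat z geq 0 in C} and~\ref{Lemma: hat v_z} ``force the right-derivative of $\hat u$ in $z$ to be small close to $(x_0,z_0)$''. But Lemma~\ref{Lemma: hat v_z} only gives $\hat v_z$ on $\mathbb{R}^2\setminus\partial\mathcal{C}_3$, and its formula involves $\tau^*$; to conclude that $\hat v_z$ (hence $\hat u_z$) is small as one approaches $(x_0,z_0)$ from $\mathcal{C}_3$ you would need $\tau^*(x_n,z_n)\to 0$, which is precisely the regularity you are trying to establish. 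The argument is therefore circular. Moreover, the $z$-displacement is $Z_\theta-z_0=\tfrac12|\mu_0+\mu_1|\,\theta$, linear in $\theta$; coupled with a merely bounded (not small) $\hat u_z$, this contributes a term of the \emph{same} order $\mathbb{E}[\theta]$ as the right-hand side $c_0\,\mathbb{E}[\theta]$. Shrinking $B$ does not help, since both $-g(x_0,z_0)$ and the local Lipschitz constant of $\hat u$ in $z$ are fixed positive numbers.

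The paper avoids any control of $\hat u_z$ altogether. It derives, for $(x,z)\in\mathcal{C}_3$ with $x>\tilde x$, the pointwise inequality
\[
\hat u_x(x,z)\;\le\;\mathbb{E}^\mathbb{Q}\Big[\int_0^{\tilde\tau\wedge\tau^*}e^{-rt}\,g_x(X_t^x,Z_t^z)\,dt\Big],
\]
by comparing $\hat u(x,z)$ with $\hat u(x-\epsilon,z)$ through the strong Markov property at the hitting time of the level $\tilde x+\epsilon$ (respectively $\tilde x$), and then dividing by $\epsilon$. Since $g_x<0$ for $x>\tilde x$ under Assumption~\ref{Assumption: Well-posedness}, and since the contradiction hypothesis gives $\liminf_n(\tilde\tau\wedge\tau^*(x_n,z_0))\ge\tilde\tau\wedge\sigma_*(x_0,z_0)>0$ along $x_n\uparrow x_0$, the right-hand side has a strictly negative limit, forcing $\hat u_x(x_0-,z_0)<0$ and contradicting Lemma~\ref{Lemma: Classical smooth fit}. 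This route uses only the $x$-direction smooth fit and sidesteps the $z$-derivative entirely.
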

\begin{proof}
We argue by contradiction and show that if $\mathbb{Q}_{(x_0,z_0)}(\sigma_* > 0)=1$ for some boundary point $(x_0,z_0) \in \partial \mathcal{C}_3$ it follows that $\hat{u}_x (x_0 - ,z_0) < 0$, which contradicts Lemma \ref{Lemma: Classical smooth fit}. As a first step, we establish an upper bound for $\hat{u}_x$. Fix $(x,z) \in \mathcal{C}_3$ such that $x > \tilde{x}$, with the latter given by \eqref{x tilde for monotonicity}. Define $\tau_\epsilon := \tau_\epsilon (x) := \inf \{ t\geq 0:~ X_t^x = \tilde{x} + \epsilon \}$
and observe that - by strong Markov property - we have
\begin{align}\label{eq in upperb}
\hat{u}(x,z) 
&= \sup_\tau \mathbb{E}^\mathbb{Q}_{(x,z)} \Big[  e^{-r \tau_\epsilon} \hat{u}(\tilde{x} + \epsilon , Z_{\tau_\epsilon} ) \one_{ \{ \tau > \tau_\epsilon \} } + \int_0^{\tau_\epsilon \wedge \tau } e^{-rt} g(X_t, Z_t) dt \Big].
\end{align}
Moreover, we let $\tilde{\tau} := \tilde{\tau}(x) := \inf \{ t > 0:~X_t^x = \tilde{x} \}$, and for $\tau' := \tau^* (x,z)$ we obtain
 \begin{align}\label{Inequ regular boundary points}
\hat{u}(x- \epsilon,z) 
&= \mathbb{E}^\mathbb{Q}_{(x - \epsilon,z)} \Big[ e^{-r \tilde{\tau} (x-\epsilon) } \hat{u} (\tilde{x} , Z_{\tilde{\tau} (x-\epsilon)} ) \one_{ \{ \tau' > \tilde{\tau} (x-\epsilon) \} } + \int_0^{\tau' \wedge \tilde{\tau} (x-\epsilon)} e^{-rt} g(X_t, Z_t) dt \Big].
\end{align}
Notice that $\tau_\epsilon (x) = \tilde{\tau} (x-\epsilon)$. Hence, subtracting \eqref{Inequ regular boundary points} from (\ref{eq in upperb}) yields
\begin{align*}
\hat{u}(x,z) - \hat{u}(x - \epsilon,z) 
& = \mathbb{E}^\mathbb{Q} \Big[  e^{-r \tau_\epsilon} \big( \hat{u}(\tilde{x} + \epsilon , Z_{\tau_\epsilon}^z) - \hat{u}(\tilde{x}, Z_{\tau_\epsilon}^z ) \big) \one_{ \{ \tau' > \tau_\epsilon \} } \Big]  \\
&\hspace*{1cm} + \mathbb{E}^\mathbb{Q} \Big[ \int_0^{\tau_\epsilon \wedge \tau'} e^{-rt}  \big( g(X_t^x , Z_t^z) - g(X_t^{x- \epsilon}, Z_t^z)  \big) dt \Big].
\end{align*}  
Since $(\tilde{x} + \epsilon , Z_{\tau_\epsilon}^z ) \in \mathcal{C}_3 $ on $\{\tau' > \tau_\epsilon \}$  and $\hat{u}_x \leq 0$ in $\mathcal{C}_3$ (see Proposition \ref{Proposition Mono 2}), we must have 
\begin{align*}
\hat{u}(\tilde{x}, Z_{\tau_\epsilon}^z) \geq \hat{u} (\tilde{x} + \epsilon, Z_{\tau_\epsilon}^z ) ,
\end{align*}
and we obtain 
\begin{align*}
\hat{u}(x,z) - \hat{u}(x - \epsilon,z) 
&\leq 
 \mathbb{E}^\mathbb{Q} \Big[ \int_0^{\tau_\epsilon \wedge \tau'} e^{-rt}  \big( g(X_t^x , Z_t^z) - g(X_t^{x- \epsilon}, Z_t^z) \big) dt \Big]. 
\end{align*}
If we now divide by $\epsilon > 0$ and let $\epsilon \downarrow 0$, we obtain (since $\tau_\epsilon \downarrow \tilde{\tau}$ and $\tau' = \tau^* (x,z)$) 
\begin{align*}
\hat{u}_x (x,z) &\leq 
 \mathbb{E}^\mathbb{Q} \Big[ \int_0^{\tilde{\tau} \wedge \tau'} e^{-rt} g_x (X_t^x , Z_t) dt \Big].
\end{align*}
In the next step, we assume by contradiction that there exists $(x_0,z_0) \in \partial \mathcal{C}_3$ with $\mathbb{Q}_{x_0,z_0}(\sigma_* > 0)=1$ and take an increasing sequence $x_n \uparrow x_0$ such that $x_n > \tilde{x}$ for all $n \in \mathbb{N}$, which is possible due to Assumption \ref{Assumption: For monotonicity of c}. Let $\tau_n := \tau^* (x_n, z_n)$ and notice that $\tau_n = \sigma_n := \sigma_* (x_n, z_0)$ for all $n \in \mathbb{N}$ due to continuity of paths. Furthermore, $\sigma_n$ decreases in $n$ and $\sigma_n \geq \sigma_* := \sigma_* (x_0, z_0)$, since $x \mapsto X_t^x$ is increasing. Set $\tilde{\tau}^n := \tilde{\tau}(x_n)$ and notice that $\tilde{\tau}^n \uparrow \tilde{\tau}$. Moreover, we let $\sigma^\infty := \lim_{n \to \infty} \sigma_n$ and have
\begin{align*}
\sigma^\infty \wedge \tilde{\tau} = \lim_{n \to \infty} (\sigma_n \wedge \tilde{\tau}^n ) \geq \sigma_* \wedge \tilde{\tau} \qquad \mathbb{Q}\text{-a.s.}
\end{align*}
We then obtain 
\begin{align*}
\hat{u}_x (x_0- , z_0) 
= \lim_{n \to \infty} \hat{u}_x (x_n ,z_0) 
&\leq  \lim_{n \to \infty}  \mathbb{E}^\mathbb{Q} \Big[ \int_0^{\tilde{\tau} \wedge \sigma_n } e^{-rt} g_x (X_t^{x_n} , Z_t^{z_0} ) dt \Big]  \\
&=  \mathbb{E}^\mathbb{Q} \Big[ \int_0^{\tilde{\tau} \wedge \sigma^\infty } e^{-rt} g_x (X_t^{x_0} , Z_t^{z_0} ) dt \Big]
< 0,
\end{align*}
where we used $x_0 > \tilde{x}$ as well as $\tilde{\tau} \wedge \sigma^\infty > 0$ due to our assumption $\mathbb{Q}_{x_0 , z_0}(\sigma^\infty \geq \sigma^* > 0) =1$. But this contradicts Lemma \ref{Lemma: Classical smooth fit} and the claim follows. \qed 
\end{proof}
As a corollary of Lemma \ref{Lemma: Equality of Stopping times} and Proposition \ref{Proposition: Boundary points regular} we obtain
\begin{corollary}\label{Corollary Equality of stop times}
Assume that $\mu_0 + \mu_1 < 0$ and $r > \frac{\gamma}{2 \sigma} \vert \mu_0 + \mu_1 \vert$. Then, for all $(x,z) \in \mathbb{R}^2$ we have 
\begin{align*}
\mathbb{Q}_{x,z} ( \tau^* = \sigma_* = \hat{\sigma}_* ) = 1.
\end{align*}
\end{corollary}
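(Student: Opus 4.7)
The plan is to split according to the position of the starting point $(x,z)$ with respect to $\text{int}(\mathcal{S}_3)$, $\mathcal{C}_3$ and $\partial \mathcal{C}_3$, and in each case to combine Lemma \ref{Lemma: Equality of Stopping times} with the regularity of boundary points proved in Proposition \ref{Proposition: Boundary points regular}. Since $\mathcal{S}_3 = \overline{T}(\mathcal{S}_2)$ and $T,\overline{T}$ are diffeomorphisms preserving the underlying filtration, I will identify $\tau^*$ with the first entry time $\sigma_*$ of $(X,Z)$ into the closed set $\mathcal{S}_3$.

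The two straightforward pieces will be dispatched first. If $(x,z)\in\text{int}(\mathcal{S}_3)$, then $\hat{\sigma}_*=0$ because the starting point already lies in the open set $\text{int}(\mathcal{S}_3)$, and a fortiori $\sigma_*=\tau^*=0$ $\mathbb{Q}_{x,z}$-a.s. If $(x,z)\in\mathcal{C}_3$, the identity $\tau^*=\sigma_*$ $\mathbb{Q}_{x,z}$-a.s. is the standard observation recorded just after \eqref{Stopping times for regularity} (path-continuity together with closedness of $\mathcal{S}_3$), and the remaining equality $\sigma_*=\hat{\sigma}_*$ $\mathbb{Q}_{x,z}$-a.s. is precisely the content of Lemma \ref{Lemma: Equality of Stopping times}.

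The only delicate case is $(x,z)\in\partial \mathcal{C}_3$. Proposition \ref{Proposition: Boundary points regular} ensures that any such boundary point is regular for $\mathcal{S}_3$, giving $\sigma_*=0$ $\mathbb{Q}_{x,z}$-a.s. (and hence $\tau^*=0$ as well); applying Lemma \ref{Lemma: Equality of Stopping times} one more time transfers this regularity to $\hat{\sigma}_*$, so that $\hat{\sigma}_*=0$ $\mathbb{Q}_{x,z}$-a.s. Because the three sets cover $\mathbb{R}^2$ and on each of them the three stopping times coincide $\mathbb{Q}_{x,z}$-a.s., the corollary follows. The substantive obstacle is entirely concentrated in Proposition \ref{Proposition: Boundary points regular}, whose proof already leverages the classical smooth-fit Lemma \ref{Lemma: Classical smooth fit} and the assumption $r>\tfrac{\gamma}{2\sigma}|\mu_0+\mu_1|$; once that is in hand, the corollary reduces to a case-by-case bookkeeping assembly and I do not anticipate any further technical difficulty.
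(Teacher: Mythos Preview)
Your proposal is correct and follows exactly the approach the paper intends: the paper merely records the result as an immediate corollary of Lemma \ref{Lemma: Equality of Stopping times} and Proposition \ref{Proposition: Boundary points regular} without spelling out details, and your case-by-case argument (interior of $\mathcal{S}_3$, $\mathcal{C}_3$, and $\partial\mathcal{C}_3$) is precisely the natural way to assemble those two ingredients.
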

This result allows us to state the continuity result of the optimal stopping time with respect to the initial data.
\begin{lemma} \label{Lemma: Continuity of stop times}
Assume that $\mu_0 + \mu_1 < 0$ and $r > \frac{\gamma}{2 \sigma} \vert \mu_0 + \mu_1 \vert$. We have $\lim_{n \to \infty} \tau^* (x_n,z_n) = \tau^* (x,z)$ for any $(x,z) \in \mathbb{R}^2$ and any sequence $(x_n,z_n) \to (x,z)$. In particular, if $(x,z)\in \partial\mathcal{C}_3$, the limit is zero.
\end{lemma}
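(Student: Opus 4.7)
The strategy will be to sandwich $\tau^*(x_n,z_n)$ from below by first hitting times of the closed set $\mathcal{S}_3$ and from above by first hitting times of the open set $\mathrm{int}(\mathcal{S}_3)$, and then exploit Corollary~\ref{Corollary Equality of stop times}, which identifies these three objects $\mathbb{Q}$-a.s. A convenient initial observation is that the flow of the two-dimensional state process is essentially deterministic in the initial datum: from \eqref{Dynamics after Girsanov} and \eqref{Dynamics Z} one reads off $X_t^{x_n} = x_n + \mu_0 t + \sigma B_t$ and $Z_t^{z_n} = z_n - \tfrac12(\mu_0+\mu_1)t$, so that $\sup_{t\le T} |(X_t^{x_n},Z_t^{z_n}) - (X_t^x,Z_t^z)| \to 0$ pathwise for every $T>0$ and every $\omega$.

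With this uniform convergence in hand, I would invoke two classical semicontinuity properties of first hitting times for continuous paths. First, since $\mathcal{S}_3$ is closed, for any $s < \sigma_*(x,z)$ the reference path $(X^x,Z^z)$ stays in the open set $\mathcal{C}_3 = \mathbb{R}^2 \setminus \mathcal{S}_3$ on the compact interval $[0,s]$; by uniform convergence on $[0,s]$ the perturbed paths $(X^{x_n},Z^{z_n})$ also avoid $\mathcal{S}_3$ on $[0,s]$ for all $n$ sufficiently large, yielding $\liminf_n \sigma_*(x_n,z_n) \ge \sigma_*(x,z)$ $\mathbb{Q}$-a.s. Second, since $\mathrm{int}(\mathcal{S}_3)$ is open, for any $s > \hat{\sigma}_*(x,z)$ there is a time in $(\hat{\sigma}_*(x,z), s)$ at which the reference path enters a small open ball contained in $\mathrm{int}(\mathcal{S}_3)$; uniform convergence then forces the perturbed paths to lie in the same ball at the same time eventually, so $\limsup_n \hat{\sigma}_*(x_n,z_n) \le \hat{\sigma}_*(x,z)$ $\mathbb{Q}$-a.s.

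Combining these two inequalities with the trivial pointwise bound $\sigma_*(x_n,z_n) \le \tau^*(x_n,z_n) \le \hat{\sigma}_*(x_n,z_n)$ and with Corollary~\ref{Corollary Equality of stop times} (which asserts $\sigma_* = \hat{\sigma}_* = \tau^*$ $\mathbb{Q}$-a.s.\ at every fixed initial point) gives the chain
\[
\tau^*(x,z) = \sigma_*(x,z) \le \liminf_n \tau^*(x_n,z_n) \le \limsup_n \tau^*(x_n,z_n) \le \hat{\sigma}_*(x,z) = \tau^*(x,z),
\]
so $\tau^*(x_n,z_n) \to \tau^*(x,z)$ $\mathbb{Q}$-a.s. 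For $(x,z)\in\partial\mathcal{C}_3$, regularity of the boundary established in Proposition~\ref{Proposition: Boundary points regular} gives $\sigma_*(x,z) = 0$ $\mathbb{Q}$-a.s., whence the limit is $0$, as claimed in the ``in particular'' part.

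The main subtlety I expect to have to address is that the identifications $\sigma_* = \hat{\sigma}_* = \tau^*$ hold only $\mathbb{Q}$-a.s.\ at each individual initial point, whereas the argument uses them simultaneously at $(x,z)$ and at every $(x_n,z_n)$. Since this is a countable family, the intersection of the corresponding full-measure events still has full $\mathbb{Q}$-measure, so the chain of inequalities is valid on a single $\mathbb{Q}$-null exceptional set; no additional regularity of $c$ beyond that already available through Proposition~\ref{Proposition boundary c continuous} is needed.
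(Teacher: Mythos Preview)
Your proposal is correct and follows essentially the same approach as the paper: lower-semicontinuity of $\sigma_*$ via the openness of $\mathcal{C}_3$, upper-semicontinuity of $\hat{\sigma}_*$ via the openness of $\mathrm{int}(\mathcal{S}_3)$, and then the sandwich is closed using Corollary~\ref{Corollary Equality of stop times}. The only cosmetic difference is that the paper phrases the lower bound through the continuous boundary $c$ (via Proposition~\ref{Proposition boundary c continuous}) rather than in abstract topological terms, and it does not spell out the countable-null-set remark you add at the end.
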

\begin{proof}
Let $(x,z) \in \mathbb{R}^2$ and denote $\tau_n := \tau^* (x_n , z_n)$ as well as $\tau := \tau^* (x,z)$ for simplicity. In order to show lower-semicontinuity, we fix $\omega \in \Omega$ ouside of a null-set. For $\tau(\omega) = 0$ we are finished and thus assume $\tau (\omega) > \delta > 0$. Due to Proposition \ref{Proposition boundary c continuous} there exists $k_{\delta, \omega} > 0$ such that 
\begin{align*}
c(Z_t (\omega) ) - X_t (\omega) > k_{\delta, \omega} ,
\end{align*}
for all $t \in [0, \delta]$.
The map $(t,x,z) \mapsto c(Z_t^{z} (\omega)) - X_t^x (\omega)$ is uniformly continuous on any compact $[0, \delta ] \times K$, hence we can find $N_\omega \geq 1$ such that for all $n \geq N_\omega$ and $t \in [ 0 ,\delta ]$
\begin{align*}
c(Z_t^{z_n} (\omega)) - X_t^{x_n} ( \omega) > k_{\delta, \omega},
\end{align*}
and therefore $\liminf_n \tau_n (\omega) \geq \delta$. Since $\omega$ and $\delta$ were arbitrary, we obtain $\liminf_n \tau_n \geq \tau$ $\mathbb{Q}$-a.s. and thus lower-semicontinuity. By employing similar arguments we can show $\limsup_n \hat{\sigma}_n \leq \hat{\sigma}$ $\mathbb{Q}$-a.s.\,and the claim thus follows together with Corollary \ref{Corollary Equality of stop times}. \qed
\end{proof}
Before we finally state the proof of Proposition \ref{Proposition: Smooth fit}, we can derive a probabilistic representation of $v_x$ by employing  arguments similar to those employed in the proof of Lemma \ref{Lemma: hat v_z}.
\begin{lemma}
For all $(x,z) \in \mathbb{R}^2 \setminus \partial \mathcal{C}_3$, we have 
\begin{align*}
\hat{v}_x (x,z) = \mathbb{E}^\mathbb{Q}_{(x,z)} \Big[ e^{-r \tau^*} \Big( e^{X_{\tau^*}} (1 + e^{\frac{\gamma}{\sigma} (X_{\tau^*} + Z_{\tau^*} )} ) + \frac{\gamma}{\sigma} (e^{X_{\tau^*} } - \kappa) e^{\frac{\gamma}{\sigma} (X_{\tau^*} + Z_{\tau^*} )} ) \Big)\one_{ \{ \tau^* < \infty \} } \Big].
\end{align*}
\end{lemma}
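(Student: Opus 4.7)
The argument mirrors the proof of Lemma \ref{Lemma: hat v_z}, now differentiating the obstacle with respect to $x$ rather than $z$. Writing $\psi(x,z):=(e^x-\kappa)(1+e^{\frac{\gamma}{\sigma}(x+z)})$, so that $\psi_x(x,z)=e^x(1+e^{\frac{\gamma}{\sigma}(x+z)})+\frac{\gamma}{\sigma}(e^x-\kappa)e^{\frac{\gamma}{\sigma}(x+z)}$, the claim reads $\hat v_x(x,z)=\mathbb E^{\mathbb Q}_{(x,z)}[e^{-r\tau^*}\psi_x(X_{\tau^*},Z_{\tau^*})\one_{\{\tau^*<\infty\}}]$. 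Since $\mathbb R^2\setminus\partial\mathcal C_3=\mathcal C_3\cup\operatorname{int}(\mathcal S_3)$, two cases arise. When $(x,z)\in\operatorname{int}(\mathcal S_3)$, one has $\tau^*=0$ $\mathbb Q$-a.s.\ and $\hat v\equiv\psi$ on a neighbourhood of $(x,z)$, so both sides coincide with $\psi_x(x,z)$. It remains to treat $(x,z)\in\mathcal C_3$.

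Fix such $(x,z)$ and let $\tau^*=\tau^*(x,z)$, which is $\mathbb Q$-a.s.\ finite and optimal. For $\epsilon>0$ combine the supermartingale property of $(e^{-r(\tau^*\wedge t)}\hat v(X^{x+\epsilon}_{\tau^*\wedge t},Z^z_{\tau^*\wedge t}))_{t\ge 0}$, arising from the mere admissibility of $\tau^*$ for the datum $(x+\epsilon,z)$, with the martingale property of $(e^{-r(\tau^*\wedge t)}\hat v(X^{x}_{\tau^*\wedge t},Z^z_{\tau^*\wedge t}))_{t\ge 0}$ to obtain
\[
\hat v(x+\epsilon,z)-\hat v(x,z)\ge \mathbb E^{\mathbb Q}\Big[e^{-r(\tau^*\wedge t)}\big(\hat v(X^{x+\epsilon}_{\tau^*\wedge t},Z^z_{\tau^*\wedge t})-\hat v(X^{x}_{\tau^*\wedge t},Z^z_{\tau^*\wedge t})\big)\Big].
\]
On $\{\tau^*\le t\}$, pathwise continuity and the definition of $\tau^*$ give $\hat v(X^x_{\tau^*},Z^z_{\tau^*})=\psi(X^x_{\tau^*},Z^z_{\tau^*})$, while the inequality $\hat v\ge\psi$ bounds the first term from below by $\psi(X^x_{\tau^*}+\epsilon,Z^z_{\tau^*})$. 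The residual on $\{\tau^*>t\}$ is controlled via the growth estimate $|\hat v(x,z)|\le K_1 e^x(1+e^{\frac{\gamma}{\sigma}(x+z)})$ obtained from Lemma \ref{Lemma: Properties of vbar} i) and \eqref{Value function parabolic formulation}, which, under Assumption \ref{Assumption: Well-posedness}, vanishes as $t\to\infty$. Dominated convergence yields
\[
\hat v(x+\epsilon,z)-\hat v(x,z)\ge \mathbb E^{\mathbb Q}\Big[e^{-r\tau^*}\big(\psi(X^x_{\tau^*}+\epsilon,Z^z_{\tau^*})-\psi(X^x_{\tau^*},Z^z_{\tau^*})\big)\one_{\{\tau^*<\infty\}}\Big],
\]
and the symmetric argument applied to $\hat v(x,z)-\hat v(x-\epsilon,z)$ delivers the matching upper bound
\[
\hat v(x,z)-\hat v(x-\epsilon,z)\le \mathbb E^{\mathbb Q}\Big[e^{-r\tau^*}\big(\psi(X^x_{\tau^*},Z^z_{\tau^*})-\psi(X^x_{\tau^*}-\epsilon,Z^z_{\tau^*})\big)\one_{\{\tau^*<\infty\}}\Big].
\]
Dividing each by $\epsilon$ and letting $\epsilon\downarrow 0$, the left-hand sides converge to $\hat v_x(x,z)$ thanks to the $C^1$-regularity established in Proposition \ref{Proposition: Smooth fit}, while the mean value theorem together with dominated convergence identifies both right-hand sides with $\mathbb E^{\mathbb Q}[e^{-r\tau^*}\psi_x(X^x_{\tau^*},Z^z_{\tau^*})\one_{\{\tau^*<\infty\}}]$, completing the proof.

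The delicate point is the pair of limit interchanges. The $t\to\infty$ step proceeds as in Lemma \ref{Lemma: hat v_z}, exploiting the exponential bound on $\hat v$ in conjunction with Assumption \ref{Assumption: Well-posedness}. The $\epsilon\downarrow 0$ step requires a uniform-in-$\epsilon$ integrable majorant for $\epsilon^{-1}|\psi(X^x_{\tau^*}\pm\epsilon,Z^z_{\tau^*})-\psi(X^x_{\tau^*},Z^z_{\tau^*})|$ against $e^{-r\tau^*}$; since $|\psi_x(x',Z^z_{\tau^*})|$ is dominated by a constant multiple of $e^{X^x_{\tau^*}}(1+e^{\frac{\gamma}{\sigma}(X^x_{\tau^*}+Z^z_{\tau^*})})$ uniformly for $x'$ in a bounded neighbourhood of $x$, the requisite finiteness $\mathbb E^{\mathbb Q}[e^{-r\tau^*}e^{X^x_{\tau^*}}(1+e^{\frac{\gamma}{\sigma}(X^x_{\tau^*}+Z^z_{\tau^*})})]<\infty$ reduces, via the inverse change of measure of Section \ref{Section: Decoupling Change of Measure}, to the integrability already employed in the proof of Lemma \ref{Lemma: Properties of vbar}, hence is secured by Assumption \ref{Assumption: Well-posedness}.
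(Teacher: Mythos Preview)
Your approach is the same as the paper's, which merely states that the proof proceeds ``by employing arguments similar to those employed in the proof of Lemma \ref{Lemma: hat v_z}''. The structure you give---the interior-of-$\mathcal{S}_3$ case, the supermartingale/martingale sandwich on $\mathcal{C}_3$, the passage $t\to\infty$ via the growth bound, and the $\epsilon\downarrow 0$ limit---is exactly what that instruction unpacks to.

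There is, however, a circularity you should repair. This lemma sits in the appendix \emph{before} the proof of Proposition~\ref{Proposition: Smooth fit} and is in fact invoked in that proof to produce the representation of $\hat v_x(x_n,z_n)$. Hence you cannot justify the existence of $\hat v_x(x,z)$ for $(x,z)\in\mathcal{C}_3$ by appealing to Proposition~\ref{Proposition: Smooth fit}. The correct reference is Lemma~\ref{Lemma: vhat solves boundary value problem}, which already gives $\hat v\in C^{2,1}(\mathcal{C}_3)$ and hence the differentiability you need at interior continuation points; this is precisely the lemma the paper cites in the analogous step of Lemma~\ref{Lemma: hat v_z}. With that substitution the argument is complete and matches the paper.
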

We are therefore ready to prove Proposition \ref{Proposition: Smooth fit}.
\\[0.2cm]
\textbf{Proof of Proposition \ref{Proposition: Smooth fit}.} The first statement trivially holds true for $(x,z) \in$ int($\mathcal{S}_3$) and $(x,z) \in \mathcal{C}_3$, due to the result in Lemma \ref{Lemma: vhat solves boundary value problem}. It thus remains to prove that $\triangledown_{x,z} \hat{v}$ is continuous across the boundary $\partial \mathcal{C}_3$. Let $(x_0,z_0) \in \partial \mathcal{C}_3$ and take a sequence $(x_n, z_n) \to (x_0 ,z_0)$ with $\tau_n := \tau^* (x_n,z_n)$. For a fixed $t > 0$, we notice $(X_t, Z_t )\in \mathcal{C}_3$ on $\{ \tau_n > t \}$ and thus, 
upon using tower and Markov property, we obtain 
\begin{align*}
\hat{v}_x (x_n, z_n) 
&=\mathbb{E}^\mathbb{Q}_{(x_n,z_n)} \Big[ e^{-r \tau_n} \Big( e^{X_{\tau_n}} (1 + e^{\frac{\gamma}{\sigma} (X_{\tau_n} + Z_{\tau_n} )} ) + \frac{\gamma}{\sigma} (e^{X_{\tau_n}} - \kappa) e^{\frac{\gamma}{\sigma} (X_{\tau_n} +Z_{\tau_n} )} \Big) \one_{ \{\tau_n \leq t \}} \Big] \\ 
& \hspace*{2cm} + \mathbb{E}^\mathbb{Q}_{(x_n,z_n)} \Big[ e^{-rt} \hat{v}_x (X_t , Z_t ) \one_{ \{ \tau_n > t \} } \Big].
\end{align*}
Due to Assumption \ref{Assumption: Well-posedness} we can invoke dominated convergence as well as Lemma \ref{Lemma: Continuity of stop times} to obtain
\begin{align*}
\lim_{n \to \infty} \hat{v}_x (x_n, z_n) 
&= e^{x_0} (1+ e^{\frac{\gamma}{\sigma}(x_0 + z_0)}) + \frac{\gamma}{\sigma} (e^{x_0} - \kappa) e^{\frac{\gamma}{\sigma}(x_0 + z_0 )} 
= \frac{\partial}{\partial x} \Big( (e^x - \kappa) (1 + e^{\frac{\gamma}{\sigma}(x+z)} ) \Big) \Big\vert_{(x_0,z_0)}, 
\end{align*}
and hence, the continuity of $\hat{v}_x$ across the optimal boundary. The continuity of $\hat{v}_z$ across the free boundary follows similarly. For the last claim we observe that Lemma \ref{Lemma: vhat solves boundary value problem} implies 
\begin{align}\label{Equality for smooth fit}
\frac12 \sigma^2 \hat{v}_{xx}(x,z) = r \hat{v}(x,z) - \mu_0 \hat{v}_x (x,z) + \frac12 (\mu_0 + \mu_1) \hat{v}_z (x,z),
\end{align}
for all $(x,z) \in \mathcal{C}_3$. But the right-hand side of \eqref{Equality for smooth fit} only involves functions which are continuous on $\mathbb{R}^2$, hence we deduce that $\hat{v}_{xx}$ admits a continuous extension on $\overline{\mathcal{C}}_3$ and is therefore bounded therein. It follows that $\hat{v}_x ( \cdot \, , z)$ is locally Lipschitz continuous on $\overline{\mathcal{C}}_3$, with a Lipschitz constant $K(z)$ that is locally bounded on $\mathbb{R}$. Now, because $\hat{v}_x( \cdot \, , z)$ is infinitely many times continuously differentiable in the stopping region $\mathcal{S}_3$ (and hence locally bounded therein as well), we conclude that $\hat{v}_{xx} \in L_{\text{loc}}^\infty (\mathbb{R}^2) $. \qed

\section{Proof of Proposition \ref{Proposition: Prob Repr of hatv} }\label{Appendix: Proof of Proposition Prob Repr of hatv}
\begin{proof}
Let $R >0$ and define $\tau_R := \inf \{ t \geq 0: ~ \vert X_t \vert \geq R ~ \text{or} ~ \vert Z_t \vert \geq R \}$. Since $\hat{v} \in C^1 (\mathbb{R}^2 )$ and $\hat{v}_{xx} \in L_{\text{loc}}^\infty (\mathbb{R}^2)$, we can apply a weak version of Ito's Lemma (see, e.g., Bensoussan and Lions \cite{BeLi}, Lemma 8.1 and Th. 8.5, pp. 183-186) up to the stopping time $\tau_R \wedge T$ for some $T > 0$, which results in 
\begin{align}\label{integr equ ito 1 }
\hat{v}(x,z) = \mathbb{E}^\mathbb{Q}_{(x,z)} \Big[ e^{-r (\tau_R \wedge T)} \hat{v} (X_{\tau_R \wedge T} , Z_{\tau_R \wedge T} ) - \int_0^{\tau_R \wedge T} e^{-rs} (\mathcal{L}_{X,Z} -r) \hat{v} (X_s , Z_s) ds \Big].
\end{align}
The right-hand-side of \eqref{integr equ ito 1 } is well-defined, because $Z$ is deterministic, $X$ has an absolutely continuous transition density and $\mathcal{L}_{X,Z} \hat{v}$ is defined up to a set of zero Lebesgue measure. Since $\hat{v}$ solves the free-boundary problem \eqref{Free Boundary Problem}, we have
\begin{align*}
(\mathcal{L}_{X,Z} -r) \hat{v} (x,z) &=(\mathcal{L}_{X,Z} -r) \hat{v}(x,z) \one_{ \{x < c(z) \} }  + (\mathcal{L}_{X,Z} -r) \hat{v}(x,z) \one_{ \{ x \geq c(z) \} }  
=  g(x,z) \one_{ \{ x \geq c(z) \} } ,
\end{align*}
for almost all $(x,z) \in \mathbb{R}^2$. Using again that the transition density of $X$ is absolutely continuous with respect to the Lebesgue measure, equation (\ref{integr equ ito 1 }) becomes 
\begin{align*}
\hat{v}(x,z) 
= \mathbb{E}^\mathbb{Q}_{(x,z)} \Big[ e^{-r (\tau_R \wedge T)} \hat{v} (X_{\tau_R \wedge T}, Z_{\tau_R \wedge T} ) - \int_0^{\tau_R \wedge T} e^{-rs} g(X_s , Z_s) \one_{ \{ x \geq c(z) \} } ds \Big].
\end{align*}
Now, upon employing a change of measure as in Section \ref{Section: Decoupling Change of Measure}, we obtain
\begin{align}\label{Expectation for Proof}
\mathbb{E}^\mathbb{Q}_{(x,z)} \big[ e^{-r (\tau_R \wedge T)} \vert \hat{v} (X_{\tau_R \wedge T}, Z_{\tau_R \wedge T} ) \vert \big] 
&= \mathbb{E}^\mathbb{Q}_{(x,z)} \big[  e^{-r (\tau_R \wedge T)} \vert \overline{v} (X_{\tau_R \wedge T}, e^{\frac{\gamma}{\sigma}(X_{\tau_R \wedge T} + Z_{\tau_R \wedge T} )} ) \vert \big] \nonumber \\
&\leq K_1 \mathbb{E}^\mathbb{Q}_{(x, \exp(\frac{\gamma}{\sigma} (x+z))} \big[  e^{-r (\tau_R \wedge T)} e^{X_{\tau_R \wedge T} }(1 + \Phi_{\tau_R \wedge T} ) \big] \nonumber \\
&= K_1 (1 + e^{\frac{\gamma}{\sigma}(x+ z)} ) \mathbb{E}_{(x,\pi)} \big[ e^{-r ( \tau_R \wedge T)} e^{X_{\tau_R \wedge T}} \big],
\end{align}
where $\pi = e^{\frac{\gamma}{\sigma}(x+ z)}/(1 + e^{\frac{\gamma}{\sigma}(x+ z)})$. Due to Assumption \ref{Assumption: Well-posedness}, it is easy to verify that taking limits in \eqref{Expectation for Proof} yields
\begin{align}\label{lim lim Exp}
\lim_{T \uparrow \infty} \lim_{R \uparrow \infty} \mathbb{E}^\mathbb{Q}_{(x,z)} \Big[ e^{-r (\tau_R \wedge T)}  \hat{v}(X_{\tau_R \wedge T }, Z_{\tau_R \wedge T } ) \Big] = 0.
\end{align}
Furthermore, 
\begin{align}\label{Expectation 2 for Proof}
\mathbb{E}^\mathbb{Q}_{(x,z)} \Big[  &\int_0^{\tau_R \wedge T} e^{-rs} g(X_s , Z_s) \one_{ \{ x \geq c(z) \} } ds \Big] 
\leq \mathbb{E}^\mathbb{Q}_{(x,z)} \Big[  \int_0^\infty e^{-rs} \vert g(X_s , Z_s) \vert  ds \Big] \nonumber \\
&\leq \mathbb{E}^\mathbb{Q}_{(x, \exp( \frac{\gamma}{\sigma}(x+z))} \Big[  \int_0^\infty e^{-rs} \Big( e^{X_s} ( r - \frac12 \sigma^2 - \mu_0 ) + rk + \Phi_s \big( e^{X_s} ( r - \frac12 \sigma^2 - \mu_1) + rk\big) \Big) ds \Big] \nonumber \\
&\leq \mathbb{E}^\mathbb{Q}_{(x, \exp( \frac{\gamma}{\sigma}(x+z)} \Big[  \int_0^\infty e^{-rs} \Big( e^{X_s} ( r - \frac12 \sigma^2 - \mu_0 ) + rk  \Big) ds \Big] \nonumber \\
&\hspace*{4cm} + (1 + e^{\frac{\gamma}{\sigma}(x+z)}) \mathbb{E}_{(x, \pi)} \Big[  \int_0^\infty e^{-rs} \Big(  e^{X_s} ( r - \frac12 \sigma^2 - \mu_1) + rk \Big) ds \Big] < \infty ,
\end{align}
where $\pi = e^{\frac{\gamma}{\sigma}(x+ z)}/(1 + e^{\frac{\gamma}{\sigma}(x+ z)})$ and the last inequality follows again from Assumption \ref{Assumption: Well-posedness}. Hence, given the finiteness of the expectation in \eqref{Expectation 2 for Proof}, we can apply dominated convergence theorem in order to interchange expectation and limits as $R \uparrow \infty$ and $T \uparrow \infty$. Combining this result with \eqref{lim lim Exp} gives \eqref{Probabilistic Repr of vhat}, which completes our proof. \qed
\end{proof}

\section*{Acknowledgements} The authors would like to two anonymous referees for their pertinent comments on an earlier version of this work. Moreover, the authors gratefully acknowledge financial support by the \textit{Deutsche Forschungsgemeinschaft}
(DFG, German Research Foundation) - SFB 1283/2 2021 - 317210226.

\end{document}